\documentclass[12pt,a4paper]{article}
\pdfoutput=1 

\usepackage[utf8]{inputenc}
\usepackage[english]{babel}

\usepackage{cite}
\usepackage{graphicx}
\usepackage{tabularx,subcaption,booktabs}
\usepackage{array}
\usepackage{longtable}
\usepackage{caption}
\usepackage{comment}
\usepackage[affil-it]{authblk}
\usepackage{multicol}
\setlength{\columnsep}{1cm}

\newcolumntype{Y}{>{\centering\arraybackslash}X}  

\usepackage{amsmath, amsthm, amsfonts, amssymb}
\usepackage{bbold}
\usepackage{mathrsfs}
\usepackage{mathbbol}
\usepackage{extpfeil}
\usepackage{bbm}
\usepackage{physics}
\usepackage{braket}
\usepackage{phaistos}
\usepackage{commath}
\def\ket#1{{|{#1}\rangle}}

\usepackage{float, adjustbox}
\usepackage{subcaption}
\usepackage{lmodern}
\usepackage{overpic}
\usepackage{footmisc}

\usepackage{tikz}
\usetikzlibrary{decorations.pathreplacing, shapes.symbols, fit, calc, positioning, matrix, backgrounds}
\usepackage{esvect}

\usepackage{calc} 
\tikzset{stdnode/.style={draw,  inner sep=0.7pt,  minimum size=8pt,  line width=0.6pt}}
\definecolor{hepblue}{RGB}{0, 114, 178}       
\definecolor{hepblue-contrary}{RGB}{255, 141, 77}

\definecolor{hepteal}{RGB}{0, 158, 115}
\definecolor{hepteal-contrary}{RGB}{255, 97, 140}

\definecolor{vermillion}{RGB}{213, 94, 0}
\definecolor{vermillion-contrary}{RGB}{42, 161, 255}

\tikzset{
  Cnode/.style={stdnode,  circle,  fill=hepblue,   text=white}, 
  Inode/.style={stdnode,  circle,  fill=hepteal,  text=white}, 
  Jnode/.style={stdnode,  circle,  fill=vermillion,  text=white}, 
  Knode/.style={stdnode,  circle,  fill=white,    text=black}, 
  CnodeBW/.style={stdnode,  circle,  fill=white}, 
  InodeBW/.style={stdnode,  regular polygon,  regular polygon sides=4,  fill=white}, 
  JnodeBW/.style={stdnode,  diamond,  fill=white}, 
  KnodeBW/.style={stdnode,  circle,  fill=white}
}

\tikzset{
  Anode/.style={stdnode,  circle,  fill=hepblue-contrary,   text=white}, 
  Bnode/.style={stdnode,  circle,  fill=hepteal-contrary,  text=white}, 
  CnodeABC/.style={stdnode,  circle,  fill=vermillion-contrary,  text=white}, 
  AnodeBW/.style={stdnode,  circle,  fill=white}, 
  BnodeBW/.style={stdnode,  regular polygon,  regular polygon sides=4,  fill=white}, 
  CnodeABCBW/.style={stdnode,  diamond,  fill=white}, 
}

\usetikzlibrary{shapes.geometric,shapes.symbols}

  \tikzset{
    BorderStar/.style={
      draw,
      thick,
      star,
      star points=5,
      minimum size=3pt,  
      inner sep=0.1pt     
    },
    BorderTriangle/.style={
      draw,
      thick,
      regular polygon,
      regular polygon sides=3,
      minimum size=1pt, 
      inner sep=0pt
    },
    BorderSquare/.style={
      draw,
      thick,
      regular polygon,
      regular polygon sides=4,
      minimum size=3pt,
      inner sep=0.1pt
    },
  }

\usepackage[toc,page]{appendix}

\usepackage{enumitem}
\setlistdepth{9}
\renewlist{enumerate}{enumerate}{9}
\setlist[enumerate,1]{label=\arabic*.}
\setlist[enumerate,2]{label=\alph*.}
\setlist[enumerate,3]{label=\roman*.}
\setlist[enumerate,4]{label=\Alph*.}
\setlist[enumerate,5]{label=\Roman*.}
\setlist[enumerate,6]{label=$\bullet$}
\setlist[enumerate,7]{label=$\square$}
\setlist[enumerate,8]{label=$\blacksquare$}
\setlist[enumerate,9]{label=$\circ$ }

\usepackage{xcolor}
\usepackage[
  colorlinks=true,
  urlcolor=blue,
  anchorcolor=blue,
  citecolor=blue,
  filecolor=blue,
  linkcolor=blue,
  menucolor=blue,
  pagecolor=blue,
  linktocpage=true,
  pdfproducer=medialab,
  pdfa=true
]{hyperref}

\definecolor{color1}{HTML}{A60303}
\definecolor{color2}{HTML}{0A4D92}

\usepackage{jheppub}
\usepackage{environ}
\NewEnviron{myequation}{%
  \begin{equation*}
    \scalebox{1.7}{$\BODY$}
  \end{equation*}
}


\newtheorem{proposition}{Proposition}
\newtheorem{conjecture}{Conjecture}

\newtheorem{observation}{Observation}


\newcommand{\col}[1]{\text{Col}\left(#1 \right)}

\newcommand{\spanv}[1]{\langle #1 \rangle}

\newcommand{\stab}[1]{\text{Stab}\left(#1\right)}

\newcommand{\ra}[1]{\text{rank}\left(#1\right)}
\newcommand{\rAz}[1]{\text{rank}_{\mathbb{Z}_2}\left(#1\right)}
\newcommand{\dimension}[1]{\text{dim}\left\{#1\right\}}
\newcommand{\Hil}{\mathcal{H}}

\newcommand{\bleed}{0pt}

\newcommand{\shadecell}[2]{%
  \fill[gray!60]
    ($ (m-#1-#2.north west)+(-\bleed,\bleed)$)
    rectangle
    ($ (m-#1-#2.south east)+(\bleed,-\bleed)$);}

\newcommand{\outlineblock}[4]{%
  \draw[line width=1.5pt]
    (m-#1-#3.north west) rectangle (m-#2-#4.south east);}
    
\newcommand{\colborder}[1]{%
  \pgfmathtruncatemacro{\lastrow}{\pgfmatrixcurrentrow}%
  \draw[
    line cap=butt,
    shorten <=0.5\pgflinewidth, shorten >=0.5\pgflinewidth,
    preaction={draw=white, line width=1.5pt},
    line width= 0.7pt                              
  ]
    (m-1-#1.north west) rectangle (m-\lastrow-#1.south east);%
}

\newcommand{\rowborder}[1]{%
  \pgfmathtruncatemacro{\lastcol}{\pgfmatrixcurrentcolumn}%
  \draw[
    line cap=butt,
    shorten <=0.5\pgflinewidth, shorten >=0.5\pgflinewidth,
    preaction={draw=white, line width=1.5pt},
    line width=0.7pt
  ]
    (m-#1-1.north west) rectangle (m-#1-\lastcol.south east);%
}

\newcommand{\rowborderSecond}[1]{%
  \draw[line width=.7pt]
    (m-#1-1.north west) rectangle (m-#1-5.south east);}

\newcommand{\colborderSecond}[1]{%
  \draw[line width=.7pt]
    (m-1-#1.north west) rectangle (m-4-#1.south east);}

\newcommand{\outlineblockSecond}[4]{%
  \draw[line width=1.5pt]
    (m-#1-#3.north west) rectangle (m-#2-#4.south east);}

\usepackage[most]{tcolorbox}

\newlength{\graphcardheight}
\setlength{\graphcardheight}{4.0cm}

\newtcbox{\graphpairbox}{%
  enhanced,
  colback=white,
  colframe=black!60,
  arc=3pt,
  boxrule=0.4pt,
  left=0pt,right=0pt,
  top=6pt,bottom=0pt,
  height=\graphcardheight,
  valign=center,
}

\newcommand{\panelcaption}[2]{\footnotesize (#1) #2.}
    

\title{Monogamy of Mutual Information in Graph States}

\author[a]{Jesus Fuentes,}
\author[a]{Cynthia Keeler,}
\author[b]{William Munizzi,}
\author[c,d]{and Jason Pollack}

\affiliation[a]{Department of Physics, Arizona State University,\\ Tempe, AZ 85281, USA}
\affiliation[b]{Division of Physical Sciences, College of Letters and Science, University of California, Los Angeles}
\affiliation[c]{Department of Electrical Engineering and Computer Science, Syracuse University, NY 13210, USA}
\affiliation[d]{Institute for Quantum \& Information Sciences, Syracuse University, NY 13210, USA}

\emailAdd{jesusfuentesun@gmail.com}
\emailAdd{keelerc@asu.edu}
\emailAdd{wmunizzi17@ucla.edu}
\emailAdd{jasonpollack@gmail.com}

\abstract{
The monogamy of mutual information (MMI) is a quantum entropy inequality that enforces the non-positivity of tripartite information. We investigate the failure of MMI in graph states as a forbidden-subgraph phenomenon, conjecturing that every MMI-violating graph state is local-Clifford equivalent to one whose graph contains a four-star subgraph. We construct a family of star-like graphs whose states fail a specific class of MMI instances, and extend this analysis to general star topologies. Deriving adjacency matrix constraints that fix the MMI evaluation for these instances and interpreting them physically, we prove the forbidden-subgraph conjecture for this family of graphs. Finally, through an exhaustive search over graph representatives for all $8$-qubit stabilizer entropy vectors, we establish that MMI failure is not reducible to the cases within our scope.
}


\setcounter{section}{0}

\begin{document}
\maketitle


\section{Introduction}

An important insight from quantum information theory is that entanglement is fully characterized by not just a single numerical value, but also by \textit{how} it is distributed among the constituent parties of a quantum state. In multipartite systems, this pattern of entanglement precisely constrains how information can be stored, transmitted, and processed using the state~\cite{WalterGrossEisert2016}. Entanglement monotones~\cite{VidalWerner2002,Horodecki2009}, such as entanglement entropy, quantify and bound the overall strength of bipartite correlations; however, the entanglement structure of a state is further restricted by sets of fundamental inequalities which dictate how quantum information can be shared across the different subsystems~\cite{Coffman2000,Linden2002}. Among these, the best known are the subadditivity~\cite{LiebRuskai1973} and strong subadditivity~\cite{Lieb1973a,Lieb1973b} inequalities, which are satisfied by all quantum states. Beyond these universal quantum inequalities, additional constraints exist~\cite{Linden2013,ZhangYeung1997,Lashkari:2014kda,ZhangYeung1998,Matus2007,DoughertyFreilingZeger2007,LindenWinter2005,Hern_ndez_Cuenca_2024} which are only satisfied by particular classes of states, reflecting the specialized features of those states.

One particularly intriguing constraint, satisfied only by a limited subset of quantum states, is the monogamy of mutual information (MMI) 
inequality~\cite{Hayden:2013}
\begin{equation}
    \label{eq:MMI-Mutual-Information-Form}
    I(A:BC) \geq I(A:B) + I(A:C),
\end{equation}
which imposes strong restrictions on the entanglement patterns that states can exhibit. A prominent example of states that obey MMI is the class of holographic states, the set of quantum states which are dual to classical holographic systems via the AdS/CFT correspondence~\cite{Maldacena:1997re}. In a system of entangled qubits, MMI dictates how information can be distributed among the qubits, while maintaining maximal bipartite entanglement between specific pairs. Consequently, understanding the conditions under which MMI is violated has emerged as an active area of contemporary research~\cite{Bao_2015, Bao:2020mqq, Keeler:2022ajf}, with implications ranging from insights into spacetime reconstruction~\cite{Hayden_2013} in quantum gravity to enhanced protocols for distributed computation~\cite{Hein_2004,HeinDurEisertBriegel2006} in quantum networks. 

Generic quantum states, and even states with relatively ``simple'' entanglement patterns, fail MMI. In fact, MMI failure occurs at small system sizes ($n=4$) even among stabilizer states~\cite{gottesman1998heisenbergrepresentationquantumcomputers}, the class of quantum states well known to be efficiently simulable~\cite{Aaronson_2004} on a classical computer. A canonical example is the $4$-qubit $GHZ$ state, a stabilizer state whose entanglement entropy is the same regardless of the Hilbert space partition.

The full set of entanglement structures achievable by stabilizer states can be realized on a strict subset~\cite{VanDenNestDehaeneDeMoor2004}, the set of graph states. In the graph state formalism~\cite{Hein_2004,BriegelRaussendorf2001,RaussendorfBriegel2001,HeinDurEisertBriegel2006,fattal2004entanglementstabilizerformalism,Lovitz2022}, each state is represented as a simple undirected graph, where vertices correspond to qubits and edges encode the action of $CZ$ gates. This representation provides a natural framework for analysis of MMI violations in stabilizer states, using the graph-theoretic encoding of entanglement in graph states.

In this paper, we translate the evaluation of MMI for stabilizer states into an analysis of connectivity in graph states. We identify the algebraic conditions on graph adjacency matrices that determine the entanglement structures for the corresponding states, and consequently determine whether each entanglement structure satisfies or fails MMI. We discover a family of graphs that are guaranteed to fail MMI, and characterize the physical features of the states they represent. Furthermore, we specify conditions that give rise to MMI violation, highlighting minimal structural features responsible for such failures and describing how multipartite correlations depend on the graph topology. By understanding when and why violations of MMI occur, we gain deeper insight into the entanglement structures for different state classes, as well as into the physical properties which determine their potential utility in quantum information science. 

The remainder of this paper is organized as follows. In Section~\ref{Review}, we review the stabilizer formalism and define the mapping between stabilizer entanglement entropies and graph state connectivity. In Section~\ref{MMIFailueTableaux} we demonstrate how MMI can be reformulated as a comparison between graph adjacency matrices, and define MMI violation as a forbidden-subgraph problem. Section~\ref{ConditionsForViolation} specifies precise conditions under which MMI is violated in stabilizer states, using our graph-theoretic formalism. In Section~\ref{Discussion} we discuss the impact of this work, and propose future follow-ups to this manuscript. 

A comprehensive set of appendices accompanies this work, providing an extensive catalog of stabilizer state entanglement structures. These appendices include graph representations for all stabilizer state entropy vectors up through $n=8$ qubits, detailed MMI evaluations for each entropy vector, and physical interpretations corresponding to each entanglement structure. We further provide the count and percentages of stabilizer states which satisfy, saturate, and fail MMI at each qubit number, offering a self-contained reference for future studies of entanglement in stabilizer and graph states. The appendices also include full derivations, proofs, and added details for the calculations in this work. In particular, Tables~\ref{tab:MMI-Satisfying-EightQ-Graphs}–\ref{tab:MMI-Failing-EightQ-Graphs-Missing-Nontrivial-Int} provide graph representatives for every unique entropy vector at $n=8$, along with hyperlinks to a website that gives direct access to the entropy vector of the corresponding state.

\section{Review}\label{Review}

\subsection{Entropy Vectors}

Given a factorizable Hilbert space $\Hil$, and $n$-party pure state $\ket{\psi} \in \Hil$, let $I$ denote an $\ell$-party subsystem of $\ket{\psi}$. The entanglement entropy $S_I$, between $I$ and its $(n-\ell)$-party complement $\bar{I}$ is computed as
\begin{equation}\label{SubsystemEntropyDefinition}
   S_{I} = -\Tr \left(\rho_{I} \log \left(\rho_{I}\right) \right),
\end{equation}
with $\rho_{I}$ the reduced density matrix of subsystem $I$. Since we will consider qubit systems in this paper, it is natural for information to be measured in bits, and entropies are therefore computed using $\log_2$.

For a fixed division of the Hilbert space into $n$ parties, there are $2^{n} - 1$ computable entanglement entropies of a state $\ket{\psi}$. The ordered list of these entropies forms the entropy vector \cite{Bao_2015} for $\ket{\psi}$, and contains a complete description of the state's bipartite entanglement structure. For example, a $3$-party state $\ket{\psi}$ has an entropy vector
\begin{equation}
    \Vec{S}\left( \ket{\psi} \right) = \left(S_A,\ S_B,\ S_C,\ S_{AB},\ S_{AC},\ S_{BC},\ S_{ABC} \right).
\end{equation}
For a pure state $\ket{\psi}$, the fact that $S_{ABC}=0$ and $S_I = S_{\bar{I}}$ gives a symmetry to the entropy vector $\Vec{S}\left( \ket{\psi} \right)$. This symmetry allows $\Vec{S}\left( \ket{\psi} \right)$ to be specified using only $2^{n-1} - 1$ components, for example
\begin{equation}
    \Vec{S}\left( \ket{\psi} \right) = \left(S_A,\ S_B,\ S_C\right).
\end{equation}
Since we are considering pure states in this paper, we often use this reduced entropy vector notation for compactness.

\subsection{Entropy Inequalities and Monogamy of Mutual Information}

A homogeneous entropy inequality is a linear constraint of the form
\begin{equation}
    \sum_{\varnothing \neq A \in [n]} b_A S_A \geq 0 \quad b_A \in \mathbb{Z},
\end{equation}
that restricts the entanglement structure for a class of quantum states. The closure of the set of all entropy vectors of the class of states obeying some inequalities defines a convex, polyhedral subspace of the entropy vector space, known as the entropy cone~\cite{Linden2013,Bao_2015,Munizzi:2023ihc}. Accordingly, all entropy vectors realized by states in a particular class must reside within the entropy cone for that class. One such inequality, obeyed by all quantum states dual to classical holographic systems~\cite{Hayden:2013}, is the monogamy of mutual information (MMI). For an $n$-party system with pairwise-disjoint subsystems $I,\ J,$ and $K$, MMI requires
\begin{equation}
    \label{eq:MMI}
    S_{IJ} + S_{IK} + S_{JK} \geq S_I + S_J + S_K + S_{IJK}.
\end{equation}
We will refer to a single MMI instance, specified by pairwise disjoint tripartite subsystems $I,\ J,\ K$, as $MMI_{IJK}$. When discussing MMI in general, which constitutes all inequalities of the form of Eq.\ \eqref{eq:MMI}, we will omit this notation and simply refer to ``MMI.'' In a purely quantum-information context, MMI defines the non-positivity of tripartite information and constrains the correlations that can be shared among three parties while retaining maximal entanglement between any two. The MMI inequalities in Eq.\ \eqref{eq:MMI}, ranging over all possible choices of disjoint subsystems $I,\, J,\,K$, constitute a collection of inequalities obeyed by the holographic entropy cone~\cite{Bao_2015}.

\subsection{Stabilizer States and the Tableau Formalism}

The Pauli group on $n$ qubits, denoted $\mathcal{P}_n$, is the set of all $n$-fold tensor products of Pauli operators (including global phases), explicitly
\begin{equation}
    \label{eq:n-qubit Pauli Group}
    \mathcal{P}_n = \big\{  i^{j} \sigma_{1}\otimes\sigma_{2}\otimes \cdots \otimes \sigma_{n} \mid j \in \{ 0,1,2,3 \}, \; \sigma_{k} \in \{ \mathbb{1}, \sigma_x, \sigma_y, \sigma_z \} \big\},
\end{equation}
where each Pauli operator admits a $2 \times 2$ matrix  representation as
\begin{equation}
\label{eq:Pauli Matrices}
   \mathbb{1} = \begin{bmatrix}
        1 & 0 \\
        0 & 1
    \end{bmatrix}, \quad
   \sigma_x = \begin{bmatrix}
        0 & 1 \\
        1 & 0
    \end{bmatrix}, \quad 
    \sigma_y = \begin{bmatrix}
        0 & -i \\
        i & 0
    \end{bmatrix}, \quad
    \sigma_z = \begin{bmatrix}
        1 & 0 \\
        0 & -1
    \end{bmatrix}.
\end{equation}

Given $\mathcal{P}_n$, the $n$-qubit stabilizer states~\cite{Aaronson_2004} are then defined as the set of states invariant under a $2^n$-element subgroup of $\mathcal{P}_n$. Alternatively, the $n$-qubit stabilizer states can be defined as the orbit~\cite{Keeler:2023xcx} of $\ket{0}^{\otimes n}$ under the $n$-qubit Clifford group $\mathcal{C}_n$. The Clifford group is the normalizer of $\mathcal{P}_n$ in $U(2^{n})$,
\begin{equation}
    \mathcal{C}_n \equiv \{ U \in U(2^{n}) \mid U\mathcal{P}_nU^{\dagger}\mathcal{P}_n\}.
\end{equation}
The group $\mathcal{C}_n$ is generated by the Hadamard $H$, phase $P$, and Controlled-NOT%
\footnote{We use a convention for $CNOT_{i,j}$ where $i$ denotes the control qubit and $j$ the target qubit.} %
$CNOT$ quantum gates, each representable by the following matrices
\begin{equation}
    \label{eq:MatrixRepresentationsCliffordGen}
    H = \frac{1}{\sqrt{2}} \begin{bmatrix}
    1 & 1 \\
    1 & -1
    \end{bmatrix},
    \quad
    P = \begin{bmatrix}
    1 & 0 \\
    0 & i
    \end{bmatrix},
    \quad 
    CNOT_{1,2} = \begin{bmatrix}
    1 & 0 & 0 & 0 \\
    0 & 1 & 0 & 0 \\
    0 & 0 & 0 & 1 \\
    0 & 0 & 1 & 0
    \end{bmatrix}.
\end{equation}

Representing a generic qubit state requires an exponential number of classical bits to account for the $2^n$ Hilbert space dimension. Famously, however, each $n$-qubit stabilizer state can be described using $\mathcal{O}\left(n^2\right)$ classical bits, and Clifford circuits acting on stabilizer states are therefore efficiently classically simulable~\cite{gottesman1998heisenbergrepresentationquantumcomputers,Aaronson_2004}. This substantial reduction in the overhead is exhibited by the tableau formalism, an efficient representation for stabilizer states and Clifford transformations. Let $\mathfrak{G} = \{ g_1, \cdots, g_n \}$ be the generating set for the stabilizer subgroup $S < \mathcal{P}_n$ of a state $\ket{\psi}$. We represent $\mathfrak{G}$, and thereby the state $\ket{\psi}$, as a pair of binary square matrices $X,\ Z\in\mathbb{Z}_2^{n\times n}$ with entries defined
\begin{equation}
X_{ij} =
\begin{cases}
1, & (g_{i})_{j}\in\{\sigma_x,\sigma_y\}\\
0, & \text{otherwise}
\end{cases}
\quad
Z_{ij} =
\begin{cases}
1, & (g_{i})_{j}\in\{\sigma_y,\sigma_z\}\\
0, & \text{otherwise}
\end{cases}
\end{equation}
where $(g_{i})_{j}$ indicates the $j^{th}$ operator in the Pauli string $g_i$ and $\mathbb{Z}_2=\{0,1\}$. Placing the matrices $X$ and $Z$ side-by-side gives the $n \times 2n$ tableau representation for $\ket{\psi}$:
\begin{equation}
    T_{\psi} = \begin{bmatrix}
        X \mid Z
    \end{bmatrix}.
\end{equation}
To uniquely specify a state, we also need to assign a sign $\pm1$ to each row in $T_{\psi}$. However, since this specification does not impact the entanglement entropy of the state, we will omit it.

The evolution of $\ket{\psi}$ under Clifford operators can likewise be efficiently simulated through simple transformations on $T_{\psi}$. For each $n$-qubit Clifford operator, the associated action on $T_{\psi}$ is described by a map
\begin{equation}
    T_\psi \longmapsto\ T^{\prime}_{\psi} = \begin{bmatrix}
        X^{\prime} \mid Z^{\prime}
    \end{bmatrix},
\end{equation}
implemented through bitwise XOR ($\oplus$) operations on the affected columns. For the Hadamard, phase, and $CNOT$ gates the respective update rules~\cite{gottesman1998heisenbergrepresentationquantumcomputers, Aaronson_2004} on each row $i\in n$ in $T_\psi$ are 
\begin{alignat}{2}
H_a: \quad 
  &X^{\prime}_{ia} = Z_{ia}, 
  \quad& Z^{\prime}_{ia} &= X_{ia};
  \label{eq:tabH}\\[4pt]
S_a: \quad & X^{\prime}_{ia} = X_{ia}, 
  & Z^{\prime}_{ia} &= Z_{ia}\!\oplus X_{ia};
  \label{eq:tabS}\\[4pt]
CNOT_{a,b}: \quad 
  &X^{\prime}_{ib} = X_{ib}\!\oplus X_{ia}, 
  \quad& Z^{\prime}_{ia} &= Z_{ia}\!\oplus Z_{ib}.
  \label{eq:tabCNOT}
\end{alignat}
All other tableau entries (except possibly those specifying the signs of rows) remain unchanged. Eqs.\ \eqref{eq:tabH}--\eqref{eq:tabCNOT} are applied sequentially each time the corresponding gate acts on $\ket{\psi}$.

The entanglement entropies of a state $\ket{\psi}$ can likewise be computed from the tableau $T_{\psi}$.
For an $n$-party state $\ket{\psi}$, let $A \subseteq \{n\}$ denote the $A$ subsystem and $P_A$ the projection operator that deletes every column of $X$ and $Z$ in $T_G$ indexed by $\overline A$. The entanglement entropy of $A$ with $\overline{A}$ is then calculated~\cite{fattal2004entanglementstabilizerformalism} as 
\begin{equation}\label{eq:EETableau}
    S_A(\ket{\psi}) = R_A -|A|,
\end{equation}
where $R_A \equiv \rAz{P_A(T_G)}$. Eq.\ \eqref{eq:EETableau} therefore offers an efficient technique for evaluating stabilizer entanglement entropies using the tableau formalism. 

\subsection{Graph States}

A graph state is a quantum state representable by a simple, undirected graph, where vertices indicate qubits and edges indicate $CZ$ action. Since the $CZ$ gate acting on the state $\ket{+}$ generates maximal entanglement, the edges of a graph state additionally encode the entanglement structure of the state. Given a graph $G = (V, E)$ with $n$ vertices, the corresponding graph state $\ket{G}$ is obtained by acting on the all-plus state $\ket{+}^{\otimes n}$ with a controlled-$Z$ gate for each edge in $G$, explicitly
\begin{equation}\label{GraphStateDefinition}
    \ket{G} = \prod_{(i,j) \in E}CZ_{i,j} \ket{+}^{\otimes n}.
\end{equation}
In terms of Clifford generators, $CZ_{i,j} = H_jCNOT_{i,j}H_j,$. The $CZ$ gate is symmetric, making the distinction between control and target qubits purely conventional.

A local Clifford ($LC$) operation is a Clifford operator that acts as a non-identity only on a single qubit. On a graph state $\ket{G}$, $LC$ operations can be realized~\cite{VanDenNestDehaeneDeMoor2004} as a local complementation on the associated graph $G$. Accordingly, we use $LC$ to refer interchangeably to local Clifford equivalence and local complementation equivalence in graph states. For a graph $G$ and vertex $a$, the local complementation $\tau_a$ acts on the open 
neighborhood of $a$ (i.e.\ the neighborhood of $a$ with $a$ itself excluded) in $G$, denoted $N(a)$. The transformation $\tau_a$ replaces the subgraph of $G$, induced\footnote{Recall that the subgraph of a graph $G=(V,E)$ induced by the vertex set $V^\prime\subseteq V$ is the graph with vertices $V$ and edges connecting two vertices in $V^\prime$ if and only if those vertices are connected by an edge in $G$.} by $N(a)$, with its graph complement. The graphical effect of local complementation is illustrated by Figure \ref{fig:local-complementation}. 
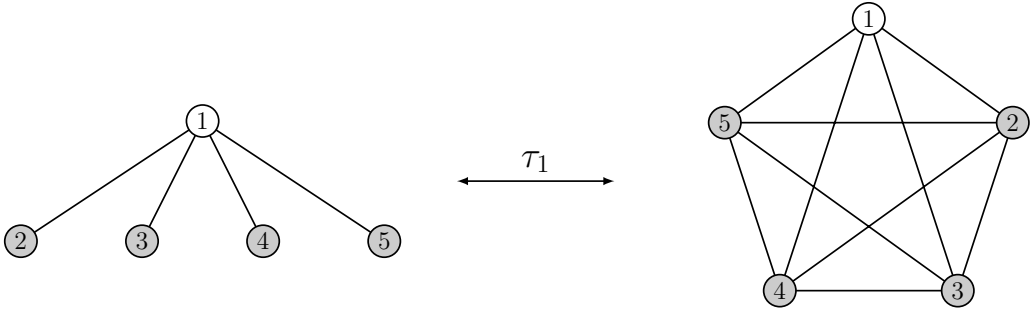
\begin{figure}[H]
  \centering
  \resizebox{0.9\linewidth }{!}{%
    \begin{tikzpicture}[>=stealth, thick, node/.style={circle, draw, fill=white, inner sep=2pt}]
      \path (0,0) coordinate (L1) (-3,-2) coordinate (L2) (-1,-2) coordinate (L3) (1,-2) coordinate (L4) (3,-2) coordinate (L5);
      \draw (L1)--(L2) (L1)--(L3) (L1)--(L4) (L1)--(L5);
      \node[node] at (L1) {1};
      \node[node, fill=black!20] at (L2) {2};
      \node[node, fill=black!20] at (L3) {3};
      \node[node, fill=black!20] at (L4) {4};
      \node[node, fill=black!20] at (L5) {5};
      \draw[<->, line width=0.8pt, >={latex[scale=2.0]}, draw=black]
  (4.2,-1) -- (6.8,-1) node[midway, above, font=\Large, text=black] {$\tau_1$};
      \begin{scope}[shift={(11,-0.8)}]
        \def\R{2.5}
        \path (90:\R) coordinate (R1) (18:\R) coordinate (R2) (-54:\R) coordinate (R3) (-126:\R) coordinate (R4) (162:\R) coordinate (R5);
        \draw (R1)--(R2)--(R3)--(R4)--(R5)--cycle;
        \draw (R1)--(R3)--(R5)--(R2)--(R4)--cycle;
        \node[node] at (R1) {1};
        \node[node, fill=black!20] at (R2) {2};
        \node[node, fill=black!20] at (R3) {3};
        \node[node, fill=black!20] at (R4) {4};
        \node[node, fill=black!20] at (R5) {5};
      \end{scope}
    \end{tikzpicture}
  }
  \caption{Left: star graph on five vertices $K_5$. Right: the result of locally complementing $K_5$ on vertex $1$, a complete graph on five vertices. The  neighborhood of vertex $1$, $N(1)$, is shaded in both cases. Notice that locally complementing vertex $1$ of the complete graph recovers $K_5$.}
  \label{fig:local-complementation}
\end{figure}
The effect of $\tau_a$ on the adjacency matrix for $G$ is
\begin{equation}
\label{eq:Local-Complementation}
\tau_a \left(\Gamma_G \right)=\Gamma_G+\Gamma_{N(a)} \mod{2},
\end{equation}
where $\Gamma_{N(a)}$ is the adjacency block of $\Gamma_G$ corresponding to $N(a)$. 

Each stabilizer state can be mapped to a graph state under $LC$ operations~\cite{VanDenNestDehaeneDeMoor2004}. Moreover, since $LC$ operations are single-qubit unitaries, they preserve all binary entanglement entropies and thus the entropy vector. For this reason stabilizer and graph states share the same set of entropy vectors, despite graph states comprising a proper subset of the stabilizer states, thereby motivating our exploration of the stabilizer entropy cone using graph states.

Entanglement entropies in graph states can be computed directly from the adjacency matrix~\cite{Hein_2004,burchardt2025foliagepartitioneasytocomputelcinvariant}. For a graph $G$ with vertex set $V = \{n\}$ and adjacency matrix $\Gamma_G$, let $\{A, \overline{A}\}$ be any bipartition of $V$. The entanglement entropy of $A$, in $\ket{G}$, is calculated from $\Gamma_G$ as
\begin{equation}\label{eq:EEadj}
         S_A(\ket{G}) = \rAz{\Gamma_{G}^{A \, \overline{A}}}.
\end{equation}
The matrix $\Gamma_{G}^{A , \overline{A}}$ in Eq.\ \eqref{eq:EEadj} indicates the local information submatrix of $A$, extracted from $\Gamma_G$ by only considering edges connecting a vertex in $A$ with a vertex in $\overline{A}$. Explicitly, we define $\Gamma_{G}^{A, \overline{A}}$ as
\begin{equation}\label{eq:connectionsMatrix}
       \Gamma_{G}^{A \, \overline{A}} = \left[ (\Gamma_G)_{ij} \right]_{i \in A, \, j \in \overline{A}}.
\end{equation}
While the local information submatrix $\Gamma_{G}^{A , \overline{A}}$ contains the information needed to compute the entanglement entropy for any subsystem of $A' \subseteq A$, $S_{A'}$ is independent of the matrix square blocks corresponding to the self adjacencies of $A'$ and $\overline{A'}$. 

Going forward we omit the subscript on $\Gamma_{G}^{A , \overline{A}}$, as context will unambiguously identify the associated graph. We likewise omit the complement subsystem in the superscript for conciseness. 

\section{MMI Failure in Tableaux and Graphs}\label{MMIFailueTableaux}

In this section, we show that the monogamy of mutual information (MMI) inequalities can be reformulated as a comparison between the ranks of tableau submatrices. We further demonstrate how the entangling action of a $CNOT$ gate alters the row and column spaces of these submatrices, thereby changing their ranks and leading to MMI violation. Extending this construction, we move to a graph-theoretic formulation in which MMI violation exists as a forbidden-subgraph problem. Using graph state adjacency submatrices, which encode the entanglement connectivity between distinct subsystems, we qualify the conditions required for MMI violation.

\subsection{Tableaux and MMI}\label{TableauxAndMMI}

The monogamy of mutual information (MMI) inequalities, given in Eq.\ \eqref{eq:MMI}, are typically expressed in terms of subsystem entanglement entropies $S$ \eqref{SubsystemEntropyDefinition}. However, when a state $\ket{\psi}$ admits a tableau representation, MMI can be reformulated directly in terms of the tableau data. For a stabilizer state $\ket{\psi}$ with associated tableau $T_\psi$, we can use Eq.\ \eqref{eq:EETableau} to rewrite MMI using the ranks of the corresponding tableau projections as
\begin{equation}\label{eq:MMI-Tableau-Ranks}
    R_{IJ} + R_{IK} + R_{JK} \geq R_I + R_J + R_K + R_{IJK}.
\end{equation}
The expression of MMI in Eq.\ \eqref{eq:MMI-Tableau-Ranks} offers a useful algebraic format to better analyze stabilizer entanglement structure. In particular, rewriting MMI in terms of submatrix ranks allows us to analyze how a sequence of quantum gates, via transformations on the tableau, induces a violation of MMI by directly altering the linear dependencies of tableau submatrices.

As a preliminary example, we use Eq.\ \eqref{eq:MMI-Tableau-Ranks} to analyze how a single Clifford gate induces the MMI failure exhibited by the four-qubit $GHZ$, the lowest qubit number stabilizer state to violate an MMI \eqref{eq:MMI} instance. The four-qubit $GHZ$ can be prepared using Clifford generators as
\begin{equation}
    \ket{GHZ}_4 = CNOT_{3,4}CNOT_{3,2}CNOT_{3,1}H_3 \ket{0}^{\otimes 4} = \frac{1}{\sqrt{2}}\left(\ket{0000}+\ket{1111}\right),
\end{equation}
and has a stabilizer group
\begin{equation}
    \stab{\ket{GHZ}_4} = \langle \text{XXXX}, \text{ZZII}, \text{IZZI}, \text{IIZZ} \rangle.
\end{equation}
To perform our analysis, we consider the state 
\begin{equation}
  \ket{\phi} = CNOT_{3,4} \ket{GHZ}_4 = \frac{1}{\sqrt{2}}\left(\ket{0000}+\ket{1110}\right),
\end{equation}
with stabilizer group 
\begin{equation}
    \stab{\ket{\phi}} = \langle \text{XXXI}, \text{ZZII}, \text{IZZI}, \text{IIIZ} \rangle.
\end{equation} 
Acting with a single $CNOT_{3,4}$ gate transforms $\ket{\phi}$ back into $\ket{GHZ}_4$. The tableaux $T_{\ket{GHZ}_4}$ and $T_{\ket{\phi}}$ are presented in Table~\ref{tab:MMI_eval_summary}. 

The state $\ket{\phi}$ has the entropy vector
\begin{equation}
    \vec{S}\left(\ket{\phi}\right) = \left(1,1,1,1,1,1,0 \right),
\end{equation}
which saturates every $4$-qubit instance of MMI \eqref{eq:MMI}. In contrast, $\ket{GHZ}_4$ has the entropy vector
\begin{equation}
    \vec{S}\left(\ket{GHZ}_4\right) = \left(1,1,1,1,1,1,1 \right),
\end{equation}
which violates multiple instances of MMI \eqref{eq:MMI}, as shown in Table \ref{tab:MMI_eval_summary}.
\begin{table}[H]
    \centering
    \small
    \begin{tabular}{l*{5}{c}}
        \toprule
        \textbf{State} & \textbf{(1,2,3)} & \textbf{(1,2,4)} & \textbf{(1,3,4)} & \textbf{(2,3,4)} & (1,2,34) \\
        \midrule
        $\ket{\phi}$ (Before)  & ST & ST & ST & ST & ST \\
        $\ket{GHZ}_4$ (After) & \textbf{F} & \textbf{F} & \textbf{F} & \textbf{F} & ST \\
        \midrule[\heavyrulewidth] 
        \textbf{State} & (1,3,24) & (1,4,23) & (2,3,14) & (2,4,13) & (3,4,12) \\
    \midrule
        $\ket{\phi}$ (Before)  & ST & ST & ST & ST & ST \\
        $\ket{GHZ}_4$ (After) & ST & ST & ST & ST & ST \\
        \bottomrule
    \end{tabular}
        \caption{Starting from state $\ket{\phi}$, which saturates all $4$-qubit MMI instances, a single $CNOT_{3,4}$ transforms $\ket{\phi}$ into $\ket{GHZ}_4$, which fails MMI. Columns label the MMI instance, and show whether each state satisfies (S), saturates (ST), or fails (F) that instance.}
    \label{tab:MMI_eval_summary}
\end{table}

We inspect the rank vector $\Vec{R}$, the ordered set of tableau submatrix ranks, to understand how the application of $CNOT_{3,4}$ induces an MMI violation. Specifically, we focus our analysis on the $3$-party MMI instance that involves subsystems $\{1,2,3\}$, $MMI_{\{1,2,3\}}$. Table~\ref{tab:rank_comparison} shows each component of $\Vec{R}_{\ket{\phi}}$ and $\Vec{R}_{\ket{GHZ}_4}$ that appears in $MMI_{\{1,2,3\}}$, and illustrates how the action of $CNOT_{3,4}$ selectively increases component $R_{123}$ from $3$ to $4$, leading to violation.
\begin{table}[H]
    \centering
    \small
    \begin{tabular}{lccccccc}
        \toprule
        \textbf{State} & $R_1$ & $R_2$ & $R_3$ & $R_{12}$ & $R_{13}$ & $R_{23}$ & $R_{123}$ \\
        \midrule
        $\ket{\phi}$ (Before)  & 2 & 2 & 2 & 3 & 3 & 3 & 3 \\
        $\ket{GHZ}_4$ (After) & 2 & 2 & 2 & 3 & 3 & 3 & \textbf{4} \\
        \bottomrule
    \end{tabular}
        \caption{Rank vectors $\Vec{R}$ for $\ket{\phi}$ and $\ket{GHZ}_4$ (before and after $CNOT_{3,4}$ is applied). The action of $CNOT_{3,4}$ increments the $R_{123}$ component of $\Vec{R}_{\ket{\phi}}$ by $1$, causing a violation of $MMI_{\{1,2,3\}}$, the MMI on the $3$-party instance involving subsystems $\{1,\ 2,\ 3\}$.}
    \label{tab:rank_comparison}
\end{table}

The rank increase shown in Table~\ref{tab:rank_comparison} is generated by a change to the column space of the $\ket{\phi}$ tableau. Table~\ref{tab:projected_tableaux} gives the tableaux for $\ket{\phi}$ and $\ket{GHZ}_4$, each projected onto the subsystem $\{123\}$ and its complement $\{4\}$. The objects $T_{123}$ and $T_4$ represent the tableau submatrices for subsystems $\{123\}$ and $\{4\}$ respectively.
\begin{table}[H]
    \centering
    \begin{tabular}{@{}ccc@{}}
        \toprule
        \textbf{State} & \textbf{$T_{123}$} & \textbf{$T_{4}$} \\
        \midrule
        \vspace{0.4em}
        $\ket{\phi}$ (Before) &
        $\left[
        \begin{array}{ccc@{\hspace{4pt}}|@{\hspace{4pt}}ccc}
         1 & 1 & 1 & 0 & 0 & 0 \\
         0 & 0 & 0 & 1 & 1 & 0 \\
         0 & 0 & 0 & 0 & 1 & 1 \\
         0 & 0 & 0 & 0 & 0 & 0 \\
        \end{array}
        \right]$ &
        $\left[
        \begin{array}{c@{\hspace{4pt}}|@{\hspace{4pt}}c}
         0 & 0 \\ 0 & 0 \\ 0 & 0 \\ 0 & 1 \\
        \end{array}
        \right]$ \\
        \addlinespace[6pt]
        $\ket{GHZ}_4$ (After) &
        $\left[
        \begin{array}{ccc@{\hspace{4pt}}|@{\hspace{4pt}}ccc}
         1 & 1 & 1 & 0 & 0 & 0 \\
         0 & 0 & 0 & 1 & 1 & 0 \\
         0 & 0 & 0 & 0 & 1 & 1 \\
         0 & 0 & 0& 0 & 0 & 1 \\
        \end{array}
        \right]$ &
        $\left[
        \begin{array}{c@{\hspace{4pt}}|@{\hspace{4pt}}c}
         1 & 0 \\ 0 & 0 \\ 0 & 0 \\ 0 & 1 \\
        \end{array}
        \right]$ \\ 
        \bottomrule
    \end{tabular}
        \caption{Tableaux projections $T_{123}$ and $T_{4}$ before and after $CNOT_{3,4}$ is applied to $\ket{\phi}$. The action of $CNOT_{3,4}$ increases the rank of $T_{123}$, which yields a violation of $MMI_{\{ 1,2,3 \}}$, the MMI instance involving the tripartite subsystem $\{ 1,2,3 \}$. The rank of $T_{4}$ also increases consistently with the equality of entanglement entropies for complementary subsystems, given by $T_{A} - T_{\bar{A}} = |A| - |\bar{A}|$.}
    \label{tab:projected_tableaux}
\end{table}
Table~\ref{tab:projected_tableaux} reveals that $CNOT_{3,4}$ introduces a fourth nonzero, linearly independent row in the projected tableau $T_{123}$, thereby increasing its rank from $3$ to $4$. Notably, this same $CNOT_{3,4}$ action on $\ket{\phi}$ preserves the $T_{123}$ column substructure that determines the first six entries of the rank vectors in Table~\ref{tab:rank_comparison}, with $R_{123}$ the only component affected by this gate.

Before the $CNOT_{3,4}$ gate is applied, all $X$ columns in $T_{123}$ are linearly dependent, while all $Z$ columns are pairwise linearly independent. Consequently, every length-one subsystem of $\{123\}$ is described by a tableau containing exactly one $X$ column and one $Z$ column, therefore guaranteeing a rank of $2$. Similarly, each length-two subsystem contains two $X$ and two $Z$ columns, giving a rank of $3$. Preserving the $X$ and $Z$ column dependencies, while increasing the rank of $T_{123}$, generates the $MMI_{\{ 1,2,3 \}}$ equality breaking that leads to violation. This analysis demonstrates that MMI violation in stabilizer states can be understood from the transformations of row/column spaces of tableau projections. In the next section we show how representing this construction in a graph-state formalism enables a compact and explicit qualification for MMI violation in stabilizer states.

\subsection{Entanglement Entropies From the Adjacency Matrices}

We now translate the tableau rank calculation from the previous section into a graph-theoretic formalism. Recall that graph states, while constituting a proper subset of stabilizer states, realize all possible stabilizer entropy vectors.
For an undirected simple graph $G$, which realizes the graph state $\ket{G}$, let $\Gamma_G$ indicate the adjacency matrix for $G$.  For any subsystem $\mathcal{L} \subseteq \ket{G}$, Eq.\ \eqref{eq:connectionsMatrix} defines the local information submatrix $\Gamma^{\mathcal{L}}$, constructed from $\Gamma$. The submatrix $\Gamma^\mathcal{L}$ contains all information needed to compute the entanglement entropy of $\mathcal{L}$. Moreover, $\Gamma^\mathcal{L}$ characterizes the connectivity between sets of vertices in the associated graph $G$, and simplifies the requisite MMI calculations as we will demonstrate in Sections \ref{section:single-vertex-center} and \ref{section:generalizedStarGraphs}.

We illustrate the calculation of entanglement entropy from a graph's adjacency matrix using the graph state $\ket{K_4}$, representable by the $4$-vertex star graph $K_4$ shown in Figure \ref{graph:GHZ4} and constructable as
\begin{equation}
    \ket{K_4} = CZ_{1,4}CZ_{1,3}CZ_{1,2}\ket{+}^{\otimes4}.
\end{equation}
\begin{figure}[H]
  \centering
  \begin{tikzpicture}[>=stealth, thick, 
    node/.style={circle, draw, fill=white, inner sep=2pt}]
    
    \node[node] (2) at (-2,-2) {2};
    \node[node] (3) at (0,-2) {3};
    \node[node] (4) at (2,-2) {4};
    \node[node] (1) at (0,0) {1};      

    \draw (1) -- (2);
    \draw (1) -- (3);
    \draw (1) -- (4);
    
  \end{tikzpicture}
  \caption{Graph Representation for $\ket{K_4}$, a state $LC$ equivalent to $\ket{GHZ}_4$, which violates all $3$-party instances of MMI.}
  \label{graph:GHZ4}
\end{figure}
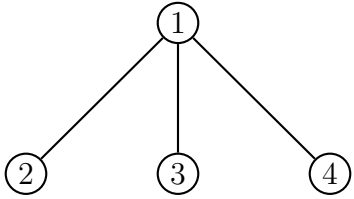
The graph state $\ket{K_4}$ is $LC$ equivalent to the $4$-qubit GHZ state $\ket{GHZ}_4$ through
\begin{equation}
    \ket{GHZ} = H_4 H_3 H_2 \ket{K_4}.
\end{equation}
Following Figure\ \ref{graph:GHZ4}, the $K_4$ adjacency matrix only contains non-zero elements in the first row and column, and is written
\begin{equation}
\Gamma =
    \begin{bmatrix}
    0 & 1 & 1 & 1 \\
    1 & 0 & 0 & 0 \\
    1 & 0 & 0 & 0 \\
    1 & 0 & 0 & 0 \\
    \end{bmatrix}.
\end{equation}

We evaluate $MMI_{\{ 1,3,4 \}}$, which corresponds to the MMI instance in Eq.\ \eqref{eq:MMI} on the disjoint subsystems $C = \{1\}, I=\{3\},$ and $ J=\{4\}$. The local information submatrix of $\mathcal{L} = \{ 1,3,4\}$ is
\begin{equation}
\Gamma^{\mathcal{L}} =
    \begin{bmatrix}
    0 & 1 & 1 & 1 \\
    1 & 0 & 0 & 0 \\
    1 & 0 & 0 & 0 \\
    \end{bmatrix}.
\end{equation}
For a subsystem $A$, involved in an MMI calculation, the local information submatrix $\Gamma^A$ is easily obtained from $\Gamma^{\mathcal{L}}$ by selecting the rows indexed by elements in $A$ and the columns indexed by elements in its complement $\overline{A}$. Figure \ref{fig:LI-submatrices-GHZ} illustrates this selection process. Following Eq.\ \eqref{eq:EEadj}, the entanglement entropy of $A$ is computed as the rank of $\Gamma^A$, over $\mathbb{Z}_2$.
\begin{center}
  \captionsetup{type=figure}
  \setlength{\tabcolsep}{0pt}
  \renewcommand{\arraystretch}{1}

  \begin{longtable}{@{}
    >{\centering\arraybackslash}m{0.3\textwidth}@{}
    >{\centering\arraybackslash}m{0.3\textwidth}@{}}
  \endfirsthead\endhead\endfoot\endlastfoot

  \begin{minipage}[c]{\linewidth}\centering
    \[
      \boldsymbol{\Gamma^{\{1\}}}\vspace{-2em}
    \]
    \begin{tikzpicture}[baseline=(current bounding box.north)]
      \matrix (m)[matrix of nodes,
                  ampersand replacement=\&, left delimiter={[}, right delimiter={]},
                  inner sep=3pt, row sep=0pt, column sep=0pt]{
        0\&1\&1\&1\\
        1\&0\&0\&0\\
        1\&0\&0\&0\\
      };
      \begin{scope}[on background layer]
        \foreach \c in {2,3,4}{\colborder{\c}}
        \rowborder{1}
        \foreach \c in {2,3,4}{\shadecell{1}{\c}}
        \outlineblock{1}{1}{2}{4}
      \end{scope}
    \end{tikzpicture}
  \end{minipage}
  &
  \begin{minipage}[c]{\linewidth}\centering
    \[
      \boldsymbol{\Gamma^{\{1,3\}}}\vspace{-2em}
    \]
    \begin{tikzpicture}[baseline=(current bounding box.north)]
      \matrix (m)[matrix of nodes,
                  ampersand replacement=\&, left delimiter={[}, right delimiter={]},
                  inner sep=3pt, row sep=0pt, column sep=0pt]{
        0\&1\&1\&1\\
        1\&0\&0\&0\\
        1\&0\&0\&0\\
      };
      \begin{scope}[on background layer]
        \foreach \c in {2,4}{\colborder{\c}}
        \rowborder{1}\rowborder{2}
        \foreach \r/\c in {1/2,1/4,2/2,2/4}{\shadecell{\r}{\c}}
        \outlineblock{1}{2}{2}{2}
        \outlineblock{1}{2}{4}{4}
      \end{scope}
    \end{tikzpicture}
  \end{minipage}\\

  \begin{minipage}[c]{\linewidth}\centering
    \[
      \boldsymbol{\Gamma^{\{3\}}}\vspace{-2em}
    \]
    \begin{tikzpicture}[baseline=(current bounding box.north)]
      \matrix (m)[matrix of nodes,
                  ampersand replacement=\&, left delimiter={[}, right delimiter={]},
                  inner sep=3pt, row sep=0pt, column sep=0pt]{
        0\&1\&1\&1\\
        1\&0\&0\&0\\
        1\&0\&0\&0\\
      };
      \begin{scope}[on background layer]
        \foreach \c in {1,2,4}{\colborder{\c}}
        \rowborder{2}
        \foreach \c in {1,2,4}{\shadecell{2}{\c}}
        \outlineblock{2}{2}{1}{2}
        \outlineblock{2}{2}{4}{4}
      \end{scope}
    \end{tikzpicture}
  \end{minipage}
  &
  \begin{minipage}[c]{\linewidth}\centering
    \[
      \boldsymbol{\Gamma^{\{1,4\}}}\vspace{-2em}
    \]
    \begin{tikzpicture}[baseline=(current bounding box.north)]
      \matrix (m)[matrix of nodes,
                  ampersand replacement=\&, left delimiter={[}, right delimiter={]},
                  inner sep=3pt, row sep=0pt, column sep=0pt]{
        0\&1\&1\&1\\
        1\&0\&0\&0\\
        1\&0\&0\&0\\
      };
      \begin{scope}[on background layer]
        \foreach \c in {2,3}{\colborder{\c}}
        \rowborder{1}\rowborder{3}
        \foreach \r/\c in {1/2,1/3,3/2,3/3}{\shadecell{\r}{\c}}
        \outlineblock{1}{1}{2}{3}
        \outlineblock{3}{3}{2}{3}
      \end{scope}
    \end{tikzpicture}
  \end{minipage}\\

  \begin{minipage}[c]{\linewidth}\centering
    \[
      \boldsymbol{\Gamma^{\{4\}}}\vspace{-2em}
    \]
    \begin{tikzpicture}[baseline=(current bounding box.north)]
      \matrix (m)[matrix of nodes,
                  ampersand replacement=\&, left delimiter={[}, right delimiter={]},
                  inner sep=3pt, row sep=0pt, column sep=0pt]{
        0\&1\&1\&1\\
        1\&0\&0\&0\\
        1\&0\&0\&0\\
      };
      \begin{scope}[on background layer]
        \foreach \c in {1,2,3}{\colborder{\c}}
        \rowborder{3}
        \foreach \c in {1,2,3}{\shadecell{3}{\c}}
        \outlineblock{3}{3}{1}{3}
      \end{scope}
    \end{tikzpicture}
  \end{minipage}
  &
  \begin{minipage}[c]{\linewidth}\centering
    \[
      \boldsymbol{\Gamma^{\{3,4\}}}\vspace{-2em}
    \]
    \begin{tikzpicture}[baseline=(current bounding box.north)]
      \matrix (m)[matrix of nodes,
                  ampersand replacement=\&, left delimiter={[}, right delimiter={]},
                  inner sep=3pt, row sep=0pt, column sep=0pt]{
        0\&1\&1\&1\\
        1\&0\&0\&0\\
        1\&0\&0\&0\\
      };
      \begin{scope}[on background layer]
        \foreach \c in {1,2}{\colborder{\c}}
        \rowborder{2}\rowborder{3}
        \foreach \r/\c in {2/1,2/2,3/1,3/2}{\shadecell{\r}{\c}}
        \outlineblock{2}{3}{1}{2}
      \end{scope}
    \end{tikzpicture}
  \end{minipage}\\

  \multicolumn{2}{>{\centering\arraybackslash}m{0.6\textwidth}}{%
    \begin{minipage}[c]{\linewidth}\centering
      \[
        \boldsymbol{\Gamma^{\{1,3,4\}}}\vspace{-2em}
      \]
      \begin{tikzpicture}[baseline=(current bounding box.north)]
        \matrix (m)[matrix of nodes,
                    ampersand replacement=\&, left delimiter={[}, right delimiter={]},
                    inner sep=3pt, row sep=0pt, column sep=0pt]{
          0\&1\&1\&1\\
          1\&0\&0\&0\\
          1\&0\&0\&0\\
        };
        \begin{scope}[on background layer]
          \colborder{2}
          \rowborder{1}\rowborder{2}\rowborder{3}
          \foreach \r in {1,2,3}{\shadecell{\r}{2}}
          \outlineblock{1}{3}{2}{2}
        \end{scope}
      \end{tikzpicture}
    \end{minipage}}\\

  \end{longtable}

\caption{For each distinct bipartition $\{A,\overline{A}\}$ of $\mathcal{L} = \{1,3,4\}$, rows indexed by vertices in $A$ and columns indexed by vertices in $\overline{A}$ are outlined; their intersection defines $\Gamma^{A}$, which is shaded. $S_A$ is independent of intra-subsystem adjacencies within the partition $\{A, \bar{A}\}$, that is, of the adjacencies within $A$ itself and within $\bar{A}$ itself. Observe that each $\Gamma^{A}$ shown has rank $1$.}
  \label{fig:LI-submatrices-GHZ}
\end{center}
Each matrix in Figure \ref{fig:LI-submatrices-GHZ} has rank $1$, and thus the entanglement entropy of each subsystem is likewise $1$. As demonstrated in Section \ref{TableauxAndMMI}, a $4$-qubit entropy vector with each component $1$ will violate all $3$-party instances of $MMI_{\{1,3,4 \}}$.

Connecting a fifth vertex to any leg of the $4$-star graph representing $\ket{K_4}$ produces a new graph state $\ket{K^{1,1,2}_{4}}$, which is not $LC$ equivalent to a $GHZ$ state. Nonetheless, the state $\ket{K^{1,1,2}_{4}}$ still violates MMI due to the $GHZ$-type entanglement present on four of its vertices. For example, let $K^{1,1,2}_{4}$ be the graph obtained by connecting a fifth vertex to vertex $4$ of the $K_4$ graph in Figure \ref{graph:GHZ4}. The state $\ket{K^{1,1,2}_{4}}$ can be constructed as $CZ_{4,5}\left( \ket{K_4}\otimes\ket{+}\right)$, and its associated graph $K^{1,1,2}_{4}$ is given in Figure \ref{graph:Psi}.
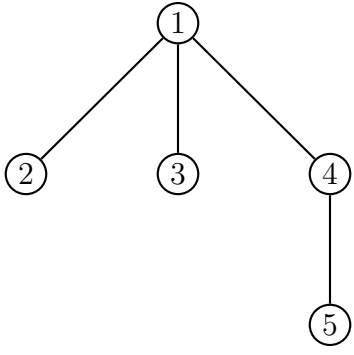
\begin{figure}[H]
  \centering
  \begin{tikzpicture}[>=stealth, thick, 
    node/.style={circle, draw, fill=white, inner sep=2pt}]
    
    \node[node] (2) at (-2,-2) {2};
    \node[node] (3) at (0,-2) {3};
    \node[node] (4) at (2,-2) {4};
    \node[node] (5) at (2,-4) {5};   
    \node[node] (1) at (0,0) {1};      

    \draw (1) -- (2);
    \draw (1) -- (3);
    \draw (1) -- (4);
    \draw (4) -- (5);
    
  \end{tikzpicture}
  \caption{Graph Representation for $\ket{K^{1,1,2}_{4}} = CZ_{4,5}\left( \ket{K_4}\otimes\ket{+}\right)$, with a four-star $K_4$ induced by vertices $\{1,2,3,4\}$. While $\ket{K^{1,1,2}_{4}}$ is not a $GHZ$ state, it nevertheless fails $MMI_{\{ 1,3,45 \}}$ due to its star topology.}
  \label{graph:Psi}
\end{figure}

Following from Figure \ref{graph:Psi}, the adjacency matrix for $K^{1,1,2}_{4}$ is given by
\begin{equation}
\Gamma =
    \begin{bmatrix}
    0 & 1 & 1 & 1 & 0 \\
    1 & 0 & 0 & 0 & 0 \\
    1 & 0 & 0 & 0 & 0 \\
    1 & 0 & 0 & 0 & 1 \\
    0 & 0 & 0 & 1 & 0 \\
    \end{bmatrix}.
\end{equation}
As before, we evaluate $MMI_{\{ 1,3,45 \}}$ using the local information submatrix for $ \\\mathcal{L} = \{1,3,4,5\}$, written
\begin{equation}
\label{eq:GHZ-adj-matrix}
\Gamma^{\mathcal{L}} =
    \begin{bmatrix}
    0 & 1 & 1 & 1 & 0 \\
    1 & 0 & 0 & 0 & 0 \\
    1 & 0 & 0 & 0 & 1 \\
    0 & 0 & 0 & 1 & 0 \\
    \end{bmatrix}.
\end{equation}

Figure \ref{fig:local-info-block-foprm} shows a breakdown of $\Gamma^{\mathcal{L}}$ into blocks, with the respective shared information elements highlighted.
\begin{figure}[H]
    \centering
    \includegraphics[width=0.4\linewidth]{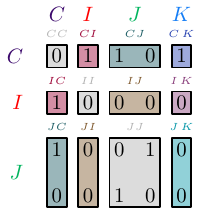}.
    \caption{Local information submatrix $\Gamma^{\mathcal{L}}$, for $\mathcal{L}=\{ 1,3,4,5\}$ in $\ket{K^{1,1,2}_{4}}$, given in block matrix form. The state $\ket{K^{1,1,2}_{4}}$ is partitioned such that $C\sqcup I\sqcup J\sqcup K = \{ 1,2,3,4,5\}$, and shared information elements are colored. Notice that $S_A$ for $A\subseteq \mathcal{L}$ is independent of the self adjacency blocks $CC$, $II$, and $JJ$, which are shaded gray.}
    \label{fig:local-info-block-foprm}
\end{figure}

Figure~\ref{fig:LI-submatrices-PSI} gives every local information submatrix needed to evaluate $MMI_{\{ 1,3,45 \}}$:
\begin{equation}
\label{eq:MMIViolationExample2}
\begin{aligned}
  \ra{\Gamma^{\{1\}}} + \ra{\Gamma^{\{3\}}} + \ra{\Gamma^{\{4,5\}}} + \ra{\Gamma^{\{1,3,4,5\}}} \\
  \overset{?}{\leq} \; \ra{\Gamma^{\{1,3\}}} + \ra{\Gamma^{\{1,4,5\}}}+ \ra{\Gamma^{\{3,4,5\}}}  ,
\end{aligned}
\end{equation}
where all ranks are taken over $\mathbb{Z}_2$, as the entanglement from adjacency matrix formula Eq.\ \eqref{eq:EEadj} requires. Inspection of Figure~\ref{fig:LI-submatrices-PSI} reveals that each term in Eq.\ \eqref{eq:MMIViolationExample2} has rank $1$, giving an entanglement entropy of $1$, thereby yielding a violation of $MMI_{\{ 1,3,45 \}}$ for $\ket{K^{1,1,2}_{4}}$.
\begin{center}
  \captionsetup{type=figure}
  \setlength{\tabcolsep}{0pt}
  \renewcommand{\arraystretch}{1}

  \begin{longtable}{@{}
    >{\centering\arraybackslash}m{0.3\textwidth}@{}
    >{\centering\arraybackslash}m{0.3\textwidth}@{}}
  \endfirsthead\endhead\endfoot\endlastfoot

  \begin{minipage}[c]{\linewidth}\centering
    \[
      \boldsymbol{\Gamma^{\{1\}}}\vspace{-2em}
    \]
    \begin{tikzpicture}[baseline=(current bounding box.north)]
      \matrix (m)[matrix of nodes,
                  ampersand replacement=\&, left delimiter={[}, right delimiter={]},
                  inner sep=3pt, row sep=1.01pt, column sep=0pt]{
        0\&1\&1\&1\&0\\
        1\&0\&0\&0\&0\\
        1\&0\&0\&0\&1\\
        0\&0\&0\&1\&0\\
      };
      \begin{scope}[on background layer]
        \foreach \c in {2,3,4,5}{\colborderSecond{\c}}
        \rowborderSecond{1}
        \foreach \c in {2,3,4,5}{\shadecell{1}{\c}}
        \outlineblockSecond{1}{1}{2}{5}
      \end{scope}
    \end{tikzpicture}
  \end{minipage}
  &
   \begin{minipage}[c]{\linewidth}\centering
    \[
      \boldsymbol{\Gamma^{\{1,3\}}}\vspace{-2em}
    \]
    \begin{tikzpicture}[baseline=(current bounding box.north)]
      \matrix (m)[matrix of nodes,
                  ampersand replacement=\&, left delimiter={[}, right delimiter={]},
                  inner sep=3pt, row sep=1.01pt, column sep=0pt]{
        0\&1\&1\&1\&0\\
        1\&0\&0\&0\&0\\
        1\&0\&0\&0\&1\\
        0\&0\&0\&1\&0\\
      };
      \begin{scope}[on background layer]
        \foreach \c in {2,4,5}{\colborderSecond{\c}}
        \rowborderSecond{1}\rowborderSecond{2}
        \foreach \r/\c in {1/2,2/2}{\shadecell{\r}{\c}}
        \foreach \r/\c in {1/4,1/5, 2/4, 2/5}{\shadecell{\r}{\c}}
        \outlineblockSecond{1}{2}{2}{2}
        \outlineblockSecond{1}{2}{4}{5}
        
      \end{scope}
    \end{tikzpicture}
  \end{minipage}\\

  \begin{minipage}[c]{\linewidth}\centering
    \[
      \boldsymbol{\Gamma^{\{3\}}}\vspace{-2em}
    \]
    \begin{tikzpicture}[baseline=(current bounding box.north)]
      \matrix (m)[matrix of nodes,
                  ampersand replacement=\&, left delimiter={[}, right delimiter={]},
                  inner sep=3pt, row sep=1.01pt, column sep=0pt]{
        0\&1\&1\&1\&0\\
        1\&0\&0\&0\&0\\
        1\&0\&0\&0\&1\\
        0\&0\&0\&1\&0\\
      };
      \begin{scope}[on background layer]
        \foreach \c in {1,2,4,5}{\colborderSecond{\c}}
        \rowborderSecond{2}
        \foreach \c in {1,2,4,5}{\shadecell{2}{\c}}
        \outlineblockSecond{2}{2}{1}{2}
        \outlineblockSecond{2}{2}{4}{5}
      \end{scope}
    \end{tikzpicture}
  \end{minipage}
  &
  \begin{minipage}[c]{\linewidth}\centering
    \[
      \boldsymbol{\Gamma^{\{1,4,5\}}}\vspace{-2em}
    \]
    \begin{tikzpicture}[baseline=(current bounding box.north)]
      \matrix (m)[matrix of nodes,
                  ampersand replacement=\&, left delimiter={[}, right delimiter={]},
                  inner sep=3pt, row sep=1.01pt, column sep=0pt]{
        0\&1\&1\&1\&0\\
        1\&0\&0\&0\&0\\
        1\&0\&0\&0\&1\\
        0\&0\&0\&1\&0\\
      };
      \begin{scope}[on background layer]
        \foreach \c in {2,3}{\colborderSecond{\c}}
        \rowborderSecond{1}\rowborderSecond{3}\rowborderSecond{4}
        \foreach \r/\c in {1/2,1/3,3/2,3/3,4/2,4/3}{\shadecell{\r}{\c}}
        \outlineblockSecond{1}{1}{2}{3}
        \outlineblockSecond{3}{4}{2}{3}
      \end{scope}
    \end{tikzpicture}
  \end{minipage}\\

  \begin{minipage}[c]{\linewidth}\centering
    \[
      \boldsymbol{\Gamma^{\{4,5\}}}\vspace{-2em}
    \]
    \begin{tikzpicture}[baseline=(current bounding box.north)]
      \matrix (m)[matrix of nodes,
                  ampersand replacement=\&, left delimiter={[}, right delimiter={]},
                  inner sep=3pt, row sep=1.01pt, column sep=0pt]{
        0\&1\&1\&1\&0\\
        1\&0\&0\&0\&0\\
        1\&0\&0\&0\&1\\
        0\&0\&0\&1\&0\\
      };
      \begin{scope}[on background layer]
        \foreach \c in {1,2,3}{\colborderSecond{\c}}
        \rowborderSecond{3}\rowborderSecond{4}
        \foreach \r/\c in {3/1,3/2,3/3,4/1,4/2,4/3}{\shadecell{\r}{\c}}
        \outlineblockSecond{3}{4}{1}{3}
      \end{scope}
    \end{tikzpicture}
  \end{minipage}
  &
  \begin{minipage}[c]{\linewidth}\centering
    \[
      \boldsymbol{\Gamma^{\{3,4,5\}}}\vspace{-2em}
    \]
    \begin{tikzpicture}[baseline=(current bounding box.north)]
      \matrix (m)[matrix of nodes,
                  ampersand replacement=\&, left delimiter={[}, right delimiter={]},
                  inner sep=3pt, row sep=1.01pt, column sep=0pt]{
        0\&1\&1\&1\&0\\
        1\&0\&0\&0\&0\\
        1\&0\&0\&0\&1\\
        0\&0\&0\&1\&0\\
      };
      \begin{scope}[on background layer]
        \foreach \c in {1,2}{\colborderSecond{\c}}
        \rowborderSecond{2}\rowborderSecond{3}\rowborderSecond{4}
        \foreach \r/\c in {2/1,2/2,3/1,3/2,4/1,4/2}{\shadecell{\r}{\c}}
        \outlineblockSecond{2}{4}{1}{2}
      \end{scope}
    \end{tikzpicture}
  \end{minipage}\\

  \multicolumn{2}{>{\centering\arraybackslash}m{0.6\textwidth}}{%
    \begin{minipage}[c]{\linewidth}\centering
      \[
        \boldsymbol{\Gamma^{\{1,3,4,5\}}}\vspace{-2em}
      \]
      \begin{tikzpicture}[baseline=(current bounding box.north)]
        \matrix (m)[matrix of nodes,
                    ampersand replacement=\&, left delimiter={[}, right delimiter={]},
                    inner sep=3pt, row sep=1.01pt, column sep=0pt]{
          0\&1\&1\&1\&0\\
          1\&0\&0\&0\&0\\
          1\&0\&0\&0\&1\\
          0\&0\&0\&1\&0\\
        };
        \begin{scope}[on background layer]
          \colborderSecond{2}
          \rowborderSecond{1}\rowborderSecond{2}\rowborderSecond{3}\rowborderSecond{4}
          \foreach \r in {1,2,3,4}{\shadecell{\r}{2}}
          \outlineblockSecond{1}{4}{2}{2}
        \end{scope}
      \end{tikzpicture}
    \end{minipage}}\\

  \end{longtable}
  \caption{For each bipartition $\{A,\overline{A}\}$ of $\mathcal{L} = \{1,3,4,5\}$, rows indexed by $A$ and columns indexed by $\overline{A}$ are outlined. The intersection of these rows and columns defines $\Gamma^{A}$, which is shaded. $S_A$ is independent of intra-subsystem adjacencies within the partition $\{A, \bar{A}\}$, that is, of the adjacencies within $A$ itself and within $\bar{A}$ itself. For state $\ket{K^{1,1,2}_{4}}$, every $\Gamma^{A}$ has rank $1$, and therefore $\ket{K^{1,1,2}_{4}}$ fails $MMI_{\{ 1,3,45 \}}$.}
  \label{fig:LI-submatrices-PSI}
\end{center}

In this section we showed how the MMI inequalities can be reformulated in terms of tableau submatrix ranks, and demonstrated how entangling operations, e.g. a $CNOT$ gate, alter submatrix rank to induce MMI violations. Tracking the row and column space transformations of tableau projections, we identified the mechanism that leads to MMI violation in graph states that are $LC$ equivalent to $4$-qubit GHZ states, and other graph states with an equivalent entanglement structure. We extended this construction to a graph-theoretic setting, where the condition for MMI violation manifests as a forbidden-subgraph structure that embodies $GHZ$-type entanglement. Evaluating MMI using the ranks of graph state adjacency matrices, we explicitly showed how MMI violation is compactly diagnosed by the local information submatrices of subsystems involved in the inequalities. In the next section we lay out necessary conditions for a family of graph states to violate particular MMI instances, and discuss the physical interpretations for these requirements. 

\section{Conditions for MMI Violation in Graph States}\label{ConditionsForViolation}

In this section, we characterize the structural conditions for violation of specific MMI families in graph states. We analyze the relationship between graph connectivity and the entanglement entropy between disjoint graph state subsystems, identifying the explicit conditions that guarantee failure of these specific MMI instances. We first consider a specific class of star-like graphs $G$, featuring a single central vertex $c$ connected to disjoint and disconnected subgraphs, and derive the necessary and sufficient conditions under which such graphs violate a family of MMI instances $MMI_{cIJ}$. We then extend this framework to consider generalized star graphs $\mathcal{G}$, possessing a multi-vertex center $C$, and algebraically relate failure of the analogous MMI family $MMI_{CIJ}$ to specific intersection and distributivity properties of adjacency matrix column spaces. We highlight the physical implications of the violation of these specific MMI instances in graph states, and present several observations of significance. 

\subsection{An MMI-Violating Graph}\label{section:single-vertex-center}

We now derive the necessary conditions for a family of graph states to violate the MMI instances $MMI_{cIJ}$, defined in Eq.\ \eqref{eq:MMI-Fail-G_m}. 

Let $G = (V, E)$ be a graph on $n$ vertices that consists of a central vertex $c$ connected to $k \geq 3$ nonempty subgraphs,
\begin{equation}\label{GraphProducts}
    G^{1},\, G^{2},\, \dots,\, G^{k},
\end{equation}
which are vertex-disjoint and mutually%
\footnote{Multiple edges may exist between the central vertex $c$ and each disjoint subgraph $G^p$.}
disconnected, and where each $G^{p} = (V^{p}, E^{p})$ for $p = 1,2,\dots,k$. For any pair of indices $p,\ q \in \{1,2,\dots,k\}$, such that $p \neq q$, the graph $G$ satisfies the following conditions:
\begin{enumerate}[label=(\arabic*)]
    \item \label{condition: disjointness} If $v \in V^{p}$, then $v \notin V^{q}$ (subgraph disjointness),
    \item \label{condition: partition} The set $\{ \{c\},V^1, V^{2}, \cdots, V^k \}$ defines a partition of $V$ (graph partitioning).
    \item \label{condition: Disconnection} For every $v \in V^{p}$ and $w \in V^{q}$, we have $(v, w) \notin E$ (inter-subgraph disconnectedness).
    \item \label{condition: Anchoring} There exists at least one $v \in V^{p}$ such that $(v, c) \in E$ (subgraph anchoring),
\end{enumerate}
When considering a particular instance of MMI we will fix a partition of the vertices $V\setminus c$ into three subsets $I,\:J$ and $K:=V\setminus \{c,I,J\}$, where $I,J,K$ are each unions of the vertex sets $V^{p}$. A schematic diagram for such a graph $G$ is presented in Figure~\ref{fig:G_m-Graph}.
\begin{figure}[H]
  \centering
  \begin{tikzpicture}[
      every node/.style={cloud, cloud puffs=10, draw, align=center},
      >=stealth
  ]

    \node[
      circle,
      draw,
      fill=white,
      minimum size=1cm,
      cloud puffs=0, 
      font=\large
    ] (center) at (0,0) {$c$};

    \node[fill=gray!20, aspect=1] (V1) at (-6,-3) {$G^1$};
    \draw (center) -- (V1);

    \node[fill=gray!20, aspect=1] (V2) at (-3,-3) {$G^2$};
    \draw (center) -- (V2);

    \node[fill=gray!20, aspect=1] (V3) at (0,-3)  {$G^3$};
    \draw (center) -- (V3);

    \node[draw=none, fill=none] (dots) at (3,-3) {\large $\cdots$};

    \node[fill=gray!20, aspect=1] (Sl) at (6,-3) {$G^k$};
    \draw (center) -- (Sl);

    \coordinate (subbot) at (V1.south);
    \begin{scope}[shift={(5,0)}]
            \node[draw, rectangle, rounded corners, minimum width=3cm, minimum height=1.3cm, fill=white] (legend) {};
            \node[circle, draw, fill=white, minimum size=0.4cm, cloud puffs=0] at (-0.9,0.3) {};
            \node[draw=none, fill=none, anchor=west, font=\small] at (-1,0.3) {Vertex};
    
            \node[cloud, cloud puffs=10, draw, fill=white, minimum size=0.5cm] at (-0.9,-0.3) {};
            \node[draw=none, fill=none, anchor=west, font=\small] at (-1.05,-0.3) {Subgraph};
    \end{scope}
    \draw[
      decorate,
      decoration={brace,mirror,amplitude=5pt}
    ]
      ([yshift=-2ex]V1.south west|-subbot) -- ([yshift=-2ex]V2.south east|-subbot)
      node[midway,yshift=-4ex,draw=none,fill=none] {$I$};
    
    \draw[
      decorate,
      decoration={brace,mirror,amplitude=5pt}
    ]
      ([yshift=-2ex]V3.south west|-subbot) -- ([yshift=-2ex]V3.south east|-subbot)
      node[midway,yshift=-4ex,draw=none,fill=none] {$J$};
    
    \draw[
      decorate,
      decoration={brace,mirror,amplitude=5pt}
    ]
      ([yshift=-2ex]dots.center|-subbot) -- ([yshift=-2ex]Sl.center|-subbot)
      node[midway,yshift=-4ex,draw=none,fill=none] {$K$};

  \end{tikzpicture}
  \caption{Generalized star graph $G$ which satisfies the conditions of subgraph disjointness, graph partitioning, subgraph anchoring, and inter-subgraph disconnectedness. The subsystems $\{c\},\ I,\ $ $J$, and $K$ give a valid partition that defines $MMI_{cIJ}$, as in Eq.\ \eqref{eq:MMI-Fail-G_m}, where any path between the subsystems $I$, $J$ and $K$ must include the central vertex $c$. Any graph state represented by a graph of type $G$ will have an entropy vector that violates $MMI_{cIJ}$.}
  \label{fig:G_m-Graph}
\end{figure}
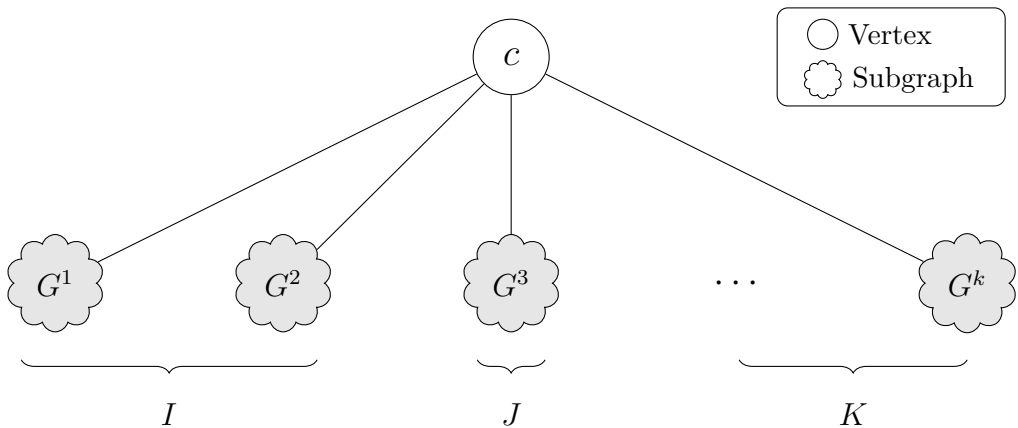

Having defined the class of graphs $G$, which satisfy the properties of subgraph disjointness, graph partitioning, subgraph anchoring, and inter-subgraph disconnectedness, we now prove that any graph state $\ket{G}$ violates%
\footnote{We emphasize that the condition $k\geq3$, in Eq.\ \eqref{GraphProducts}, is important since states $GHZ_n$, for $n \leq3$, will satisfy $MMI_{cIJ}$ as the adjacency submatrix ranks can produce entropies of $0$.} %
$MMI_{cIJ}$.
\begin{proposition}
\label{thm:MMI-violation}
Every graph state $\ket{G}$, represented by a graph $G$ which admits the properties of subgraph disjointness, graph partitioning, subgraph anchoring, and inter-subgraph disconnectedness, has an entropy vector that violates every $MMI_{cIJ}$, which is of the form
\begin{equation}\label{eq:MMI-Fail-G_m}
\begin{aligned}
    S_{cI} + S_{cJ} + S_{IJ} \geq S_{c} + S_{I} + S_{J} + S_{cIJ},
\end{aligned}
\end{equation}
where $c$ denotes the subsystem containing the central vertex, and $I, J \subset V \setminus c$ such that $I \cap J = \varnothing$, $ \overline{c\sqcup I\sqcup J} \neq \varnothing $, and any path between $I,J$ and $K$ must include $c$.
\end{proposition}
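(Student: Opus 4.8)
The plan is to directly evaluate all seven entanglement entropies appearing in $MMI_{cIJ}$ using the adjacency-rank formula \eqref{eq:EEadj}, $S_A = \rAz{\Gamma^{A}}$, and to show that each equals exactly $1$, so that \eqref{eq:MMI-Fail-G_m} reads $3 \geq 4$ and therefore fails. The structural reason this works is that $c$ is the unique vertex joining the otherwise mutually disconnected blocks $I$, $J$, and $K$: every edge crossing between two of these blocks is incident to $c$. I would first record this as the core observation, namely that for any subset $A \subseteq V$ built from $\{c\}$ together with whole blocks among $I, J, K$, the local information submatrix $\Gamma^{A}$ has all of its nonzero entries confined to the single line (row or column) indexed by $c$.

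To prove each rank equals $1$ I would split into two cases according to whether $c \in A$ or $c \in \overline{A}$. When $c \in A$ (the terms $S_{cI}, S_{cJ}, S_{cIJ}$, and also $S_c$), every other row of $\Gamma^{A}$ is indexed by a vertex $v$ lying in some block $G^{p}$; by inter-subgraph disconnectedness \ref{condition: Disconnection} such a $v$ has no edges to any vertex of a different block, and $\overline{A}$ contains only vertices of blocks other than $v$'s, so that entire row vanishes. Only the $c$-row survives, and it is nonzero because $c$ is anchored \ref{condition: Anchoring} to each block. Symmetrically, when $c \in \overline{A}$ (the terms $S_I, S_J, S_{IJ}$), every column of $\Gamma^{A}$ other than the $c$-column is indexed by a vertex in a block disjoint from $A$'s blocks and hence vanishes, while the $c$-column is nonzero by anchoring. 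In either case $\rAz{\Gamma^{A}} = 1$. Substituting $S_{cI} = S_{cJ} = S_{IJ} = 1$ on the left and $S_c = S_I = S_J = S_{cIJ} = 1$ on the right of \eqref{eq:MMI-Fail-G_m} yields $3 \geq 4$, a contradiction, so $MMI_{cIJ}$ is violated for every such $\ket{G}$.

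The main obstacle I anticipate is not the arithmetic but the careful bookkeeping needed to justify the rank-exactly-one claim uniformly. One must verify that the hypotheses $I, J \neq \varnothing$ together with $K = \overline{c \sqcup I \sqcup J} \neq \varnothing$ guarantee that each of the three block groups is nonempty, so that anchoring \ref{condition: Anchoring} actually supplies a nonzero entry on the surviving $c$-line in every case (this is also where $k \geq 3$ enters, as it is what permits a valid three-way nonempty partition). One must also confirm that the disjointness and disconnectedness conditions \ref{condition: disjointness} and \ref{condition: Disconnection} are precisely what forces all non-$c$ rows or columns to vanish.

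Finally, I would take care to note that neither internal edges within a block nor multiple edges from $c$ into a single block ever raise the relevant ranks above $1$: such edges either lie inside $A$ or inside $\overline{A}$ and so never appear in $\Gamma^{A}$ at all, or else they contribute only additional nonzero entries to the single surviving $c$-line, which cannot increase its rank beyond $1$. This closes the argument that the entropy vector of $\ket{G}$ lies strictly outside the region cut out by $MMI_{cIJ}$.
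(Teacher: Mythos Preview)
Your proposal is correct and takes essentially the same approach as the paper: both arguments show that all seven entropies in \eqref{eq:MMI-Fail-G_m} equal exactly $1$ by observing that the only nonzero line in each relevant $\Gamma^{A}$ is the one indexed by $c$ (with inter-subgraph disconnectedness killing the other lines and anchoring guaranteeing that the $c$-line is nonzero), and then conclude $3 \geq 4$. The paper packages this via the block decomposition of the local information submatrix $\Gamma^{\mathcal{L}}$, reading off each entropy as the rank of a single-row block matrix built from $cI,\,cJ,\,cK$, whereas you argue directly case-by-case on whether $c \in A$ or $c \in \overline{A}$; the substance is identical.
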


\begin{proof}
We prove the proposition by showing that each term in $MMI_{cIJ}$, as in Eq.\ \eqref{eq:MMI-Fail-G_m} has value $1$. This result follows from the fact that the adjacency submatrices, relevant to the $MMI_{cIJ}$ entries in Eq.\ \eqref{eq:MMI-Fail-G_m}, only have nonzero entries along a single row or column.

For a graph state $\ket{G}$, the selection of disjoint subsystems $c,\ I,\ J \subseteq \ket{G}$ induce the vertex partition $V = c \sqcup I \sqcup J \sqcup K$ on $G$. Let $\mathcal{L} = c \sqcup I \sqcup J$ which, after a vertex permutation, admits the local information submatrix $\Gamma^{\mathcal{L}}$ as
\begin{equation}\label{locInfSub}
\Gamma^{\mathcal{L}} =
\begin{bmatrix}
0 & c I & c J & c K \\[1mm]
\left(c I\right)^T & I   I & I   J & I   K \\[1mm]
\left(c J\right)^T & \left(I   J\right)^T & J   J & J   K
\end{bmatrix}.
\end{equation}
In this representation, each block of $\Gamma^{\mathcal{L}}$ contains adjacencies between the two subsystems which label that block, i.e. each entry of $\Gamma^{\mathcal{L}}$ is a block matrix $(\Gamma)_{u,v}$ with $u,v \in \{c,\ I,\ J,\ K\}$.

Second, the condition of inter-subgraph disconnectedness~\ref{condition: Disconnection} requires that the block structure of $\Gamma^{\mathcal{L}}$ obeys the constraints
\begin{equation}
\begin{split}
    IJ &= \textbf{0}_{IJ},\\
    IK &= \textbf{0}_{IK},\\
    JK &= \textbf{0}_{JK},\\
\end{split}
\end{equation}
where $\textbf{0}_{IJ}$ denotes the zero matrix of size $|I| \times |J|$. Omitting the dimensionality of the block-zero matrices, for conciseness and since each is fixed by its adjacent blocks, allows $\Gamma^{\mathcal{L}}$ in Eq.\ \eqref{locInfSub} to be rewritten as
\begin{equation}\label{eq:InfoSubmatrix-G_m}
\Gamma^{\mathcal{L}} =
\begin{bmatrix}
0 & c I & c J & c K \\[1mm]
\left(c I\right)^T & I   I & \boldsymbol{0} & \boldsymbol{0} \\[1mm]
\left(c J\right)^T & \boldsymbol{0} & J   J & \boldsymbol{0}
\end{bmatrix}.
\end{equation}

For each subset $A \subseteq \mathcal{L}$, the local information submatrix $\Gamma^{A}$ can be read directly from $\Gamma^{\mathcal{L}}$ using the definition of these matrices provided in Eq.\ \eqref{eq:connectionsMatrix}. The rank of each $\Gamma^{A}$ determines the entanglement entropy $S_A$, as given by Eq.\ \eqref{eq:EEadj}. Therefore all terms needed to evaluate $MMI_{cIJ}$ are computed as
\begin{equation}\label{eq:S_allGm}
\begin{array}{c}
\begin{tabular}{cc}
  \begin{tabular}{r@{\,$=\,$}l}
    $S_c$ &
    $\displaystyle
    \ra{\begin{bmatrix}
       c I & c J & c K
    \end{bmatrix}}$ \\[3ex]
    $S_I$ &
    $\ra{c I}$ \\[3ex]
    $S_J$ &
    $\ra{c J}$
  \end{tabular}
  &
  \begin{tabular}{r@{\,$=\,$}l}
    $S_{c I}$ &
    $\displaystyle
    \ra{\begin{bmatrix}
       c J & c K
    \end{bmatrix}}$ \\[3ex]
    $S_{c J}$ &
    $\displaystyle
    \ra{\begin{bmatrix}
       c I & c K
    \end{bmatrix}}$ \\[3ex]
    $S_{I J}$ &
    $\displaystyle
    \ra{\begin{bmatrix}
       c I & c J
    \end{bmatrix}}$
  \end{tabular}
\end{tabular}

\vspace{3ex}

\\
\begin{tabular}{r@{\,$=\,$}l}
  $S_{cIJ}$ &
  $\ra{c K}$
\end{tabular}
\end{array}
\end{equation}

Each entropy in Eq.\ \eqref{eq:S_allGm} corresponds to the rank of a single-row matrix, which can take value $0$ or $1$, with a value of $1$ occurring if and only if the matrix contains a non-zero entry. However, the subgraph anchoring condition~\ref{condition: Anchoring} guarantees that $cI$, $cJ$, and $cK$ each contain at least one nonzero entry. Consequently, the rank of each matrix in Eq.\ \eqref{eq:S_allGm} is $1$, and thus each entropy $S_A = 1$. Therefore any graph state $\ket{G}$ will have an entropy vector that fails $MMI_{cIJ}$.
\end{proof} 

In the above section, we derived the structural conditions under which a graph state belonging to the family $G$ will violate the $MMI_{cIJ}$ as in Eq.\ \eqref{eq:MMI-Fail-G_m}. We demonstrated that any graph $G$, composed of a single central vertex connected to mutually disjoint and disconnected subgraphs, always possesses an entropy vector that fails $MMI_{cIJ}$. For each subsystem of $\ket{G}$ involved in the $MMI_{cIJ}$ evaluation, we showed how the block structure of the adjacency submatrix yields an entanglement entropy of $1$, leading to a guaranteed violation of this MMI instance. Accordingly, we established that $MMI_{cIJ}$ violation inevitably occurs in graph states with a star topology, as in Figure~\ref{fig:G_m-Graph}, satisfying the conditions of subgraph disjointness~\ref{condition: disjointness}, graph partition~\ref{condition: partition}, subgraph disconnectedness~\ref{condition: Disconnection}, and subgraph anchoring~\ref{condition: Anchoring}. We now generalize our proof to include graphs where the central vertex $c$ is promoted to a multi-vertex subgraph.

\subsection{Characterization of MMI Family Violation in a Larger Family of Graph States}
\label{section:generalizedStarGraphs}

We now extend the class of graphs $G$, introduced in Section \ref{section:single-vertex-center}, by generalizing the single central vertex $c$ to a multi-vertex subgraph $C$, as depicted in Figure \ref{fig:calG_M-Graph}. In these generalized star graphs, denoted $\mathcal{G}$, we identify the failure conditions for the MMI instance $MMI_{CIJ}$. We show that subgraph anchoring \ref{condition: Anchoring}, is no longer sufficient to guarantee failure of $MMI_{CIJ}$, and detail additional requirements for violation.

We define $\mathcal{G} = (V, E)$ as a graph on $n$ vertices, with a central subgraph $C$ consisting of vertices $\{c_1, c_2, \dots, c_m \}$. The subgraph $C$ is connected to $k$ non-empty subgraphs, 
\begin{equation}
    \mathcal{G}^{1},\, \mathcal{G}^{2},\, \dots,\, \mathcal{G}^{k},
\end{equation}
that are both vertex-disjoint and mutually disconnected, and where each $\mathcal{G}^{p} = (V^{p},E^{p})$ for $p = 1,2,\dots,k$. As before, for any pair of indices $p,q \in \{1,2,\ldots,k\}$ such that $p\neq q$, the graph $\mathcal{G}$ satisfies the following conditions: 
\begin{enumerate}[label=(\arabic*)]
    \item \label{condition: genDisjointness} If $v \in V^{p}$, then $v \notin V^{q}$ (subgraph disjointness).
    \item \label{condition: genPartition} The set $\{ C ,V^1, V^{2}, \cdots, V^k\}$ defines a partition of $V$ (graph partitioning).
    \item \label{condition: genDisconnection} For every $v \in \mathcal{V}^{p}$ and $w \in V^{q}$, we have $(v, w) \notin E$ (inter-subgraph disconnectedness).
\end{enumerate}
These conditions do not capture the generalization of subgraph anchoring which requires a more detailed discussion as below (see Observation \ref{obs:proper-generalization-G}). A schematic representation for graphs of the form $\mathcal{G}$ is given in Figure \ref{fig:calG_M-Graph}.
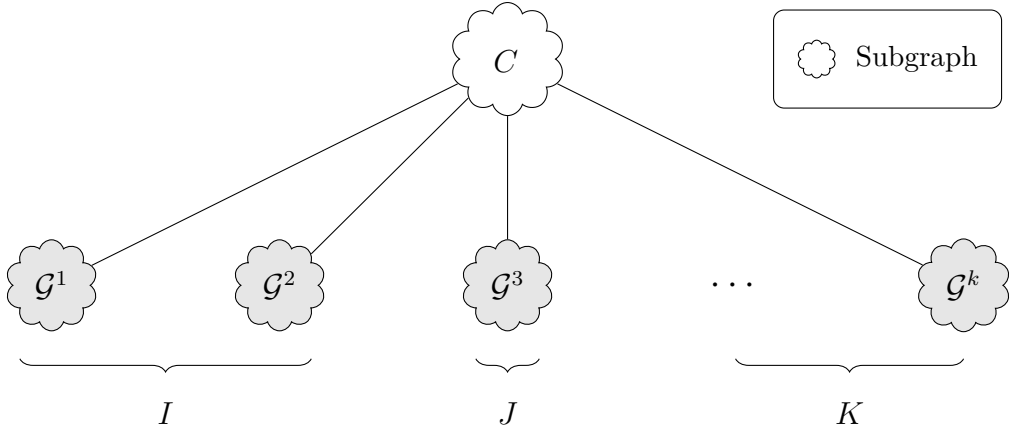
\begin{figure}[H]
  \centering
  \begin{tikzpicture}[
      every node/.style={cloud, cloud puffs=10, draw, align=center},
      >=stealth
  ]

    \node[fill=white, aspect=1] (center) at (0,0) {\hspace{0.3em}$C^{\phantom{1}}$};

    \node[fill=gray!20, aspect=1] (V1) at (-6,-3) {$\mathcal{G}^1$};
    \draw (center) -- (V1);

    \node[fill=gray!20, aspect=1] (V2) at (-3,-3) {$\mathcal{G}^2$};
    \draw (center) -- (V2);

    \node[fill=gray!20, aspect=1] (V3) at (0,-3)  {$\mathcal{G}^3$};
    \draw (center) -- (V3);

    \node[draw=none, fill=none] (dots) at (3,-3) {\large $\cdots$};

    \node[fill=gray!20, aspect=1] (Sl) at (6,-3) {$\mathcal{G}^k$};
    \draw (center) -- (Sl);

    \coordinate (subbot) at (V1.south);
    \begin{scope}[shift={(5,0)}]
        \node[draw, rectangle, rounded corners, minimum width=3cm, minimum height=1.3cm, fill=white] (legend) {};

        \node[draw=none, shape=rectangle, inner sep=0pt] at (legend.center) {%
            \tikz[baseline=-0.5ex, every node/.style={}]
              \node[cloud, cloud puffs=10, draw, fill=white, minimum size=0.5cm] {};
            \hspace{0.3em}\small Subgraph%
          };
    \end{scope}
    
    \draw[
      decorate,
      decoration={brace,mirror,amplitude=5pt}
    ]
      ([yshift=-2ex]V1.south west|-subbot) -- ([yshift=-2ex]V2.south east|-subbot)
      node[midway,yshift=-4ex,draw=none,fill=none] {$I$};
    
    \draw[
      decorate,
      decoration={brace,mirror,amplitude=5pt}
    ]
      ([yshift=-2ex]V3.south west|-subbot) -- ([yshift=-2ex]V3.south east|-subbot)
      node[midway,yshift=-4ex,draw=none,fill=none] {$J$};
    
    \draw[
      decorate,
      decoration={brace,mirror,amplitude=5pt}
    ]
      ([yshift=-2ex]dots.center|-subbot) -- ([yshift=-2ex]Sl.center|-subbot)
      node[midway,yshift=-4ex,draw=none,fill=none] {$K$};

  \end{tikzpicture}
  \caption{Graph $\mathcal{G}$, the generalization of $G$ in Figure \ref{fig:G_m-Graph}, with the central vertex $c$ promoted to the subgraph $C$. Every $\mathcal{G}$ satisfies the subgraph disjointness, graph partition, and inter-subgraph disconnectedness conditions, but not subgraph anchoring. Subsystems $C,\ I,\ $ and $J$ partition the vertices of $\mathcal{G}$, defining $MMI_{CIJ}$ in Eq.\ \eqref{eq:MMI-Fail-G_M}, and every path between $I$, $J$ and $K$ must only include vertices in $C$. Not every $\ket{\mathcal{G}}$ violates $MMI_{CIJ}$, as shown in Table \ref{tab:MMIcases}.}
  \label{fig:calG_M-Graph}
\end{figure}

Given graphs of the form $\mathcal{G}$, we now seek to determine under what conditions the state $\ket{\mathcal{G}}$ fails $MMI_{CIJ}$, 
\begin{equation}\label{eq:MMI-Fail-G_M}
\begin{aligned}
    S_{CI} + S_{CJ} + S_{IJ} \geq S_{C} + S_{I} + S_{J} + S_{CIJ}
\end{aligned},
\end{equation}
where $I$ and $J$ are defined as in Section \ref{section:single-vertex-center} with $I, J \subset V \setminus C \text{ such that } I \cap J = \varnothing \text{ and } \overline{C\sqcup I\sqcup J} \neq \varnothing$. As before, subgraph disjointness~\ref{condition: genPartition} and graph partitioning~\ref{condition: genPartition} guarantee that the choice of $C,\ I,$ and $J$ defines the partition $V = C \sqcup I \sqcup J \sqcup K$. We set $\mathcal{L} = C \sqcup I \sqcup J $, and express the local information submatrix $\Gamma^{\mathcal{L}}$ in block form
\begin{equation}\label{eq:InfoSubmatrix-G_M}
\Gamma^{\mathcal{L}} =
\begin{bmatrix}
C   C  & C  I & C  J & C  K \\[1mm]
\left(C I\right)^T & I   I & \boldsymbol{0} & \boldsymbol{0} \\[1mm]
\left(C J\right)^T & \boldsymbol{0} & J   J & \boldsymbol{0}
\end{bmatrix},
\end{equation}
immediately imposing inter-subgraph disconnectedness \ref{condition: genDisconnection}. The first row of $\Gamma^{\mathcal{L}}$ in Eq.\ \eqref{eq:InfoSubmatrix-G_M} is now a row of block matrices, instead of a row of numbers as was the case for $G$. 

We can reformulate $MMI_{CIJ}$ in terms of the dimensions of intersections and sums of block matrix column spaces, as detailed in Appendix \ref{A1:DimColEquations}. Applying this rewriting, $MMI_{CIJ}$ becomes
\begin{equation}\label{eq:MMI-Fail-ColSpaces-G_M}
\begin{split}
&\dimension{\col{C  I} \cap \col{C  K}} + \dimension{\col{C  J} \cap \col{C  K}}\\
& \qquad \qquad \qquad \qquad  \leq \dimension{\left( \col{C  I} + \col{C  J}\right) \cap \col{C  K}}.\\
\end{split}
\end{equation} 
In Eq.\ \eqref{eq:MMI-Fail-ColSpaces-G_M} $\col{M}$ denotes the column space of the matrix $M$, i.e. the linear subspace spanned by the columns of $M$. Thereby $\dimension{\col{CI} \cap \col{CK}}$ is the dimension of the subspace formed by the intersection of the column space of block matrix $CI$ and the column space of $CK$. Importantly, addition inside of the dimension evaluation in Eq.\ \eqref{eq:MMI-Fail-ColSpaces-G_M} corresponds to subspace addition, which is not simply the union of two subspaces, but instead the set of all possible sums of elements in both subspaces. 

For compactness, let $W_I = \col{C I}$, $W_J = \col{C J}$, and $W_K = \col{C K}$. Using Eqs.\ \eqref{subspaceDimension} and \eqref{subspaceInclusion} from Appendix \ref{A2:sufficient-Condition-General-Graph}, which detail the arithmetic for the dimension of sums of subspaces and the property of subspace inclusion, respectively, we derive the following inequality for column spaces:
\begin{equation}\label{eq:Gen-Intersection-Over-Sum-Inequality}
    \resizebox{0.97\linewidth}{!}{$
    \dimension{W_I \cap W_K}
    + \dimension{W_J \cap W_K}
    - \dimension{W_I \cap W_J \cap W_K}
    \leq
    \dimension{(W_I + W_J) \cap W_K}
    $}.
\end{equation}
Eq.\ \eqref{eq:Gen-Intersection-Over-Sum-Inequality} differs from the $MMI_{CIJ}$ reformulation in Eq.\ \eqref{eq:MMI-Fail-ColSpaces-G_M} only by the subtraction of term $\dimension{W_I\cap W_J \cap W_K}$. Another important property that enables our analysis of MMI is subspace distributivity. The subspaces $W_I$, $W_J$ and $W_K$ are said to be
distributive%
\footnote{If Eq.\ \eqref{eq:subspaceDistributivity} is true, then distributivity also holds for every permutation of $\{W_I,\ W_J,\ W_K\}$.} %
linear subspaces of $\mathbb{Z}_2^{|C|}$ if
\begin{equation}\label{eq:subspaceDistributivity}
    W_I \cap W_K + W_J\cap W_K = \left(W_I+W_J\right)\cap W_K.
\end{equation}

Using Eqs.\ \eqref{eq:Gen-Intersection-Over-Sum-Inequality} and \eqref{eq:subspaceDistributivity} we now classify distinct realizations of $\mathcal{G}$ according to whether $W_I,\ W_J$, and $W_K$ have a trivial intersection and are distributive. For the states $\ket{\mathcal{G}}$ belonging to each class, we deduce directly from Eq.\ \eqref{eq:Gen-Intersection-Over-Sum-Inequality} whether they strictly satisfy, saturate, or fail the specified $MMI_{CIJ}$ in Eq.\ \eqref{eq:MMI-Fail-G_M}. A proof for each case is provided in Appendix \ref{A2:sufficient-Condition-General-Graph}. Table \ref{tab:MMIcases} provides a summary of our classification.
\begin{table}[H]
\centering
\begin{tabular}{c c c c }
\toprule
\textbf{Case} & 
\textbf{Distributive} & 
\textbf{Nontrivial Intersection} & \textbf{$\boldsymbol{MMI_{CIJ}}$ Evaluation} \\
\midrule
1 & No &  No & Satisfies \\[6pt]
2 & Yes  & No & Saturates \\[6pt]
3 & Yes & Yes & Fails \\[6pt]
4 & No & Yes & Undetermined \\[6pt]
\bottomrule
\end{tabular}
\caption{Evaluation criteria for $MMI_{CIJ}$, in Eq.\ \eqref{eq:MMI-Fail-G_M}, for graph states $\ket{\mathcal{G}}$ based on the intersection $W_I\cap W_J \cap W_K$ and the distributivity of $W_I = \col{C I}$, $W_J = \col{C J}$, and $W_K = \col{C K}$ defined in Eq.\ \eqref{eq:subspaceDistributivity}.}
\label{tab:MMIcases}
\end{table}
Tables~\ref{tab:MMI-characterization-examples-1-3} and~\ref{tab:case-4-graph-examples} provide explicit $\mathcal{G}$ realizations of every case in Table~\ref{tab:MMIcases}.
\begin{table}[H]
  \centering
  \tikzset{
  stdnode/.style={circle, draw, fill=white, inner sep=1.6pt, minimum size=18pt},
  stdlines/.style={line width=0.9pt}
  }
  \begin{tabular}{c c}
  \toprule
  \textbf{Case} & $\boldsymbol{\mathcal{G}}$\textbf{ Representative} \\
  \midrule
  1 &
  \begin{minipage}{0.72\linewidth}\centering \vspace{10pt}
  \begin{tikzpicture}[stdlines]
  \node[stdnode] (i1) at (-3,0) {$i_1$};
  \node[stdnode] (c1) at (-1.5,0) {$c_1$};
  \node[stdnode] (k1) at ( 0,0) {$k_1$};
  \node[stdnode] (c2) at ( 1.5,0) {$c_2$};
  \node[stdnode] (j1) at ( 3,0) {$j_1$};
  \draw (i1)--(c1)--(k1)--(c2)--(j1);
  \end{tikzpicture}\\[2pt]
  \footnotesize Non-distributive; trivial intersection
  \end{minipage}
  \\[30pt]
  2 &
  \begin{minipage}{0.72\linewidth}\centering
  \begin{tikzpicture}[stdlines]
  \node[stdnode] (c1) at (-2, 1.6) {$c_1$};
  \node[stdnode] (c2) at ( 0, 1.6) {$c_2$};
  \node[stdnode] (c3) at ( 2, 1.6) {$c_3$};
  \node[stdnode] (i1) at (-2, 0) {$i_1$};
  \node[stdnode] (j1) at ( 0, 0) {$j_1$};
  \node[stdnode] (k1) at ( 2, 0) {$k_1$};
  \draw (c1)--(i1);
  \draw (c2)--(j1);
  \draw (c3)--(k1);
  \end{tikzpicture}\\
  \footnotesize Distributive; trivial intersection
  \end{minipage}\\[30pt]
  \\
  3 &
  \begin{minipage}{0.72\linewidth}\centering
  \begin{tikzpicture}[stdlines]
  \path[use as bounding box] (-3.2,-0.5) rectangle (3.2,2.3);
  \begin{scope}[xshift=1mm] 
  \node[stdnode] (c1) at (-1.2, 1.8) {$c_1$};
  \node[stdnode] (c2) at ( 0.0, 1.8) {$c_2$};
  \node[stdnode] (c3) at ( 1.2, 1.8) {$c_3$};
  \node[stdnode] (i2) at (-3.2, 0) {$i_1$};
  \node[stdnode] (i1) at (-2.0, 0) {$i_2$};
  \node[stdnode] (j1) at ( 0.0, 0) {$j_1$};
  \node[stdnode] (k1) at ( 2.0, 0) {$k_1$};
  \node[stdnode] (k2) at ( 3.2, 0) {$k_2$};
  \draw (c2)--(i1);
  \draw (c2)--(j1);
  \draw (c2)--(k1);
  \draw (c1)--(i2);
  \draw (c3)--(k2);
  \end{scope}
  \end{tikzpicture}\\[2pt]
  \footnotesize Distributive; non-trivial intersection
  \end{minipage} \vspace{10pt}
  \\
  \bottomrule
  \end{tabular}
  \caption{$\mathcal{G}$ instances illustrating Cases 1–3 of the $MMI_{CIJ}$ \eqref{eq:MMI-Fail-G_M} categorization for $\ket{\mathcal{G}}$ states, as listed in Table \ref{tab:MMIcases}. Each example depicts the subspace structure defined by the column spaces of $CI$, $CJ$, and $CK$, associated with the tripartite subsystem $\{C,I,J\}$. Within the partition $\{C,I,J, K\}$, every vertex in each graph belongs to the subsystem that labels it (i.e., $c_p \in C$, $i_q \in I$, and $j_m \in J$ for all $c_p, i_q, j_m \in V$). Notice that the subgraph induced by $C$ is not required to be connected.}
  \label{tab:MMI-characterization-examples-1-3}
\end{table}
\begin{table}[H]
\centering
\setlength{\tabcolsep}{6pt}
\renewcommand{\arraystretch}{1}
\tikzset{
  stdnode/.style={circle, draw, fill=white, inner sep=1.8pt, minimum size=32pt, font=\fontsize{22pt}{22pt}\selectfont},
  stdlines/.style={line width=1.5pt}
}
\resizebox{\textwidth}{!}{%
\begin{tabular}{@{}c@{\hspace{2em}}c@{}}  
\toprule
\textbf{$\boldsymbol{\mathcal{G}}$\textbf{ Representative}} & \textbf{Isomorphic Form} \\
\midrule

\multicolumn{2}{c}{\small\bfseries Saturates}\\[4pt]

\begin{minipage}{0.44\linewidth}\centering \vspace{10pt}
\resizebox{\linewidth}{!}{%
\tikzset{bignode/.style=stdnode}
\begin{tikzpicture}[stdlines,scale=1.7619,transform shape=false]
  \node[bignode] (c2) at (-1.2, 3.0) {$c_2$};
  \node[bignode] (c1) at ( 0.0, 3.0) {$c_1$};
  \node[bignode] (c3) at ( 1.2, 3.0) {$c_3$};
  \node[bignode] (i2) at (-4.2, 0.0) {$i_1$};
  \node[bignode] (i1) at (-3.0, 0.0) {$i_2$};
  \node[bignode] (j1) at (-0.6, 0.0) {$j_1$};
  \node[bignode] (j2) at ( 0.6, 0.0) {$j_2$};
  \node[bignode] (k1) at ( 3.0, 0.0) {$k_1$};
  \node[bignode] (k2) at ( 4.2, 0.0) {$k_2$};
  \draw (c1)--(i1);
  \draw (c1)--(i2);
  \draw (c1)--(j1);
  \draw (c1)--(k1);
  \draw (c2)--(k2);
  \draw (k2)--(c3);
  \draw (c3)--(j2);
\end{tikzpicture}}
\end{minipage}
&
\begin{minipage}{0.44\linewidth}\centering \vspace{10pt}
\resizebox{\linewidth}{!}{%
\begin{tikzpicture}[stdlines,scale=2.0556,transform shape=false]
  \node[stdnode] (c1) at ( 0.0,  0.8) {$c_1$};
  \node[stdnode] (i2) at (-1.4,  2) {$i_1$};
  \node[stdnode] (k1) at ( 1.4,  2) {$k_1$};
  \node[stdnode] (i1) at (-1.4, -0.5) {$i_2$};
  \node[stdnode] (j1) at ( 1.4, -0.5) {$j_1$};
  \draw (i2)--(c1)--(k1);
  \draw (i1)--(c1)--(j1);
  \node[stdnode] (j2) at (-3.6, -2) {$j_2$};
  \node[stdnode] (c3) at (-1.2, -2) {$c_3$};
  \node[stdnode] (k2) at ( 1.2, -2) {$k_2$};
  \node[stdnode] (c2) at ( 3.6, -2) {$c_2$};
  \draw (j2)--(c3)--(k2)--(c2);
\end{tikzpicture}}
\end{minipage}
\\[80pt]

\multicolumn{2}{c}{\small\bfseries Fails}\\[4pt]

\begin{minipage}{0.44\linewidth}\centering \vspace{20pt}
\resizebox{\linewidth}{!}{%
\tikzset{bignode/.style=stdnode}
\begin{tikzpicture}[stdlines,scale=1.4231,transform shape=false]
  \node[bignode] (c1) at (-2.4, 4) {$c_1$};
  \node[bignode] (c2) at (-0.8, 4) {$c_2$};
  \node[bignode] (c3) at ( 0.8, 4) {$c_3$};
  \node[bignode] (c4) at ( 2.4, 4) {$c_4$};
  \node[bignode] (i3) at (-5.2, 0.0) {$i_3$};
  \node[bignode] (i2) at (-4.0, 0.0) {$i_2$};
  \node[bignode] (i1) at (-2.8, 0.0) {$i_1$};
  \node[bignode] (j1) at (-1.2, 0.0) {$j_1$};
  \node[bignode] (j2) at ( 0.0, 0.0) {$j_2$};
  \node[bignode] (j3) at ( 1.2, 0.0) {$j_3$};
  \node[bignode] (k1) at ( 2.8, 0.0) {$k_1$};
  \node[bignode] (k2) at ( 4.0, 0.0) {$k_2$};
  \node[bignode] (k3) at ( 5.2, 0.0) {$k_3$};
  \draw (c1)--(i1);
  \draw (c3)--(i2);
  \draw (c4)--(i3);
  \draw (c2)--(j1);
  \draw (c3)--(j2);
  \draw (c4)--(j3);
  \draw (c1)--(k1);
  \draw (c2)--(k1);
  \draw (c3)--(k2);
  \draw (c4)--(k3);
\end{tikzpicture}}
\end{minipage}
&
\begin{minipage}{0.44\linewidth}\centering \vspace{20pt}
\resizebox{\linewidth}{!}{%
\begin{tikzpicture}[stdlines,scale=1.6087,transform shape=false]
  \node[stdnode] (c3) at (-3.0, 1.8) {$c_3$};
  \node[stdnode] (k2) at (-4.6, 3.0) {$k_2$};
  \node[stdnode] (i2) at (-1.4, 3.0) {$i_2$};
  \node[stdnode] (j2) at (-3.0, -0.2) {$j_2$};
  \draw (k2)--(c3)--(i2);
  \draw (c3)--(j2);
  \node[stdnode] (c4) at ( 3.0, 1.8) {$c_4$};
  \node[stdnode] (k3) at ( 1.4, 3.0) {$k_3$};
  \node[stdnode] (i3) at ( 4.6, 3.0) {$i_3$};
  \node[stdnode] (j3) at ( 3.0, -0.2) {$j_3$};
  \draw (k3)--(c4)--(i3);
  \draw (c4)--(j3);
  \node[stdnode] (i1) at (-4.0, -2) {$i_1$};
  \node[stdnode] (c1) at (-2.0, -2) {$c_1$};
  \node[stdnode] (k1) at ( 0.0, -2) {$k_1$};
  \node[stdnode] (c2) at ( 2.0, -2) {$c_2$};
  \node[stdnode] (j1) at ( 4.0, -2) {$j_1$};
  \draw (i1)--(c1)--(k1)--(c2)--(j1);
\end{tikzpicture}}
\end{minipage}
\\[80pt]

\multicolumn{2}{c}{\small\bfseries Satisfies}\\[4pt]

\begin{minipage}{0.44\linewidth}\centering \vspace{10pt}
\resizebox{\linewidth}{!}{%
\tikzset{bignode/.style=stdnode}
\begin{tikzpicture}[stdlines, baseline=(current bounding box.center)]
  \node[bignode] (c1) at (-3.0, 6.0) {$c_1$};
  \node[bignode] (c2) at (-1.8, 6.0) {$c_2$};
  \node[bignode] (c3) at (-0.6, 6.0) {$c_3$};
  \node[bignode] (c4) at ( 0.6, 6.0) {$c_4$};
  \node[bignode] (c5) at ( 1.8, 6.0) {$c_5$};
  \node[bignode] (c6) at ( 3.0, 6.0) {$c_6$};

  \node[bignode] (i1) at (-8.075, 0.0) {$i_1$};
  \node[bignode] (i2) at (-6.725, 0.0) {$i_2$};
  \node[bignode] (i3) at (-5.375, 0.0) {$i_3$};
  \node[bignode] (i4) at (-4.025, 0.0) {$i_4$};

  \node[bignode] (j1) at (-2.025, 0.0) {$j_1$};
  \node[bignode] (j2) at (-0.675, 0.0) {$j_2$};
  \node[bignode] (j3) at ( 0.675, 0.0) {$j_3$};
  \node[bignode] (j4) at ( 2.025, 0.0) {$j_4$};

  \node[bignode] (k1) at ( 4.025, 0.0) {$k_1$};
  \node[bignode] (k2) at ( 5.375, 0.0) {$k_2$};
  \node[bignode] (k3) at ( 6.725, 0.0) {$k_3$};

  \draw (c1)--(i1);  \draw (c2)--(i2);  \draw (c3)--(i3);  \draw (c5)--(i4);
  \draw (c1)--(j1);  \draw (c2)--(j2);  \draw (c4)--(j3);  \draw (c6)--(j4);
  \draw (c1)--(k1);  \draw (c2)--(k1);
  \draw (c3)--(k2);  \draw (c4)--(k2);
  \draw (c5)--(k3);  \draw (c6)--(k3);
\end{tikzpicture}}
\end{minipage}
&
\begin{minipage}{0.44\linewidth}\centering \vspace{10pt}
\resizebox{\linewidth}{!}{%
\begin{tikzpicture}[stdlines,scale=1.85,transform shape=false,baseline=(current bounding box.center)]
  \node[stdnode] (c1) at (-3.0, 1.4) {$c_1$};
  \node[stdnode] (k1) at ( 0.0, 1.4) {$k_1$};
  \node[stdnode] (c2) at ( 3.0, 1.4) {$c_2$};

  \node[stdnode] (i1) at (-4, 2.6) {$i_1$};
  \node[stdnode] (j1) at (-4, 0.2) {$j_1$};
  \node[stdnode] (j2) at ( 4, 2.6) {$j_2$};
  \node[stdnode] (i2) at ( 4, 0.2) {$i_2$};

  \draw (i1)--(c1)--(k1)--(c2)--(j2);
  \draw (j1)--(c1);
  \draw (i2)--(c2);

  \node[stdnode] (i4) at (-4.0, -1.0) {$i_4$};
  \node[stdnode] (c5) at (-2.0, -1.0) {$c_5$};
  \node[stdnode] (k3) at ( 0.0, -1.0) {$k_3$};
  \node[stdnode] (c6) at ( 2.0, -1.0) {$c_6$};
  \node[stdnode] (j4) at ( 4.0, -1.0) {$j_4$};
  \draw (i4)--(c5)--(k3)--(c6)--(j4);

  \node[stdnode] (i3) at (-4.0, -2) {$i_3$};
  \node[stdnode] (c3) at (-2.0, -2) {$c_3$};
  \node[stdnode] (k2) at ( 0.0, -2) {$k_2$};
  \node[stdnode] (c4) at ( 2.0, -2) {$c_4$};
  \node[stdnode] (j3) at ( 4.0, -2) {$j_3$};
  \draw (i3)--(c3)--(k2)--(c4)--(j3);
\end{tikzpicture}}
\end{minipage}\vspace{10pt}
\\
\bottomrule
\end{tabular}}
\caption{$\mathcal{G}$ representatives illustrating every $MMI_{CIJ}$~\eqref{eq:MMI-Fail-G_M} outcome belonging to Case~4 (undetermined) of the categorization shown in Table~\ref{tab:MMIcases}. Each representative is accompanied by an isomorphic form that makes its disconnected structure explicit. Every example exhibits non-distributive and nontrivially intersecting column spaces of $CI$, $CJ$, and $CK$ (see Table~\ref{tab:Case4Examples} in Appendix~\ref{A2:sufficient-Condition-General-Graph} for additional details), all associated with the tripartite subsystem $\{C,I,J\}$. Within the partition $\{C,I,J,K\}$, every vertex in each graph belongs to the subsystem that labels it (i.e., $c_p \in C$, $i_q \in I$, and $j_m \in J$ for all $c_p, i_q, j_m \in V$). The outcome of $MMI_{CIJ}$~\eqref{eq:MMI-Fail-G_M} for $\ket{\mathcal{G}}$ labels each $\mathcal{G}$.}
\label{tab:case-4-graph-examples}
\end{table}
The results of Table~\ref{tab:MMIcases} can be summarized by the following observations, each referring to $MMI_{CIJ}$ in Eq.\ \eqref{eq:MMI-Fail-G_M}.
\begin{observation}\label{obs:cannot-strictly-satisfy-generalized-graph}
If $\col{CI}$, $\col{CJ}$, and $\col{CK}$ are distributive, then $\ket{\mathcal{G}}$ cannot strictly satisfy $MMI_{CIJ}$~\eqref{eq:MMI-Fail-G_M}.
\end{observation}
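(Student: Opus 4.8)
The plan is to argue entirely within the subspace reformulation of $MMI_{CIJ}$, writing $W_I=\col{CI}$, $W_J=\col{CJ}$, and $W_K=\col{CK}$ as in the surrounding text. By Eq.~\eqref{eq:MMI-Fail-ColSpaces-G_M}, the state $\ket{\mathcal{G}}$ strictly satisfies $MMI_{CIJ}$ exactly when
\[
\dimension{W_I\cap W_K}+\dimension{W_J\cap W_K}<\dimension{(W_I+W_J)\cap W_K}.
\]
My strategy is to show that distributivity rules out this strict inequality by collapsing the left-hand sum into a single subspace dimension and comparing it directly against the right-hand side. The whole argument is therefore a short dimension count, with distributivity entering at exactly one step.

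First I would apply the dimension formula for the sum of two subspaces (Eq.~\eqref{subspaceDimension}) to the pair $W_I\cap W_K$ and $W_J\cap W_K$, using the elementary identity $(W_I\cap W_K)\cap(W_J\cap W_K)=W_I\cap W_J\cap W_K$, which yields
\[
\dimension{(W_I\cap W_K)+(W_J\cap W_K)}=\dimension{W_I\cap W_K}+\dimension{W_J\cap W_K}-\dimension{W_I\cap W_J\cap W_K}.
\]
Next I would invoke the distributivity hypothesis in the form of Eq.~\eqref{eq:subspaceDistributivity}, which asserts the \emph{equality of subspaces} $(W_I\cap W_K)+(W_J\cap W_K)=(W_I+W_J)\cap W_K$ and hence equates their dimensions. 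Substituting this into the previous line and rearranging gives
\[
\dimension{W_I\cap W_K}+\dimension{W_J\cap W_K}=\dimension{(W_I+W_J)\cap W_K}+\dimension{W_I\cap W_J\cap W_K}.
\]

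Because a dimension is nonnegative, the final term on the right is $\geq 0$, so the left-hand side is bounded below by $\dimension{(W_I+W_J)\cap W_K}$. This is precisely the negation of the strict inequality above, so $\ket{\mathcal{G}}$ cannot strictly satisfy $MMI_{CIJ}$; it saturates when $\dimension{W_I\cap W_J\cap W_K}=0$ and fails when this dimension is positive, recovering Cases~2 and~3 of Table~\ref{tab:MMIcases}. I expect no genuine obstacle here: the only points requiring care are that Eq.~\eqref{eq:subspaceDistributivity} must be used as an equality of subspaces (not merely of dimensions) and that the double intersection simplifies to the triple intersection. Once those are noted, the result is an immediate consequence of the modular dimension formula already underlying Eq.~\eqref{eq:Gen-Intersection-Over-Sum-Inequality}, where distributivity simply upgrades that inequality to an equality.
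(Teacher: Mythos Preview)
Your proof is correct and follows essentially the same approach as the paper's Appendix~\ref{A2:sufficient-Condition-General-Graph}: both combine the modular dimension formula with the distributivity identity of Eq.~\eqref{eq:subspaceDistributivity} to compare $\dimension{W_I\cap W_K}+\dimension{W_J\cap W_K}$ against $\dimension{(W_I+W_J)\cap W_K}$. Your presentation is slightly more unified, deriving the single identity $\text{LHS}=\text{RHS}+\dimension{W_I\cap W_J\cap W_K}$ and reading off Cases~2 and~3 at once, whereas the paper treats the trivial and nontrivial intersection cases separately; but the mathematical content is the same.
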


\begin{observation}\label{obs:necessary-condition-generalized-graph}
The condition $\col{CI} \cap \col{CJ} \cap \col{CK} \neq \varnothing$ is necessary for $\ket{\mathcal{G}}$ to fail $MMI_{CIJ}$~\eqref{eq:MMI-Fail-G_M}.
\end{observation}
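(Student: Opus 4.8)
The plan is to prove the contrapositive: if the triple intersection $\col{CI}\cap\col{CJ}\cap\col{CK}$ is trivial, then $\ket{\mathcal{G}}$ cannot fail $MMI_{CIJ}$. Adopting the shorthand $W_I=\col{CI}$, $W_J=\col{CJ}$, and $W_K=\col{CK}$ introduced for Eq.\ \eqref{eq:Gen-Intersection-Over-Sum-Inequality}, I would first record the precise meaning of failure. Since the reformulation in Appendix~\ref{A1:DimColEquations} shows that $MMI_{CIJ}$ \eqref{eq:MMI-Fail-G_M} is equivalent to Eq.\ \eqref{eq:MMI-Fail-ColSpaces-G_M}, the state $\ket{\mathcal{G}}$ fails $MMI_{CIJ}$ exactly when that reformulation is strictly reversed, i.e.\ when
\begin{equation*}
\dimension{W_I \cap W_K} + \dimension{W_J \cap W_K} > \dimension{(W_I + W_J)\cap W_K}.
\end{equation*}

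Next I would invoke the unconditional inequality Eq.\ \eqref{eq:Gen-Intersection-Over-Sum-Inequality}, valid for any three subspaces, which reads
\begin{equation*}
\dimension{W_I \cap W_K} + \dimension{W_J \cap W_K} - \dimension{W_I \cap W_J \cap W_K} \leq \dimension{(W_I + W_J)\cap W_K}.
\end{equation*}
Now suppose the triple intersection is trivial, so $\dimension{W_I \cap W_J \cap W_K}=0$. Substituting this value collapses Eq.\ \eqref{eq:Gen-Intersection-Over-Sum-Inequality} precisely onto the satisfied form of Eq.\ \eqref{eq:MMI-Fail-ColSpaces-G_M}, namely
\begin{equation*}
\dimension{W_I \cap W_K} + \dimension{W_J \cap W_K} \leq \dimension{(W_I + W_J)\cap W_K}.
\end{equation*}
Hence $MMI_{CIJ}$ holds (strictly or with saturation), so $\ket{\mathcal{G}}$ does not fail it. Contrapositively, failure of $MMI_{CIJ}$ forces $\dimension{W_I \cap W_J \cap W_K}>0$, which is exactly the claimed necessity of a nontrivial intersection and accounts for Cases~1 and~2 of Table~\ref{tab:MMIcases}.

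The argument is essentially a one-line deduction from the already-established inequality Eq.\ \eqref{eq:Gen-Intersection-Over-Sum-Inequality}, so I do not anticipate a serious obstacle; the only points requiring care are in the first step. I must correctly identify ``failure'' with the strict reversal of the $\leq$ in Eq.\ \eqref{eq:MMI-Fail-ColSpaces-G_M}, rather than of the original $\geq$ in Eq.\ \eqref{eq:MMI-Fail-G_M}, since the rearrangement in Appendix~\ref{A1:DimColEquations} flips the inequality direction. I would also flag the mild abuse of notation by which $\col{CI}\cap\col{CJ}\cap\col{CK}\neq\varnothing$ is read as the intersection being \emph{nontrivial}, i.e.\ strictly larger than $\{0\}$, since an intersection of linear subspaces always contains the zero vector and is never literally empty.
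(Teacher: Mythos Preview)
Your proposal is correct and follows essentially the same approach as the paper: the paper establishes the necessity of a nontrivial intersection through Cases~1 and~2 of Appendix~\ref{A2:sufficient-Condition-General-Graph}, which together show that when $\dimension{W_1\cap W_2\cap W_3}=0$ the general inequality Eq.~\eqref{eq:Appendix-Gen-Intersection-Over-Sum-Inequality} collapses to the satisfied (or saturated) form of $MMI_{CIJ}$, exactly as you argue via the contrapositive. Your flagging of the $\neq\varnothing$ notation as meaning ``nontrivial'' is also apt.
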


\begin{observation}\label{obs:sufficient-generalized-star}
If $\col{CI}$, $\col{CJ}$, and $\col{CK}$ are distributive and have a nontrivial intersection, then $\ket{\mathcal{G}}$ fails $MMI_{CIJ}$~\eqref{eq:MMI-Fail-G_M}.
\end{observation}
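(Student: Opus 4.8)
The plan is to read the failure of $MMI_{CIJ}$ directly off the column-space reformulation in Eq.\ \eqref{eq:MMI-Fail-ColSpaces-G_M}. With $W_I=\col{CI}$, $W_J=\col{CJ}$, and $W_K=\col{CK}$, the inequality $MMI_{CIJ}$ holds precisely when $\dimension{W_I\cap W_K}+\dimension{W_J\cap W_K}\leq\dimension{(W_I+W_J)\cap W_K}$, and therefore \emph{fails} exactly when the strict reverse inequality holds. The anchor for the argument is the general bound of Eq.\ \eqref{eq:Gen-Intersection-Over-Sum-Inequality}, valid for any three subspaces, which differs from the $MMI_{CIJ}$ condition only by the subtracted term $\dimension{W_I\cap W_J\cap W_K}$. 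That triple-intersection term is thus the only quantity capable of pushing the evaluation into the failing regime, so the whole proof amounts to showing it is strictly positive and that distributivity forces it to contribute with the right sign.

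First I would convert the distributivity hypothesis \eqref{eq:subspaceDistributivity} into a dimensional identity. Distributivity is the subspace equality $(W_I\cap W_K)+(W_J\cap W_K)=(W_I+W_J)\cap W_K$, so the two sides share the same dimension. Applying the dimension-of-sum formula \eqref{subspaceDimension} to the left-hand sum, together with the elementary identity $(W_I\cap W_K)\cap(W_J\cap W_K)=W_I\cap W_J\cap W_K$, yields
\[
\dimension{W_I\cap W_K}+\dimension{W_J\cap W_K}=\dimension{(W_I+W_J)\cap W_K}+\dimension{W_I\cap W_J\cap W_K}.
\]
In other words, distributivity upgrades the general inequality \eqref{eq:Gen-Intersection-Over-Sum-Inequality} to an equality, so that the only slack between the two sides of the $MMI_{CIJ}$ condition is precisely the dimension of the triple intersection.

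The final step invokes the nontrivial-intersection hypothesis: $\col{CI}\cap\col{CJ}\cap\col{CK}\neq\varnothing$ forces $\dimension{W_I\cap W_J\cap W_K}\geq1$. Substituting into the identity above gives $\dimension{W_I\cap W_K}+\dimension{W_J\cap W_K}>\dimension{(W_I+W_J)\cap W_K}$, which is exactly the strict reversal of Eq.\ \eqref{eq:MMI-Fail-ColSpaces-G_M}. Hence $\ket{\mathcal{G}}$ fails $MMI_{CIJ}$, establishing Case~3 of Table~\ref{tab:MMIcases}.

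The substantive content is not in this observation but in the translation of Eq.\ \eqref{eq:MMI-Fail-G_M} into the column-space form \eqref{eq:MMI-Fail-ColSpaces-G_M}, carried out in Appendix~\ref{A1:DimColEquations}; granting that dictionary, the observation reduces to a one-line dimension count. The only places demanding care are the bookkeeping of the inequality orientation, so that the triple-intersection term lands on the correct side and genuinely signals failure rather than saturation, and the remark that both the dimension-of-sum formula and the modular-law manipulation of \eqref{eq:subspaceDistributivity} remain valid over $\mathbb{Z}_2$, as they do over any field.
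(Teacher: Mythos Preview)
Your proof is correct and follows essentially the same route as the paper's Case~3 in Appendix~\ref{A2:sufficient-Condition-General-Graph}: both combine distributivity $(W_I\cap W_K)+(W_J\cap W_K)=(W_I+W_J)\cap W_K$ with the dimension-of-sum formula and the nontrivial triple intersection to force the strict reversal of Eq.~\eqref{eq:MMI-Fail-ColSpaces-G_M}. The only cosmetic difference is that you first extract the exact identity and then bound, whereas the paper phrases the same step as ``dimensional subadditivity'' followed by substitution of the distributive identity.
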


For consistency, we further verify that any graph of the class $\mathcal{G}$ that violates $MMI_{CIJ}$~\eqref{eq:MMI-Fail-G_M} is of type $G$ when its central subgraph $C$ consists of a single vertex, as in Section~\ref{section:single-vertex-center}. When subgraph $C$ contains only a single vertex $c$, then $W_I$, $W_J$, and $W_K$ are subspaces of the one-dimensional vector space $\mathbb{Z}_2$. By Observation~\ref{obs:necessary-condition-generalized-graph}, these subspaces must have a nontrivial intersection for $\ket{\mathcal{G}}$ in order to fail $MMI_{CIJ}$~\eqref{eq:MMI-Fail-G_M}. Since the only subspaces of $\mathbb{Z}_2$ are 
\begin{equation}
    \left\{ \begin{bmatrix}
        0
    \end{bmatrix} \right\}  \text{ and } \left\{ \begin{bmatrix}
        0
    \end{bmatrix},  \begin{bmatrix}
        1
    \end{bmatrix} \right\},
\end{equation}
the condition for $W_I$, $W_J$, and $W_K$ to share a nontrivial intersection is satisfied only when
\begin{equation}
    W_1= W_2= W_3 =\left\{ \begin{bmatrix}
        0
    \end{bmatrix}, \begin{bmatrix}
        1
    \end{bmatrix} \right\},
\end{equation}
which is precisely the subgraph anchoring condition \ref{condition: Anchoring} from Section~\ref{section:single-vertex-center}. Therefore, any graph $\mathcal{G}$ that falls into Case $3$ of Table~\ref{tab:MMIcases} reduces to a graph of the form $G$ when $|C|=1$.

More generally, the evaluation of $MMI_{CIJ}$~\eqref{eq:MMI-Fail-G_M} for $\ket{\mathcal{G}}$ trivially reduces to Case $3$ in Table~\ref{tab:MMIcases} when $W_I = W_J = W_K$, failing $MMI_{CIJ}$ by all entropies being equal (see \ref{A1:DimColEquations}). Since this particular $MMI_{CIJ}$ violation is identical to the violation of $MMI_{cIJ}~\eqref{eq:MMI-Fail-G_m}$ exhibited by states $\ket{G}$, we define the following generalization of $\ket{G}$ that is guaranteed to always fail $MMI_{CIJ}$~\eqref{eq:MMI-Fail-G_M}.
\begin{observation}
\label{obs:proper-generalization-G}
    Define $G^{\prime}$ to have the following properties:
    \textnormal{
    \begin{enumerate}[label=(\arabic*)]
    \item Is a generalized star $\mathcal{G}$ under the partition $\{C,I,J,K\}$.
    \item The $\{C,I,J,K\}$ partition obeys
    $$
    \col{CI}= \col{CJ}= \col{CK} \quad \text{ (generalized subgraph anchoring).}
    $$
\end{enumerate}}
Then, the state $\ket{G^{\prime}}$ will always fail $MMI_{CIJ}$~\eqref{eq:MMI-Fail-G_M} as a result of
\begin{equation}
    S_A = \ra{CI}= \ra{CJ}= \ra{CJ}, \text{ for every } A \subset CIJ.
\end{equation}
\end{observation}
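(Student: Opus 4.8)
The plan is to compute each of the seven entropies appearing in $MMI_{CIJ}$~\eqref{eq:MMI-Fail-G_M} directly from the block form of $\Gamma^{\mathcal{L}}$ in Eq.~\eqref{eq:InfoSubmatrix-G_M}, generalizing the single-vertex calculation of Eq.~\eqref{eq:S_allGm}. For each subsystem $A \in \{C, I, J, CI, CJ, IJ, CIJ\}$, the submatrix $\Gamma^A$ is obtained by restricting $\Gamma^{\mathcal{L}}$ to rows indexed by $A$ and columns indexed by $\overline{A}$. The key structural observation is that every one of these subsystems contains each of $C$, $I$, $J$, and $K$ either wholly or not at all, so the self-adjacency blocks $CC$, $II$, $JJ$, and $KK$ are always intra-subsystem and drop out, while inter-subgraph disconnectedness~\ref{condition: genDisconnection} forces the blocks $IJ$, $IK$, $JK$ to vanish. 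First I would use these two facts to reduce each $\Gamma^A$ to a (possibly transposed) horizontal concatenation of the cross blocks $CI$, $CJ$, $CK$.

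Carrying this out gives $S_C = \ra{[\,CI\ CJ\ CK\,]}$, $S_{CI} = \ra{[\,CJ\ CK\,]}$, $S_{CJ} = \ra{[\,CI\ CK\,]}$, and $S_{IJ} = \ra{[\,CI\ CJ\,]}$, together with the single-block values $S_I = \ra{CI}$, $S_J = \ra{CJ}$, and $S_{CIJ} = \ra{CK}$. The mixed terms $S_I$, $S_J$, and $S_{IJ}$ arise from vertically stacked transpose blocks $(CI)^T$, $(CJ)^T$, but their rank equals that of the corresponding horizontal concatenation since rank is transpose-invariant. The next step is to impose generalized subgraph anchoring, $\col{CI} = \col{CJ} = \col{CK} =: W$. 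Because the rank of a matrix is the dimension of its column space, and the column space of a horizontal concatenation is the sum of the summands' column spaces, every concatenation above has column space $W + \dots + W = W$. Hence all seven entropies collapse to the single value $r := \dim W$.

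Substituting $S_A = r$ for every $A \subset CIJ$ into $MMI_{CIJ}$~\eqref{eq:MMI-Fail-G_M} yields $3r \geq 4r$, i.e.\ $r \leq 0$. Since $r = \dim W \geq 0$, the inequality is strictly violated precisely when $W \neq \{0\}$, which is exactly the content of genuine anchoring: if any one of $CI$, $CJ$, $CK$ is nonzero then all are, as they share a column space, so $r \geq 1$ and $\ket{G'}$ fails $MMI_{CIJ}$. I expect the only real care to be needed in the bookkeeping of the first step, namely tracking which blocks survive for each of the seven subsystems and confirming the transpose identifications for $S_I$, $S_J$, and $S_{IJ}$; once this is settled, the collapse to a common rank and the resulting $3r \geq 4r$ violation are immediate. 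The degenerate $W = \{0\}$ case, in which all entropies vanish and MMI is merely saturated, is the natural analogue of the small-$n$ $GHZ$ exception already noted for the single-vertex family, and I would flag it as the lone excluded possibility.
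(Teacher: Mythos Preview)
Your proposal is correct and follows essentially the same approach as the paper: you derive the block-rank formulas for the seven entropies (matching the paper's Appendix~\ref{A1:DimColEquations}, Eq.~\eqref{eq-block:MMI-G_M-Entropies}) and then observe that equal column spaces collapse all ranks to a common value $r$, yielding the $3r \geq 4r$ violation the paper summarizes as ``failing $MMI_{CIJ}$ by all entropies being equal.'' Your flagging of the degenerate $W=\{0\}$ case is also consistent with the paper's implicit nontriviality assumption inherited from the subgraph anchoring condition.
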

The characteristic $MMI_{CIJ}$~\eqref{eq:MMI-Fail-G_M} failure exhibited by $\ket{G^{\prime}}$ in Observation~\ref{obs:proper-generalization-G} can be interpreted as a generalized $GHZ$-type MMI violation. Furthermore, for a state $\ket{\mathcal{G}}$, the column spaces of $CI$, $CJ$, and $CK$ are trivially distributive whenever two or more subspaces are equal, preventing the state from strictly satisfying $MMI_{CIJ}~\eqref{eq:MMI-Fail-G_M}$ by Observation~\ref{obs:cannot-strictly-satisfy-generalized-graph}.

In this section, we extended the central-vertex star graphs $G$ introduced in Section~\ref{section:single-vertex-center} to a broader class of graphs $\mathcal{G}$, in which the central vertex $c$ is replaced by a multi-vertex subgraph $C$. We derived the conditions under which the graph states $\ket{\mathcal{G}}$ violate $MMI_{CIJ}$~\eqref{eq:MMI-Fail-G_M}, demonstrating that subgraph anchoring is no longer a sufficient condition to guarantee violation. Instead we showed how the relationship between column space distributivity and intersection, in the adjacency submatrices for parties involved in $MMI_{CIJ}$~\eqref{eq:MMI-Fail-G_M}, determines satisfaction, saturation, and failure of the inequalities. We provided a classification of $MMI_{CIJ}$~\eqref{eq:MMI-Fail-G_M} evaluation for all realizations of $\mathcal{G}$, summarized in Table~\ref{tab:MMIcases}. Finally, we verified that graphs $\mathcal{G}$ reduce to the form $G$ when the central subgraph $C$ consists of a single vertex, and introduced a generalization of $G$ that is guaranteed to always fail $MMI_{CIJ}$~\eqref{eq:MMI-Fail-G_M} by containing a $GHZ$-type subgraph. In the next section we provide a physical interpretation for our $MMI_{CIJ}$~\eqref{eq:MMI-Fail-G_M} failure conditions in the context of quantum circuits.

\subsection{A Physical Interpretation of the $MMI_{CIJ}$ Violation Conditions}
\label{subsection:physical-interpretation}

In Section~\ref{section:generalizedStarGraphs} we established conditions to determine if a state whose graph representation is $LC$ equivalent to a graph of type $\mathcal{G}$ satisfies, saturates, or violates $MMI_{CIJ}$ \eqref{eq:MMI-Fail-G_M}. Our analysis relied on an algebraic reformulation of the $MMI_{CIJ}$~\eqref{eq:MMI-Fail-G_M} inequality into a relation between the column spaces of adjacency matrix blocks $CI$, $CJ$, and $CK$, for $\ket{\mathcal{G}}$ subsystems $C,\ I,\ J,$ and $K$. We now give a physical interpretation for these algebraic conditions by moving from a vector space formalism into the language of quantum circuits.

\subsubsection{From Column Spaces to Reachable Circuits}

Following Eq.\ \eqref{GraphStateDefinition}, an $n$-qubit graph state $\ket{G}$ is constructed by a sequence of $CZ_{i,j}$ gates acting on the state $\ket{+}^{\otimes n}$. Since $CZ$ gates acting on different qubit pairs commute, the effect of all $CZ_{i,j}$ used to prepare $\ket{G}$ can be piecewise analyzed by examining the subset of $CZ$ gates acting between different subsystems. For example, the edges in $\mathcal{G}$ between vertex sets $C$ and $I$ are realized in $\ket{\mathcal{G}}$ by a set of circuits, one for each qubit in $I$. For each vertex
$i \in I$, the edges connecting $i$ to the center $C$ are described by the $i$-th column of the block matrix $CI$, and correspond to a circuit $U_i$ composed of 
$CZ_{c,i}$ gates, with $c \in C$. This correspondence between adjacency matrix columns and $CZ$ gates establishes a dictionary between the vector algebraic and quantum circuit formalisms, complete with the following relations. 
\begin{enumerate}
    \item \textbf{Vector Addition:} Bitwise addition of two column vectors, i.e.\ $\mathbf{v}_i \oplus \mathbf{v}_j$, corresponds to applying their respective circuits sequentially $U_i U_j$.
    \item \textbf{Column Space:} The column space $\col{CI}$, the linear span of $CI$ columns, represents the set of all circuits built by multiplying smaller circuits $\{U_i\}_{i\in I}$. This set of ``reachable circuits'' forms a finite Abelian group.
\end{enumerate}
The vector-circuit correspondence allows column space properties, such as intersection and distributivity, to be interpreted as physical properties of the entangling circuits that prepare the physical state. Moreover, this association highlights that non-local Clifford operations corresponding to elementary column operations, such as replacing a column $\mathbf{v}_k$ with $\mathbf{v}_i \oplus \mathbf{v}_j$, preserve the column space of the associate adjacency block. In particular, any transformation that leaves the column spaces of $CI$, $CJ$, and $CK$ invariant preserves the classification of $\ket{\mathcal{G}}$ for $MMI_{CIJ}$~\eqref{eq:MMI-Fail-G_M} given in Table~\ref{tab:MMIcases}, even though such transformations generally alter the state's entropy vector (see Appendix \ref{A3:deducing-physical-interp-claims} for details).

\subsubsection{The Physical Meaning of a Nontrivial Intersection}
\label{section:Physical-Interpretation-Nontrivial-Intersection}

The necessary condition for the column spaces of $CI$, $CJ$, and $CK$ to share a non-trivial intersection in order to fail $MMI_{CIJ}$~\eqref{eq:MMI-Fail-G_M}, detailed in Observation~\ref{obs:necessary-condition-generalized-graph}, has direct physical implications in a quantum circuit framework. For block adjacency matrices $CI$, $CJ$, and $CK$, the condition $\col{CI} \cap \col{CJ} \cap \col{CK} \neq \varnothing$ implies that a vector exists which can be constructed independently from the columns of $CI$, from those of $CJ$, and from those of $CK$. In a quantum circuit framework, this criterion asserts that there must exist at least one entangling circuit, which we denote $U_{c}$, that can be constructed in three independent ways:
\begin{enumerate}
    \item By multiplying the set of circuits $\{U_i\}_{i \in \alpha}$ where $\alpha \subseteq I$,
    \item By multiplying the set of circuits $\{U_j\}_{j \in \beta}$ where $\beta \subseteq J$,
    \item By multiplying the set of circuits $\{U_k\}_{k \in \gamma}$ where $\gamma \subseteq K$.
\end{enumerate}
The graph state $\ket{K_4}$, represented by the four-star, illustrates the simplest possible construction of $U_c$. The MMI instances $\ket{K_4}$ fails are all of type $MMI_{cIJ}$ (see Table \ref{tab:MMI_eval_summary}), which is the special case of $MMI_{CIJ}$ with $|C|=1$. Thus, since $\ket{K_4}$ fails an $MMI_{CIJ}$, then, the preceding analysis ensures the existence of $U_c$ for any valid tripartite subsystem $\{ C=\{ 1\},I,J\}$ of $\ket{K_4}$. We can see this trivially from the adjacency matrix of $K_4$ \eqref{eq:GHZ-adj-matrix}. The adjacency blocks relevant to $MMI_{CIJ}$ that correspond to every tripartite subsystem $\{ C=\{ 1\},I,J\}$ of $\ket{K_4}$ are
\begin{equation}
    CI =  CJ =  CK = \begin{bmatrix} 1 \end{bmatrix},
\end{equation}
which is the exact preparation of $\ket{K_4}$: a $CZ$ between the central vertex $\{1\}$ and its leaves $\{2\}$, $\{3\}$, and $\{4\}$. In this case, we obtain the only nontrivial $CZ$ circuit between two vertices. 

Indeed, for states represented by graphs of type $G$, the existence of $U_c$ is equivalent to the subgraph anchoring condition from Section \ref{section:single-vertex-center}. For graph states whose representation is of the $\mathcal{G}$
family, the existence of $U_{c}$ acts as a similar, though weaker, constraint to the subgraph anchoring. Using $U_{c}$, we can construct a state $\ket{\mathcal{G'}}$, using only transformations on $\ket{\mathcal{G}}$ that preserve the column spaces of $CI, CJ$ and $CK$, where at least one qubit from each subsystem $i \in I,\ j \in J,\ k \in K$ shares identical connectivity with $C$. Importantly, while the state $\ket{\mathcal{G'}}$ may have a different entropy vector from $\ket{\mathcal{G}}$, its $MMI_{CIJ}$~\eqref{eq:MMI-Fail-G_M} evaluation is the same.

The construction of $U_{c}$ above offers an interesting physical perspective. Given that subsystems $I,\ J$, and $K$ are not directly connected by edges in $\mathcal{G}$, Observation~\ref{obs:necessary-condition-generalized-graph} implies that $MMI_{CIJ}$~\eqref{eq:MMI-Fail-G_M} violation requires the existence of a vertex set $\{i,\ j,\ k,\ c\}$, with $i \in I,\ j \in J,\ k \in K$, and $c \in C$, such that
\begin{equation}\label{K4Inclusion}
\mathcal{G}[\{i,j,k,c\}] \cong K_4.
\end{equation}
Otherwise stated, the subgraph of $\mathcal{G}$ induced by the vertices $\{i,\ j,\ k,\ c\}$ forms a four-star configuration. 

For graphs of type $\mathcal{G}$, the presence of a $K_4$ subgraph is required for $MMI_{CIJ}$~\eqref{eq:MMI-Fail-G_M} violation. Consequently, arbitrary $MMI_{CIJ}$ violations in $\ket{\mathcal{G}}$ must include (possibly recursive) compositions of this minimal MMI-violating subgraph (see Appendix \ref{A3:deducing-physical-interp-claims} for details). This principle holds if the core entanglement structure is preserved, i.e. the leaf vertices in each $K_4$ subgraph remain mutually disconnected within the graph. We highlight that the existence of a $K_4$ subgraph guarantees that there exists a partition under which the graph is (nontrivially) of type $\mathcal{G}$, since we can place the central vertex of $K_4$ in $C$, place each leaf of $K_4$ in $I,\ J,$ and $K$, respectively, and place all remaining vertices of the graph in $C$.


\subsubsection{The Physical Meaning of Distributivity}

Column space distributivity constrains how information is shared from the center vertex set $C$ to other subsystems. Consider a basis of $|C|$ independent and entangling operations $\mathcal{B} = \{B_1, B_2, \dots, B_{C}\}$. The basis $\mathcal{B}$ generates all entanglement arrangements accessible by concatenations of $B \in \mathcal{B}$, comprised of $CZ$ gates between the center $C$ and any target qubit. The column spaces $\col{CI}$, $\col{CJ}$, and $\col{CK}$ are then distributive if and only if every circuit reachable from the operations that built $CI$, $CJ$, and $CK$ can be generated by a subset $\mathcal{B}$ (details in Appendix \ref{A2:sufficient-Condition-General-Graph}). For example, all circuits constructable from the operators that prepared adjacencies in $CI$ may be generated by $\{B_1, B_3\}$, while the corresponding circuits constructable from the operators that prepared $CJ$ are generated by $\{B_2, B_3, B_5\}$, and so forth. 

The crucial feature underlying distributivity is that all three sets of circuits, constructable from $CI,\ CJ,$ and $CK$, are built from subsets of the \textit{same} basis. In a physical sense, this condition implies that the entanglement of $I$, $J$, and $K$ with the center $C$ is not arbitrary, but is instead compactly generated from the same shared set of operations. This implication is bidirectional, allowing the generating subsets%
\footnote{In general, the full basis remains undetermined unless the union of the column spaces of $CI$, $CJ$, or $CK$ is $\mathbb{Z}_{2}^{|C|}$. Nevertheless, we can always deduce the generating subsets in a compatible basis.} %
of $\mathcal{B}$ to be deduced given the set of entangling circuits for $CI$, $CJ$, and $CK$.

\subsection{MMI Failure for $\ket{\mathcal{G}}$ states is Not Fully Reducible to $MMI_{CIJ}$ Failure}

Our analysis thus far has focused on $MMI_{CIJ}~\eqref{eq:MMI-Fail-G_M}$. However, we have observed, and explicitly verified up through $n=8$ qubits, that the presence of a non-trivial intersection among column spaces $CI$, $CJ$, and $CK$ in $\ket{\mathcal{G}}$, while not sufficient to ensure violation of $MMI_{CIJ}~\eqref{eq:MMI-Fail-G_M}$ itself, is nevertheless sufficient to produce an $MMI$ violation in some other subsystem of the same state. To illustrate this observation, consider the graph%
\footnote{While this graph is disconnected, the missing edges do not impact $MMI_{CIJ}$~\eqref{eq:MMI-Fail-G_M} (see Eq.\ \eqref{eq:MMI-Fail-ColSpaces-G_M}). Furthermore, connecting vertices $k_1,\ k_2,$ and $k_3$ does not alter the satisfaction of $MMI_{CIJ}$~\eqref{eq:MMI-Fail-G_M}. The graph remains MMI-violating, since it becomes a graph of the form $G$.} %
shown in Figure~\ref{graph:MMI-case-4-satisfies-single} which represents a state of type $\ket{\mathcal{G}}$ with non-distributive, intersecting column spaces for $CI$, $CJ$, and $CK$. This state $\ket{\mathcal{G}}$ satisfies $MMI_{CIJ}$~\eqref{eq:MMI-Fail-G_M} on the vertex sets $C$, $I$, and $J$ shown in Figure~\ref{graph:MMI-case-4-satisfies-single}. The specific subspace constructions are given in Table~\ref{tab:Case4Examples} (a) of Appendix~\ref{A2:sufficient-Condition-General-Graph} under case ``Satisfies'', and a graph isomorphic to $\mathcal{G}$, which illustrates that Figure~\ref{graph:MMI-case-4-satisfies-single} is indeed a generalized star graph, is given in Table~\ref{tab:case-4-graph-examples}.
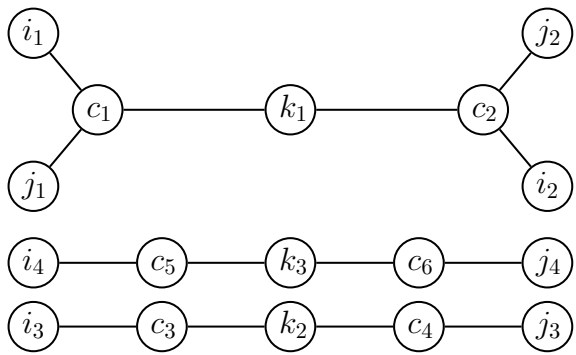
\begin{figure}[H]
\centering
\tikzset{
  stdnode/.style={
    circle, draw, fill=white,
    inner sep=1.8pt,
    minimum size=22pt,
    font=\large,
    text width=10pt,          
    text height=8pt,
    text depth=2pt,
    align=center
  },
  stdlines/.style={line width=0.9pt}
}
\resizebox{0.50\linewidth}{!}{
    \begin{tikzpicture}[stdlines, baseline=(current bounding box.center)]
      \node[stdnode] (c1) at (-3.0, 1.4) {$c_1$};
      \node[stdnode] (k1) at ( 0.0, 1.4) {$k_1$};
      \node[stdnode] (c2) at ( 3.0, 1.4) {$c_2$};
    
      \node[stdnode] (i1) at (-4, 2.6) {$i_1$};
      \node[stdnode] (j1) at (-4, 0.2) {$j_1$};
      \node[stdnode] (j2) at ( 4, 2.6) {$j_2$};
      \node[stdnode] (i2) at ( 4, 0.2) {$i_2$};
    
      \draw (i1)--(c1)--(k1)--(c2)--(j2);
      \draw (j1)--(c1);
      \draw (i2)--(c2);
    
      \node[stdnode] (i4) at (-4.0, -1.0) {$i_4$};
      \node[stdnode] (c5) at (-2.0, -1.0) {$c_5$};
      \node[stdnode] (k3) at ( 0.0, -1.0) {$k_3$};
      \node[stdnode] (c6) at ( 2.0, -1.0) {$c_6$};
      \node[stdnode] (j4) at ( 4.0, -1.0) {$j_4$};
      \draw (i4)--(c5)--(k3)--(c6)--(j4);
    
      \node[stdnode] (i3) at (-4.0, -2) {$i_3$};
      \node[stdnode] (c3) at (-2.0, -2) {$c_3$};
      \node[stdnode] (k2) at ( 0.0, -2) {$k_2$};
      \node[stdnode] (c4) at ( 2.0, -2) {$c_4$};
      \node[stdnode] (j3) at ( 4.0, -2) {$j_3$};
      \draw (i3)--(c3)--(k2)--(c4)--(j3);
    \end{tikzpicture}}
    \caption{Graph representation for a state $\ket{\mathcal{G}}$ with nontrivial intersection among non-distributive subspaces $\col{CI}$, $\col{CJ}$, and $\col{CK}$, that satisfies $MMI_{CIJ}~\eqref{eq:MMI-Fail-G_M}$. Vertex labels give subsystem membership $c_q \in C$, $i_q \in I$, $j_q \in J$, and $k_q \in K$.}
  \label{graph:MMI-case-4-satisfies-single}
\end{figure}

While the full graph in Figure~\ref{graph:MMI-case-4-satisfies-single} satisfies $MMI_{CIJ}~\eqref{eq:MMI-Fail-G_M}$, the uppermost subgraph containing vertices $\{ c_1, c_2, i_1, i_2, j_1, j_2, k_1 \}$ would fail $MMI_{CIJ}~\eqref{eq:MMI-Fail-G_M}$ on its own by virtue of Observation~\ref{obs:sufficient-generalized-star}. In fact, for every graph $\mathcal{G}$ possessing a non-trivial intersection we find \textit{some} violation of MMI on a subgraph of $\mathcal{G}$. While we have not successfully produced a generic proof of this conjecture, we have performed an exhaustive search up through $8$ qubits.%
\footnote{\label{footnote:Table14-Exception}The graph labeled (1) in Table~\ref{tab:MMI-Failing-EightQ-Graphs-Missing-Nontrivial-Int} is the only exception, as we chose an LC orbit representative with an induced four-star rather than the minimal-edge one.}
When a nontrivial intersection was identified, we provide the specific choice of $C$, $I$, and $J$ that contains the largest number of elements still satisfying this condition. All states obtained through this procedure are of type $\ket{\mathcal{G}}$ under the indicated partition. Among the $46$ MMI-satisfying graphs in Table~\ref{tab:MMI-Satisfying-EightQ-Graphs}, none exhibit a nontrivial intersection, nor do any states in their respective $LC$ orbits. Of the $136$ MMI-failing graphs for each entropy vector, Table~\ref{tab:MMI-Failing-EightQ-Graphs-Missing-Nontrivial-Int} reveals that $8$ fail MMI despite not having a nontrivial intersection for any valid choice of $C$, $I$, and $J$. Table~\ref{tab:grpah-representatives-examples} presents example graphs from Tables~\ref{tab:MMI-Satisfying-EightQ-Graphs}-~\ref{tab:MMI-Failing-EightQ-Graphs-Missing-Nontrivial-Int}; clicking on the label of each graph redirects to an applet that provides access to the entropy vector of its corresponding state.
\begin{center}
    \begin{tikzpicture}
    
        \begin{scope}
          \node[draw, rectangle, rounded corners,
                minimum width=0.33\linewidth, minimum height=1.8cm,
                fill=white, inner sep=3pt] (stylebox) at (0,0) {};
    
          \node[anchor=north, inner sep=8pt]
            at (stylebox.north) {\tiny\bfseries Vertex Shape: $\boldsymbol{\mathcal{G}}$ partition};
    
          \def\legendY{-0.1}
          \def\labelY{-0.75}
    
          \node[Knode,BorderStar,
                minimum size=7pt, inner sep=0pt]      (vs)   at (-1.65, \legendY) {};
          \node[Knode,BorderTriangle,
                minimum size=7pt, inner sep=0pt]      (vt)   at (-0.55, \legendY) {};
          \node[Knode,BorderSquare,
                minimum size=7pt, inner sep=0pt]      (vq)   at ( 0.55, \legendY) {};
          \node[Knode,
                minimum size=7pt, inner sep=0pt]      (vreg) at ( 1.65, \legendY) {};
    
          \node[anchor=south] at (-1.65, \labelY) {\scriptsize $C$};
          \node[anchor=south] at (-0.55, \labelY) {\scriptsize $I$};
          \node[anchor=south] at ( 0.55, \labelY) {\scriptsize $J$};
          \node[anchor=south] at ( 1.65, \labelY) {\scriptsize $K$};
        \end{scope}
        \begin{scope}[xshift=0.33\linewidth]
          \node[draw, rectangle, rounded corners,
                minimum width=0.3\linewidth, minimum height=1.8cm,
                fill=white, inner sep=3pt] (colorbox) at (0,0) {};
    
          \node[anchor=north, inner sep=8pt]
            at (colorbox.north) {\tiny\bfseries Vertex Color: failed $\boldsymbol{MMI_{CIJ}}$};
    
          \def\legendYc{-0.1}
          \def\labelYc{-0.75}
    
          \node[Cnode]     (vc) at (-1.65, \legendYc) {};
          \node[Inode]     (vi) at (-0.55, \legendYc) {};
          \node[Jnode]  (vj) at ( 0.55, \legendYc) {};
          \node[Knode]     (vk) at ( 1.65, \legendYc) {};
    
          \node[anchor=south] at (-1.65, \labelYc) {\scriptsize $C$};
          \node[anchor=south] at (-0.55, \labelYc) {\scriptsize $I$};
          \node[anchor=south] at ( 0.55, \labelYc) {\scriptsize $J$};
          \node[anchor=south] at ( 1.65, \labelYc) {\scriptsize $K$};
        \end{scope}
        
        \begin{scope}[xshift=0.66 \linewidth]
          \node[draw, rectangle, rounded corners,
                minimum width=0.33\linewidth, minimum height=1.8cm,
                fill=white, inner sep=3pt] (colorbox) at (0,0) {};
    
          \node[anchor=north, inner sep=8pt]
            at (colorbox.north) {\tiny\bfseries Vertex Color: failed $\boldsymbol{MMI_{STU}}$};
    
          \def\legendYc{-0.1}
          \def\labelYc{-0.75}
    
          \node[Anode]     (vc) at (-1.65, \legendYc) {};
          \node[Bnode]     (vi) at (-0.55, \legendYc) {};
          \node[CnodeABC]  (vj) at ( 0.55, \legendYc) {};
          \node[Knode]     (vk) at ( 1.65, \legendYc) {};
    
          \node[anchor=south] at (-1.65, \labelYc) {\scriptsize $S$};
          \node[anchor=south] at (-0.55, \labelYc) {\scriptsize $T$};
          \node[anchor=south] at ( 0.55, \labelYc) {\scriptsize $U$};
          \node[anchor=south] at ( 1.65, \labelYc) {\scriptsize $\overline{STU}$};
        \end{scope}
        
      \end{tikzpicture}
\end{center}
\begin{table}[H]
\centering
\begin{tabular}{@{}c@{\hspace{8pt}}c@{\hspace{8pt}}c@{}}

\begin{minipage}[t]{0.30\textwidth}\centering
  \graphpairbox{%
    \begin{minipage}[t]{0.48\linewidth}\centering
      \vspace*{2pt}%
      \resizebox{0.960\linewidth}{!}{%
        \begin{tikzpicture}[scale=1]
          \path[use as bounding box] (-0.100, -0.100) rectangle (3.600, 3.780);
          \node[anchor=south,  inner sep=0pt] at (1.750, 3.680) {\large (4004, 3766, 0)};
          \node[Knode, BorderTriangle] (v1) at (3.290, 2.530) {\scriptsize 1};
          \node[Knode, BorderSquare]  (v2) at (0.210, 2.528) {\scriptsize 2};
          \node[Knode]                (v3) at (1.753, 0.970) {\scriptsize 3};
          \node[Knode, BorderStar]    (v4) at (2.748, 2.319) {\scriptsize 4};
          \node[Knode, BorderStar]    (v5) at (0.751, 2.318) {\scriptsize 5};
          \node[Knode, BorderStar]    (v6) at (2.079, 2.051) {\scriptsize 6};
          \node[Knode, BorderStar]    (v7) at (1.421, 2.051) {\scriptsize 7};
          \node[Knode, BorderStar]    (v8) at (1.751, 1.571) {\scriptsize 8};
          \begin{scope}[on background layer]
            \draw[line width=0.9pt, line cap=round] (v1) -- (v4);
            \draw[line width=0.9pt, line cap=round] (v2) -- (v5);
            \draw[line width=0.9pt, line cap=round] (v3) -- (v8);
            \draw[line width=0.9pt, line cap=round] (v4) -- (v6);
            \draw[line width=0.9pt, line cap=round] (v5) -- (v7);
            \draw[line width=0.9pt, line cap=round] (v6) -- (v7);
            \draw[line width=0.9pt, line cap=round] (v6) -- (v8);
            \draw[line width=0.9pt, line cap=round] (v7) -- (v8);
          \end{scope}
        \end{tikzpicture}%
      }\\[3pt]
      {\small\bfseries \href{https://jesus99222.github.io/MMI-Failure-For-Graph-States/starRepsSatisfySvecs/33.html}{(33)}}%
      \vspace*{2pt}%
    \end{minipage}%
    \hspace{0.02\linewidth}%
    \begin{minipage}[t]{0.48\linewidth}\centering
      \vspace*{2pt}%
      \resizebox{0.960\linewidth}{!}{%
        \begin{tikzpicture}[scale=1]
          \path[use as bounding box] (-0.100, -0.100) rectangle (3.600, 3.780);
          \node[anchor=south,  inner sep=0pt] at (1.750, 3.680) {\large (5152, 2618, 0)};
          \node[Knode, BorderTriangle] (v1) at (0.286, 3.119) {\scriptsize 1};
          \node[Knode, BorderSquare]  (v2) at (3.211, 3.120) {\scriptsize 2};
          \node[Knode]                (v3) at (1.750, 1.479) {\scriptsize 3};
          \node[Knode, BorderStar]    (v4) at (0.210, 1.545) {\scriptsize 4};
          \node[Knode, BorderStar]    (v5) at (3.290, 1.548) {\scriptsize 5};
          \node[Knode, BorderStar]    (v6) at (1.751, 0.380) {\scriptsize 6};
          \node[Knode, BorderStar]    (v7) at (1.263, 2.763) {\scriptsize 7};
          \node[Knode, BorderStar]    (v8) at (2.235, 2.765) {\scriptsize 8};
          \begin{scope}[on background layer]
            \draw[line width=0.9pt, line cap=round] (v1) -- (v4);
            \draw[line width=0.9pt, line cap=round] (v1) -- (v7);
            \draw[line width=0.9pt, line cap=round] (v1) -- (v8);
            \draw[line width=0.9pt, line cap=round] (v2) -- (v5);
            \draw[line width=0.9pt, line cap=round] (v2) -- (v7);
            \draw[line width=0.9pt, line cap=round] (v2) -- (v8);
            \draw[line width=0.9pt, line cap=round] (v3) -- (v6);
            \draw[line width=0.9pt, line cap=round] (v3) -- (v7);
            \draw[line width=0.9pt, line cap=round] (v3) -- (v8);
            \draw[line width=0.9pt, line cap=round] (v4) -- (v6);
            \draw[line width=0.9pt, line cap=round] (v4) -- (v7);
            \draw[line width=0.9pt, line cap=round] (v5) -- (v6);
            \draw[line width=0.9pt, line cap=round] (v5) -- (v8);
          \end{scope}
        \end{tikzpicture}%
      }\\[3pt]
      {\small\bfseries \href{https://jesus99222.github.io/MMI-Failure-For-Graph-States/starRepsSatisfySvecs/41.html}{(41)}}%
      \vspace*{2pt}%
    \end{minipage}%
  }
  \panelcaption{a}{Graphs satisfying every MMI (Table~\ref{tab:MMI-Satisfying-EightQ-Graphs})}
\end{minipage}
&
\begin{minipage}[t]{0.30\textwidth}\centering
  \graphpairbox{%
    \begin{minipage}[t]{0.48\linewidth}\centering
      \vspace*{2pt}%
      \resizebox{0.960\linewidth}{!}{%
        \begin{tikzpicture}[scale=1]
          \path[use as bounding box] (-0.100, -0.100) rectangle (3.600, 3.780);
          \node[anchor=south,  inner sep=0pt] at (1.750, 3.680) {\large (4596, 3126, 48)};
          \node[Inode] (v1) at (3.290, 1.751) {\scriptsize 1};
          \node[Jnode] (v2) at (2.608, 1.751) {\scriptsize 2};
          \node[Knode] (v3) at (1.222, 0.534) {\scriptsize 3};
          \node[Knode] (v4) at (1.220, 2.966) {\scriptsize 4};
          \node[Knode] (v5) at (0.212, 1.172) {\scriptsize 5};
          \node[Cnode] (v6) at (0.210, 2.327) {\scriptsize 6};
          \node[Cnode] (v7) at (2.461, 0.883) {\scriptsize 7};
          \node[Cnode] (v8) at (2.460, 2.618) {\scriptsize 8};
          \begin{scope}[on background layer]
            \draw[line width=0.9pt, line cap=round] (v1) -- (v7);
            \draw[line width=0.9pt, line cap=round] (v1) -- (v8);
            \draw[line width=0.9pt, line cap=round] (v2) -- (v7);
            \draw[line width=0.9pt, line cap=round] (v2) -- (v8);
            \draw[line width=0.9pt, line cap=round] (v3) -- (v5);
            \draw[line width=0.9pt, line cap=round] (v3) -- (v7);
            \draw[line width=0.9pt, line cap=round] (v4) -- (v6);
            \draw[line width=0.9pt, line cap=round] (v4) -- (v8);
            \draw[line width=0.9pt, line cap=round] (v5) -- (v6);
          \end{scope}
        \end{tikzpicture}%
      }\\[3pt]
      {\small\bfseries \href{https://jesus99222.github.io/MMI-Failure-For-Graph-States/minimalStarNonTFailSvecs/30.html}{(30)}}%
      \vspace*{2pt}%
    \end{minipage}%
    \hspace{0.02\linewidth}%
    \begin{minipage}[t]{0.48\linewidth}\centering
      \vspace*{2pt}%
      \resizebox{0.960\linewidth}{!}{%
        \begin{tikzpicture}[scale=1]
          \path[use as bounding box] (-0.100, -0.100) rectangle (3.600, 3.780);
          \node[anchor=south,  inner sep=0pt] at (1.750, 3.680) {\large (0, 966, 6804)};
          \node[Inode] (v1) at (3.290, 1.327) {\scriptsize 1};
          \node[Jnode] (v2) at (2.440, 0.252) {\scriptsize 2};
          \node[Knode] (v3) at (1.068, 0.247) {\scriptsize 3};
          \node[Cnode] (v4) at (0.210, 1.317) {\scriptsize 4};
          \node[Cnode] (v5) at (1.744, 3.253) {\scriptsize 5};
          \node[Cnode] (v6) at (0.511, 2.655) {\scriptsize 6};
          \node[Cnode] (v7) at (2.981, 2.662) {\scriptsize 7};
          \node[Cnode] (v8) at (1.748, 1.672) {\scriptsize 8};
          \begin{scope}[on background layer]
            \draw[line width=0.9pt, line cap=round] (v1) -- (v8);
            \draw[line width=0.9pt, line cap=round] (v2) -- (v8);
            \draw[line width=0.9pt, line cap=round] (v3) -- (v8);
            \draw[line width=0.9pt, line cap=round] (v4) -- (v8);
            \draw[line width=0.9pt, line cap=round] (v5) -- (v8);
            \draw[line width=0.9pt, line cap=round] (v6) -- (v8);
            \draw[line width=0.9pt, line cap=round] (v7) -- (v8);
          \end{scope}
        \end{tikzpicture}%
      }\\[3pt]
      {\small\bfseries \href{https://jesus99222.github.io/MMI-Failure-For-Graph-States/minimalStarNonTFailSvecs/128.html}{(128)}}%
      \vspace*{2pt}%
    \end{minipage}%
  }
  \panelcaption{b}{Graphs failing $MMI_{CIJ}$ (Table~\ref{tab:MMI-Failing-EightQ-Graphs-NontrivialInt})}
\end{minipage}
&
\begin{minipage}[t]{0.30\textwidth}\centering
  \graphpairbox{%
    \begin{minipage}[t]{0.48\linewidth}\centering
      \vspace*{2pt}%
      \resizebox{0.960\linewidth}{!}{%
        \begin{tikzpicture}[scale=1]
          \path[use as bounding box] (-0.100, -0.100) rectangle (3.600, 3.780);
          \node[anchor=south,  inner sep=0pt] at (1.750, 3.680) {\large (4920, 2846, 4)};
          \node[Anode,BorderTriangle] (v1) at (3.290,1.751) {\scriptsize 1};
          \node[Bnode,BorderSquare]  (v2) at (1.230,1.184) {\scriptsize 2};
          \node[CnodeABC,BorderStar] (v3) at (1.230,2.316) {\scriptsize 3};
          \node[CnodeABC,BorderStar] (v4) at (0.210,1.304) {\scriptsize 4};
          \node[Bnode]               (v5) at (0.210,2.197) {\scriptsize 5};
          \node[Anode,BorderStar]    (v6) at (2.193,1.750) {\scriptsize 6};
          \node[Knode,BorderStar]    (v7) at (1.556,1.750) {\scriptsize 7};
          \node[Knode,BorderStar]    (v8) at (0.743,1.751) {\scriptsize 8};
          \begin{scope}[on background layer]
            \draw[line width=0.9pt,line cap=round] (v1) -- (v6);
            \draw[line width=0.9pt,line cap=round] (v2) -- (v4);
            \draw[line width=0.9pt,line cap=round] (v2) -- (v6);
            \draw[line width=0.9pt,line cap=round] (v2) -- (v7);
            \draw[line width=0.9pt,line cap=round] (v2) -- (v8);
            \draw[line width=0.9pt,line cap=round] (v3) -- (v5);
            \draw[line width=0.9pt,line cap=round] (v3) -- (v6);
            \draw[line width=0.9pt,line cap=round] (v3) -- (v7);
            \draw[line width=0.9pt,line cap=round] (v3) -- (v8);
            \draw[line width=0.9pt,line cap=round] (v4) -- (v5);
            \draw[line width=0.9pt,line cap=round] (v4) -- (v8);
            \draw[line width=0.9pt,line cap=round] (v5) -- (v8);
            \draw[line width=0.9pt,line cap=round] (v6) -- (v7);
            \draw[line width=0.9pt,line cap=round] (v7) -- (v8);
          \end{scope}
        \end{tikzpicture}%
      }\\[3pt]
      {\small\bfseries \href{https://jesus99222.github.io/MMI-Failure-For-Graph-States/minimalStarTFailnoCIJSvecs/1.html}{(1)}}%
      \vspace*{2pt}%
    \end{minipage}%
    \hspace{0.02\linewidth}%
    \begin{minipage}[t]{0.48\linewidth}\centering
      \vspace*{2pt}%
      \resizebox{0.960\linewidth}{!}{%
        \begin{tikzpicture}[scale=1]
          \path[use as bounding box] (-0.100, -0.100) rectangle (3.600, 3.780);
          \node[anchor=south,  inner sep=0pt] at (1.750, 3.680) {\large (5244, 2510, 16)};
          \node[Anode, BorderTriangle] (v1) at (0.241, 0.210) {\scriptsize 1};
          \node[Bnode, BorderSquare]  (v2) at (3.290, 0.242) {\scriptsize 2};
          \node[CnodeABC]             (v3) at (0.210, 3.258) {\scriptsize 3};
          \node[CnodeABC, BorderStar] (v4) at (3.257, 3.290) {\scriptsize 4};
          \node[Anode, BorderStar]    (v5) at (0.692, 1.740) {\scriptsize 5};
          \node[Bnode, BorderStar]    (v6) at (2.806, 1.761) {\scriptsize 6};
          \node[Knode, BorderStar]    (v7) at (1.761, 0.693) {\scriptsize 7};
          \node[CnodeABC, BorderStar] (v8) at (1.738, 2.807) {\scriptsize 8};
          \begin{scope}[on background layer]
            \draw[line width=0.9pt, line cap=round] (v1) -- (v5);
            \draw[line width=0.9pt, line cap=round] (v1) -- (v7);
            \draw[line width=0.9pt, line cap=round] (v2) -- (v6);
            \draw[line width=0.9pt, line cap=round] (v2) -- (v7);
            \draw[line width=0.9pt, line cap=round] (v3) -- (v5);
            \draw[line width=0.9pt, line cap=round] (v3) -- (v8);
            \draw[line width=0.9pt, line cap=round] (v4) -- (v6);
            \draw[line width=0.9pt, line cap=round] (v4) -- (v8);
            \draw[line width=0.9pt, line cap=round] (v5) -- (v6);
            \draw[line width=0.9pt, line cap=round] (v7) -- (v8);
          \end{scope}
        \end{tikzpicture}%
      }\\[3pt]
      {\small\bfseries \href{https://jesus99222.github.io/MMI-Failure-For-Graph-States/minimalStarTFailnoCIJSvecs/5.html}{(5)}}%
      \vspace*{2pt}%
    \end{minipage}%
  }
  \panelcaption{c}{Graphs failing MMI but not $MMI_{CIJ}$ (Table~\ref{tab:MMI-Failing-EightQ-Graphs-Missing-Nontrivial-Int})}
\end{minipage}

\end{tabular}

\caption{Graph examples from the classification of star graphs $\mathcal{G}$ realizing all unique eight-qubit entropy vectors, organized by MMI failure and the presence of a nontrivial intersection (see Tables~\ref{tab:MMI-Satisfying-EightQ-Graphs}–\ref{tab:MMI-Failing-EightQ-Graphs-Missing-Nontrivial-Int} for the full clasification). Vertex shape encodes a partition defining each $\mathcal{G}$, and vertex color marks a failing MMI instance; in Table~\ref{tab:grpah-representatives-examples} these two partitions coincide, so we indicate it only by coloring. MMI evaluation counts are given as the ordered triple (Satisfies, Saturates, Fails), with each entry equal to the number of instances yielding that outcome. Clicking each graph’s label redirects to a website that provides easy access to the entropy vector of its state.} 
\label{tab:grpah-representatives-examples}
\end{table}
We present the following observations from the data%
\footnote{All data is available and reproducible from \textit{Mathematica} notebooks in \cite{fuentes_munizzi2025mmi}. We give code for graph and tableau representations of stabilizer states, functions to generate/transform states, compute entropy vectors and symmetries, and evaluate entropy inequalities. The repository includes data for Appendix \ref{B1:MMI-Failure-Among-Stabilizer-States} \label{footnote:Mathematica-Notebook} on MMI failure for stabilizer states through seven qubits,}, 
which found no counterexamples in an exhaustive search through eight qubits; limited exploration at nine and ten qubits has also yielded no counterexamples. Tables~\ref{tab:MMI-Satisfying-EightQ-Graphs}-\ref{tab:MMI-Failing-EightQ-Graphs-Missing-Nontrivial-Int} list MMI-satisfying and MMI-failing graphs, respectively, for each of the $182$ entropy vectors of stabilizer states at $n=8$, unique up to symmetry~\cite{Danielsen_2006}. For each representative graph $G$, we generated and searched its full $LC$ orbit to identify a minimal-edge given in
Tables~\ref{tab:MMI-Satisfying-EightQ-Graphs}-\ref{tab:MMI-Failing-EightQ-Graphs-Missing-Nontrivial-Int}.
\begin{observation}
    Every graph state, up to $n=8$, is $LC$-equivalent to a $\ket{\mathcal{G}}$ state for some partition.
\end{observation}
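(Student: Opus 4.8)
The plan is to reduce this topological statement to a combinatorial property of a single graph in each $LC$ orbit, and then to discharge that property by exhaustive enumeration. First I would fix precisely what must be exhibited: a graph $H$ is of type $\mathcal{G}$ for some partition exactly when there is a vertex set $C$ and a splitting of $V(H)\setminus C$ into three nonempty blocks $I,J,K$ with no edges between distinct blocks, realizing the generalized-star layout of Figure~\ref{fig:calG_M-Graph} and supporting a valid $MMI_{CIJ}$. The key reduction is that such a partition exists if and only if $H$ has an independent set of size three, equivalently a separating set $C$ whose removal leaves at least three connected components. Given blocks $I,J,K$, any transversal $v_I\in I,\ v_J\in J,\ v_K\in K$ is pairwise non-adjacent by inter-subgraph disconnectedness~\ref{condition: genDisconnection}; conversely, an independent triple $\{a,b,c\}$ yields the partition $I=\{a\}$, $J=\{b\}$, $K=\{c\}$, $C=V(H)\setminus\{a,b,c\}$, for which subgraph disjointness, graph partitioning, and inter-subgraph disconnectedness all hold simultaneously, the three leaf blocks being isolated vertices.

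With the reduction in hand, the statement becomes purely about independence number: for every graph state on $n\le 8$ qubits, some graph $H'$ in its $LC$ orbit satisfies $\alpha(H')\ge 3$. Because a single witness suffices, this is an existential claim over each orbit, so I would verify it orbit by orbit. The next step is to enumerate the $LC$ orbits of graphs on up to eight vertices, using the classification of graph states up to local complementation~\cite{Danielsen_2006}, and for each orbit to produce a representative carrying an independent triple. For $n=8$ these witnesses are exactly the $\{C,I,J,K\}$ partitions recorded by the vertex shapes in Tables~\ref{tab:MMI-Satisfying-EightQ-Graphs}--\ref{tab:MMI-Failing-EightQ-Graphs-Missing-Nontrivial-Int}, and the corresponding data at smaller $n$ is contained in the accompanying notebooks~\cite{fuentes_munizzi2025mmi}, so much of the computation coincides with the exhaustive search already performed in this section.

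To make the search tractable I would prune it with two remarks. Any representative that already has $\alpha\ge 3$ needs no local complementation, and by the Ramsey bound $R(3,3)=6$ every graph on six or more vertices contains either an independent triple or a triangle; hence for $6\le n\le 8$ the only orbits requiring real work are the dense ones whose representatives all have $\alpha=2$. For such an orbit I would attempt to show that a local complementation centered at a triangle deletes an edge of that triangle and thereby frees an independent triple, reducing the check to a finite, localized verification on the remaining orbits.

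The main obstacle is that I do not expect a clean structural proof valid for all $n$: the governing quantity is the orbit-maximum independence number $\max_{H'\sim_{LC}H}\alpha(H')$, and this is not controlled by any obvious $LC$-invariant, so the triangle-centered local complementation above need not close in a single step for every dense graph. The quantity is moreover genuinely delicate at small sizes---the $LC$ orbit of the four-cycle $C_4$ (which also contains $P_4$ and the diamond and paw graphs) has orbit-maximum independence number two---so the claim is not a formal triviality and cannot follow from a monotonicity or Ramsey-only argument below $n=6$. Consequently the result must rest on the finiteness of the $LC$-orbit catalog through $n=8$; the substantive content is confirming that this catalog is exhaustive and that every orbit in it attains an $\alpha\ge 3$ representative. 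Establishing the analogous statement for $n>8$ would require either a new invariant bounding $\max_{H'\sim_{LC}H}\alpha(H')$ from below or a continuation of the computational search.
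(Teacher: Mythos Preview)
Your reduction of the $\mathcal{G}$ condition to the existence of an independent triple is correct, and your overall plan---enumerate $LC$ orbits and exhibit a representative with $\alpha\ge 3$ in each---is precisely the paper's approach: the observation is stated as a computational fact, with Tables~\ref{tab:MMI-Satisfying-EightQ-Graphs}--\ref{tab:MMI-Failing-EightQ-Graphs-Missing-Nontrivial-Int} supplying the witnesses (the shape-encoded $\{C,I,J,K\}$ partitions).

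However, you have unearthed a counterexample without recognizing it as one. You correctly note that the $LC$ orbit $\{C_4,\ P_4,\ \text{paw},\ \text{diamond}\}$ has orbit-maximum independence number two. By your own reduction this means the four-qubit state $\ket{P_4}$ is \emph{not} $LC$-equivalent to any $\mathcal{G}$ with nonempty $I,J,K$, so the observation read literally for every $n\le 8$ fails at $n=4$. Calling this merely ``not a formal triviality'' understates the situation: your plan to verify the claim computationally would halt at this orbit. The resolution is that the paper's evidence lives entirely at $n=8$: the tables enumerate the $182$ eight-qubit entropy vectors, and lower-qubit entanglement structures appear there only as graphs padded with isolated vertices, which trivially furnish independent triples. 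Read as a statement about eight-qubit graph states (equivalently, about graph states embedded with unentangled ancillas), the observation holds and your method establishes it; read as a statement about each $n\le 8$ in isolation, it does not, and you should flag the $n=4$ exception rather than attempt to verify it away.
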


\begin{observation}
    The property that $\ket{\mathcal{G}}$ has a nontrivial intersection among the column spaces $CI$, $CJ$, and $CK$, for some valid choice of $\{C,I,J\}$, is not $LC$-invariant.
\end{observation}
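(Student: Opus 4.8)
Since this is a non-invariance statement, the plan is to prove it by exhibiting a single $LC$ orbit on which the nontrivial-intersection property takes both values, i.e.\ two $LC$-equivalent graphs, one admitting a valid choice of $\{C,I,J\}$ with $\col{CI}\cap\col{CJ}\cap\col{CK}\neq\{0\}$ and one admitting no such choice. Before doing so it is worth recording why this is consistent with everything established earlier: $MMI$ failure \emph{is} $LC$-invariant, because local complementation preserves the entropy vector and hence every $MMI$ evaluation, and by Observation~\ref{obs:necessary-condition-generalized-graph} a nontrivial intersection is necessary for $MMI_{CIJ}$~\eqref{eq:MMI-Fail-G_M} failure. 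None of this forces the intersection property to be invariant, since it is a \emph{representative-dependent} diagnostic: the same entropy vector can be realized by a graph presenting an explicit $\mathcal{G}$-structure with a nontrivial intersection and by another graph in the same orbit whose $\mathcal{G}$-partitions all give a trivial intersection (with the shared $MMI$ failure then surfacing in a non-$\mathcal{G}$-aligned instance $MMI_{STU}$).

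The cleanest self-contained witness I would use is the four-star $K_4$ of Figure~\ref{graph:GHZ4} together with its local complement at the central vertex. Taking $C=\{1\}$, $I=\{2\}$, $J=\{3\}$, $K=\{4\}$ makes $K_4$ a generalized star, and each of the blocks $CI$, $CJ$, $CK$ equals $[1]$, so $\col{CI}=\col{CJ}=\col{CK}=\mathbb{Z}_2$ and their intersection is all of $\mathbb{Z}_2$, which is nontrivial. Applying $\tau_1$ (exactly as in Figure~\ref{fig:local-complementation}) converts the open neighborhood $N(1)=\{2,3,4\}$ from edgeless to complete, yielding the complete graph $\tau_1(K_4)$ on four vertices. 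For this graph no valid partition exists at all: a valid $MMI_{CIJ}$~\eqref{eq:MMI-Fail-G_M} instance needs $I,J,K$ all nonempty, which on four vertices forces $|C|=1$ and $I,J,K$ to be the three remaining singletons, yet inter-subgraph disconnectedness would then require those three vertices to be pairwise nonadjacent, contradicting completeness. Hence $\tau_1(K_4)$ admits no valid choice of $\{C,I,J\}$ with a nontrivial intersection while $K_4$ does, and since the two graphs are $LC$-equivalent the property is not $LC$-invariant.

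If a witness in which \emph{both} representatives are genuine generalized stars is preferred, the exhaustive search already supplies one, which I would cite directly: the graph labeled $(1)$ in Table~\ref{tab:MMI-Failing-EightQ-Graphs-Missing-Nontrivial-Int} (shown in panel (c) of Table~\ref{tab:grpah-representatives-examples}) was chosen with an induced four-star, so by Eq.~\eqref{K4Inclusion} it has a nontrivial intersection under the partition placing the four-star's center in $C$ and its three leaves in $I$, $J$, $K$, whereas the minimal-edge representative of the \emph{same} $LC$ orbit has no nontrivial intersection under any valid partition. Both graphs are of type $\mathcal{G}$ and are $LC$-equivalent, so they again establish the claim in a setting that matches the $\ket{\mathcal{G}}$ phrasing literally.

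The step I expect to be the main obstacle is the \emph{negative} half of each witness: confirming that the ``trivial-intersection'' representative really admits no valid $\{C,I,J\}$ producing a nontrivial intersection. For the elementary complete-graph witness this is the short counting argument given above, but in general it is a finite-yet-combinatorial check over all $\mathcal{G}$-partitions of the graph, and it is precisely this check that the exhaustive search through $n=8$ discharges for the Table~\ref{tab:MMI-Failing-EightQ-Graphs-Missing-Nontrivial-Int} example. Producing a closed-form structural reason why a single local complementation can annihilate the intersection, rather than a verified case check, would be the hard part of any fully hand-written argument; the counterexample route deliberately sidesteps it.
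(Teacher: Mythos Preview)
Your proposal is correct, and your second witness---graph $(1)$ in Table~\ref{tab:MMI-Failing-EightQ-Graphs-Missing-Nontrivial-Int} versus its minimal-edge $LC$-orbit representative---is exactly the evidence the paper relies on. The paper offers no hand-written proof of this observation; it is drawn from the exhaustive search, with the relevant witness flagged in the footnote explaining that graph $(1)$ was deliberately chosen with an induced four-star rather than the minimal-edge representative, and the Discussion section states explicitly that the property ``is not an LC invariant statement, as we demonstrate in the construction'' of those tables.

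Your first witness ($K_4$ versus the complete graph on four vertices) is a genuinely simpler and self-contained argument that the paper does not give. It works cleanly if one reads the property as ``there exists a valid $\mathcal{G}$-partition with nontrivial intersection,'' which is vacuously false when no $\mathcal{G}$-partition exists. The only objection---which you already anticipate---is that the observation is phrased for $\ket{\mathcal{G}}$ states, so a strict reader may want both representatives to be of type $\mathcal{G}$; your second witness handles that, and together the two give a more complete treatment than the paper's bare computational remark.
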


\begin{observation}
    Although every MMI-violating graph, up to $n=6$, is $LC$-equivalent to a $\mathcal{G}$ graph with nontrivial intersection on the column spaces $CI$, $CJ$, and $CK$, this condition is not necessary for generic MMI failure on $\ket{\mathcal{G}}$.
\end{observation}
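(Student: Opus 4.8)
The statement comprises an affirmative claim for $n\le 6$ and a non-necessity claim witnessed at larger qubit number, and I would establish each by a finite verification organized so that the relevant search spaces stay tractable. Because $LC$ operations preserve the entropy vector, MMI violation is an $LC$-invariant property, so it suffices to examine one representative per $LC$-orbit, drawn from the classification under local complementation and vertex permutation of \cite{Danielsen_2006}.

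For the $n\le 6$ half I would compute each representative's entropy vector from its adjacency matrix via Eq.\ \eqref{eq:EEadj}, evaluate every MMI \eqref{eq:MMI} instance, and isolate the MMI-violating orbits. For each such orbit the target is existential: some orbit element is a generalized star $\mathcal{G}$ under a partition $\{C,I,J,K\}$ obeying conditions \ref{condition: genDisjointness}--\ref{condition: genDisconnection} with $\col{CI}$, $\col{CJ}$, $\col{CK}$ meeting nontrivially. I would generate the orbit by iterated local complementation, Eq.\ \eqref{eq:Local-Complementation}, and for each element search the admissible partitions, computing $\dimension{\col{CI}\cap\col{CJ}\cap\col{CK}}$; one partition with nonzero value certifies the orbit, so a single witness suffices.

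The non-necessity half is settled by explicit counterexamples: generalized stars $\ket{\mathcal{G}}$ that fail MMI while admitting no partition with nontrivial intersection. The exhaustive $n=8$ search already supplies these as the eight graphs of Table \ref{tab:MMI-Failing-EightQ-Graphs-Missing-Nontrivial-Int}; for each I would verify (i) it is a $\mathcal{G}$ under the indicated partition, (ii) it violates some instance $MMI_{STU}$ whose partition differs from its star partition $\{C,I,J\}$, and (iii) $\col{CI}\cap\col{CJ}\cap\col{CK}=\{0\}$ for \emph{every} admissible $\{C,I,J\}$ (with the lone representative-choice caveat of footnote~\ref{footnote:Table14-Exception}). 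By Observation \ref{obs:necessary-condition-generalized-graph}, the trivial triple intersection in (iii) rules out $MMI_{CIJ}$ failure, so the violation in (ii) occurs on a genuinely different instance, which is precisely the reducibility gap this observation records.

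The main obstacle is the universal quantifier buried in (iii): a nontrivial intersection is certified by a single partition, but its absence demands ruling out all admissible $\{C,I,J\}$, whose number grows rapidly with $n$. I would control this using the structural consequence behind Eq.\ \eqref{K4Inclusion}: a nontrivial triple intersection forces some center vertex $c\in C$ to have a neighbor in each of $I$, $J$, $K$, hence an induced four-star $\mathcal{G}[\{i,j,k,c\}]\cong K_4$. Contrapositively, triviality for all partitions reduces to the combinatorial statement that no induced four-star of the graph can have its three leaves split across distinct parts of an admissible partition. Since inter-subgraph disconnectedness \ref{condition: genDisconnection} forces each connected component of $\mathcal{G}\setminus C$ into a single part, the partition search collapses to a bounded assignment of components to $I$, $J$, $K$, and each residual check is a finite $\mathbb{Z}_2$ rank computation, making the exhaustive $n=8$ certification feasible and completing the counterexample verification.
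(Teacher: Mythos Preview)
Your approach is essentially the paper's own: the statement is a computational observation, and both you and the paper establish it by exhaustive finite search over LC-orbit representatives, entropy vector computation via Eq.~\eqref{eq:EEadj}, and partition enumeration, with the non-necessity half witnessed by the eight graphs of Table~\ref{tab:MMI-Failing-EightQ-Graphs-Missing-Nontrivial-Int}.

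One point to tighten: your step (iii) checks all admissible partitions $\{C,I,J\}$ of the \emph{fixed} counterexample graph, but since the nontrivial-intersection property is not $LC$-invariant (Observation~\ref{obs:necessary-condition-generalized-graph} is stated for a specific $\mathcal{G}$, and the paper's Observation~6 records non-invariance explicitly), ruling out the condition for one representative does not rule it out for the orbit. The paper's verification, as stated in the caption of Table~\ref{tab:MMI-Failing-EightQ-Graphs-Missing-Nontrivial-Int}, searches the \emph{entire} LC orbit of each representative for admissible partitions with nontrivial intersection; your (iii) should likewise range over orbit elements, not just partitions of the chosen graph. You implicitly lean on this by citing Table~\ref{tab:MMI-Failing-EightQ-Graphs-Missing-Nontrivial-Int}, but your own verification recipe as written would not certify the claim. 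Your $K_4$-based pruning in the final paragraph is a correct and useful efficiency argument---any $c$ in the support of a nonzero vector in the triple intersection must neighbor some vertex in each of $I,J,K$---and it applies equally well to each orbit element once you loop over them.
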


\begin{observation}
    Possessing a nontrivial intersection of the column spaces $CI$, $CJ$, and $CK$ is a sufficient condition for the failure of \textbf{some} MMI instance, though not that specific $MMI_{CIJ}$, in $\ket{\mathcal{G}}$ states. Table~\ref{tab:MMI-Satisfying-EightQ-Graphs} confirms this observation for $\ket{\mathcal{G}}$ up through $n=8$.
\end{observation}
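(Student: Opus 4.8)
\section*{Proof proposal}

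The plan is to reduce the statement to the single--central--vertex result already established in Proposition~\ref{thm:MMI-violation}, exploiting the fact that re-choosing the tripartition of a \emph{fixed} state $\ket{\mathcal{G}}$ costs nothing: it changes which $MMI$ instance we test but leaves the entropy vector untouched. Concretely, I would first convert the algebraic hypothesis $\col{CI}\cap\col{CJ}\cap\col{CK}\neq\{0\}$ into the existence of an induced four-star, and then look for a tripartition of the whole vertex set whose arms are the three leaves of that four-star, so that the violation follows from the $GHZ$-type mechanism of Observation~\ref{obs:proper-generalization-G}.

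For the first step, fix a nonzero $v\in\col{CI}\cap\col{CJ}\cap\col{CK}$. Writing $v$ as a sum of columns of $CI$, of $CJ$, and of $CK$ yields subsets $\alpha\subseteq I$, $\beta\subseteq J$, $\gamma\subseteq K$ with $\bigoplus_{i\in\alpha}(\Gamma)_{C,i}=\bigoplus_{j\in\beta}(\Gamma)_{C,j}=\bigoplus_{k\in\gamma}(\Gamma)_{C,k}=v$. Since $v\neq 0$ there is a coordinate $c\in C$ with $v_c=1$, and the mod-$2$ sums then force at least one $i^*\in\alpha$, $j^*\in\beta$, $k^*\in\gamma$ each genuinely adjacent to $c$. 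Because $I,J,K$ are mutually disconnected, the vertices $i^*,j^*,k^*$ are pairwise non-adjacent, so $\mathcal{G}[\{c,i^*,j^*,k^*\}]\cong K_4$; this realizes the induced four-star of Eq.~\eqref{K4Inclusion} directly inside $\mathcal{G}$, with no state transformation required. Re-partitioning with $\widehat{C}=V\setminus\{i^*,j^*,k^*\}$ and singleton arms $\widehat{I}=\{i^*\}$, $\widehat{J}=\{j^*\}$, $\widehat{K}=\{k^*\}$ keeps $\ket{\mathcal{G}}$ a generalized star, and if the three leaves attach to the centre only through $c$ then $\col{\widehat{C}\widehat{I}}=\col{\widehat{C}\widehat{J}}=\col{\widehat{C}\widehat{K}}=\spanv{e_c}$, whence Observation~\ref{obs:proper-generalization-G} makes $\ket{\mathcal{G}}$ fail $MMI_{\widehat{C}\widehat{I}\widehat{J}}$.

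The main obstacle is that, generically, each leaf $i^*$ also carries edges to centre vertices other than $c$, so the singleton spans $\spanv{u_{i^*}}$, $\spanv{u_{j^*}}$, $\spanv{u_{k^*}}$ are distinct and intersect trivially; this is exactly the undetermined Case~4 of Table~\ref{tab:MMIcases}, where $MMI_{\widehat{C}\widehat{I}\widehat{J}}$ may only saturate or hold. Removing this extra connectivity is the crux, and I see two complementary routes. The first is to act by \emph{local} complementations, which preserve the entropy vector, to bring $\mathcal{G}$ into an $LC$-equivalent graph in which some induced four-star has genuine leaves; controlling this would link the claim to the paper's forbidden--four-star conjecture and is naturally phrased through $LC$ invariants such as the foliage partition~\cite{burchardt2025foliagepartitioneasytocomputelcinvariant}. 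The second is to avoid transformations entirely and instead enlarge the arms, absorbing each leaf together with its private neighbourhood into $\widehat{I},\widehat{J},\widehat{K}$ and evaluating the ranks of Eq.~\eqref{eq:EEadj} directly, in the spirit of the $\ket{K^{1,1,2}_{4}}$ computation; here the obstruction is shared centre-neighbours among the three leaves, which can raise $S_{\widehat{I}\widehat{J}}$ and wash out the violation.

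I would then carry this out by induction on the number of vertices (or edges): peel off one four-star, argue that the reduced graph still possesses a nontrivial intersection for some partition or else is small enough to match a base case, and recombine. The step I expect to resist a clean argument is precisely the control of the leaves' residual connectivity to the centre --- quantifying exactly when it cancels the $GHZ$-type imbalance --- which is why the authors establish the result only by exhaustive search through $n=8$ rather than by a general proof.
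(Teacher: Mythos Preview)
Your closing sentence is the key: the paper does \emph{not} give an analytic proof of this observation. It is stated as an empirical fact, established by exhaustively generating the $LC$ orbits of graph representatives for all $182$ eight-qubit stabilizer entropy vectors (up to symmetry) and checking that every $\ket{\mathcal{G}}$ with a nontrivial $\col{CI}\cap\col{CJ}\cap\col{CK}$ also fails some $MMI$ instance. The authors explicitly flag in the Discussion that ``the challenge in proving the sufficiency of a nontrivial intersection for MMI failure lies in managing the edge cases across all MMI instances,'' i.e.\ they leave the general statement open.

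Your proposal therefore goes beyond the paper's approach. Your first reduction is correct and is exactly the content of Eq.~\eqref{K4Inclusion} and the surrounding discussion in Section~\ref{section:Physical-Interpretation-Nontrivial-Intersection}: a nontrivial intersection forces an induced $K_4$ centred at some $c\in C$ with leaves in $I,J,K$. Your subsequent re-partitioning idea is sound in the special case you note (all three leaves adjacent only to $c$), but you have correctly identified the genuine obstruction: once the leaves carry additional edges into $C$, the singleton column spaces $\spanv{u_{i^*}},\spanv{u_{j^*}},\spanv{u_{k^*}}$ need not coincide, landing you in Case~4 of Table~\ref{tab:MMIcases}, where nothing is guaranteed. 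Neither of your two proposed escape routes (pruning via $LC$ moves, or absorbing neighbourhoods into the arms) is carried out in the paper, and indeed the first would amount to progress on Conjecture~\ref{conjecture:forbidden-subgraph}. So your assessment of where the argument stalls is accurate; there is no missing trick in the paper that completes it.
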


\begin{observation}
    \label{obs:path-cycle-satisfy-MMIl}
    Path and cycle graphs, which never realize a nontrivial intersection, as well as specific combinations of path and cycle graphs, form families of graphs that never fail MMI. Table~\ref{tab:MMI-Satisfying-EightQ-Graphs} confirms this observation for $\ket{\mathcal{G}}$ up through $n=8$.
\end{observation}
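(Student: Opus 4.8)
The plan is to recognize that a connected graph is a path or a cycle exactly when its maximum degree satisfies $\Delta(G)\le 2$, and that an arbitrary disjoint union of paths and cycles is precisely a graph with $\Delta(G)\le 2$. I would therefore prove the stronger and cleaner statement that \emph{every graph state whose graph has $\Delta(G)\le 2$ satisfies all MMI}. The argument naturally splits into two parts: (i) such graphs admit no partition realizing a nontrivial column-space intersection, and (ii) they nevertheless satisfy every MMI instance, including those not of the structured $MMI_{CIJ}$ form. Part (i) immediately disposes of the $MMI_{CIJ}$ families via the machinery already in hand, while part (ii) is where the real work lies.

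First I would prove (i) directly. Suppose, for contradiction, that $\col{CI}\cap\col{CJ}\cap\col{CK}\neq\{0\}$ for some valid $\{C,I,J\}$, and take a nonzero $w$ in this intersection. Choosing a center coordinate $c$ with $w_c=1$, the relation $w\in\col{CI}$ forces $c$ to have an odd (hence nonzero) number of neighbors in $I$, and likewise $w\in\col{CJ}$ and $w\in\col{CK}$ force a neighbor in $J$ and in $K$; thus $\deg(c)\ge 3$, equivalently the induced four-star of Eq.~\eqref{K4Inclusion}. Since $\Delta(G)\le 2$ forbids any degree-$3$ vertex, no such $c$ exists, so no valid partition produces a nontrivial intersection. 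By Observation~\ref{obs:necessary-condition-generalized-graph}, it follows that no $MMI_{CIJ}$ instance can fail. (This also explains why trees in general are \emph{not} covered: the star $K_4$ is itself a tree of maximum degree $3$ and violates MMI, so the hypothesis must be the degree bound, not acyclicity.)

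For (ii) I would first reduce to a single path or cycle using additivity of tripartite information. Writing $I_3(S:T:U)=S_S+S_T+S_U-S_{ST}-S_{SU}-S_{TU}+S_{STU}$, MMI is the statement $I_3\le 0$; since a disconnected graph factorizes the state as a tensor product, entropies add over components and hence $I_3$ is additive over the path/cycle blocks. Thus it suffices to establish $I_3(S:T:U)\le 0$ for all tripartitions of a single path or cycle — this is exactly the ``specific combinations'' clause. Next, any tripartition $\{S,T,U,W\}$ in which one part separates the remaining three into mutually disconnected groups is of type $\mathcal{G}$, and by part (i) it has trivial intersection, so it lands in Case~1 or Case~2 of Table~\ref{tab:MMIcases} and therefore \emph{satisfies or saturates}. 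What remains are the \emph{interleaved} tripartitions, where no single part separates the others. For these I would compute $S_A=\rAz{\Gamma^{A}}$ from Eq.~\eqref{eq:EEadj} using the observation that, for $\Delta(G)\le 2$, the boundary submatrix $\Gamma^{A}$ is the biadjacency matrix of a disjoint union of paths and even cycles, whose $\mathbb{Z}_2$-rank is $\sum_i\lfloor p_i/2\rfloor+\sum_j(m_j-1)$; one then verifies $I_3\le 0$ as a purely combinatorial inequality in the one-dimensional domain-wall pattern of the four labels along the chain.

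The hard part will be precisely this combinatorial verification for interleaved tripartitions: the boundary-rank formula is piecewise (it depends on the lengths and parities of maximal alternating runs), and the cyclic case carries an extra wrap-around boundary component that shifts the rank relative to the path. I would attempt to control it either by an induction that peels one site off the end of the chain while tracking how each of the seven entropies in $I_3$ can change, or by reorganizing the sum so that each domain wall contributes non-negatively to $-I_3$. It is the unavailability of a clean closed form covering all interleavings uniformly that makes this the genuine obstacle, and the reason the statement is recorded as an Observation corroborated by the exhaustive $n\le 8$ search rather than a theorem with an all-$n$ proof; my proposal reduces that gap to the single combinatorial lemma above.
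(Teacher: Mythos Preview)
The paper does not give an analytic proof of this statement: it is recorded as an \emph{Observation} supported solely by exhaustive computational search through $n=8$ (extended to $n=10$ in the Discussion). Your proposal is therefore strictly more ambitious than the paper's own treatment.

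Your part (i) is correct and is essentially the argument the paper gestures at in the Discussion (``path and cycle graphs---whose maximal vertex connectivity is $2$ and which therefore contain no induced four-star''), combined with the paper's established implication that a nontrivial intersection forces an induced $K_4$ (Section~\ref{section:Physical-Interpretation-Nontrivial-Intersection}, Eq.~\eqref{K4Inclusion}). Your explicit column-space argument, picking a coordinate $c$ with $w_c=1$ and deducing neighbors in each of $I,J,K$, makes this precise in a way the paper does not spell out.

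Your part (ii) goes beyond anything the paper attempts. The paper offers no analytic argument that paths and cycles satisfy \emph{all} MMI instances (not just those of $MMI_{CIJ}$ form), relying entirely on the exhaustive enumeration. Your additivity reduction of $I_3$ to a single connected component is sound, and your identification of the residual combinatorial lemma---controlling $I_3$ for interleaved tripartitions along a one-dimensional chain via the $\mathbb{Z}_2$-rank of boundary biadjacency matrices---is exactly the gap the paper leaves open by phrasing this as an Observation rather than a Theorem. You have correctly diagnosed both the nature of the obstacle and the reason the paper stops short of a proof; what you have is a genuine reduction of the open problem to a sharper combinatorial statement, not a completed proof.
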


We now conclude the series of observations from Tables~\ref{tab:MMI-Satisfying-EightQ-Graphs}-\ref{tab:MMI-Failing-EightQ-Graphs-Missing-Nontrivial-Int} with a forbidden-subgraph conjecture. In Section~\ref{section:Physical-Interpretation-Nontrivial-Intersection}, we showed that the necessary condition~\ref{obs:necessary-condition-generalized-graph} for $MMI_{CIJ}$ violation in a graph state $\ket{\mathcal{G}}$ implies that $\mathcal{G}$ contains an induced four-star. Therefore, we expect, and indeed observe, every graph in Table~\ref{tab:MMI-Failing-EightQ-Graphs-NontrivialInt} to have an induced four-star. Additionally, we can explicitly observe the graphs in Table~\ref{tab:MMI-Failing-EightQ-Graphs-Missing-Nontrivial-Int}, which are of type $\mathcal{G}$ and fail MMI not of type $MMI_{CIJ}$\eqref{eq:MMI-Fail-G_M}, to have an induced four-star as well. Moreover, we confirmed Observation~\ref{obs:path-cycle-satisfy-MMIl} exhaustively up to $n=10$, providing further evidence that path and cycle graphs—whose maximal vertex connectivity is $2$ and which therefore contain no induced four-star—appear to be graph families that never fail MMI~\eqref{eq:MMI}.

Ultimately, these computational and analytic observations provide evidence that an induced four-star may be necessary for violating \textit{any} MMI~\eqref{eq:MMI} inequality, motivating the conjecture that any MMI-violating graph state has some graph representation containing a $K_4$ graph. Then, by the argument in the paragraph following Eq.\ \eqref{K4Inclusion}, there is a partition under which that graph representation is nontrivially type $\mathcal{G}$. Formally, we conjecture the following: 
\begin{conjecture}[Forbidden-Subgraph]
    \label{conjecture:forbidden-subgraph}
    Any graph state that violates MMI \eqref{eq:MMI} has a graph representation that is $LC$-equivalent to a graph $H$ containing a $K_4$ subgraph. $H$ is $\mathcal{G}$ with respect to \emph{some} partition.
\end{conjecture}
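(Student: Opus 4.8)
Because the construction following \eqref{K4Inclusion} shows that an induced four-star on a vertex set $\{c,i,j,k\}$ already certifies its host graph to be of type $\mathcal{G}$ (place $c$ and all remaining vertices in $C$, and the three pairwise-non-adjacent leaves in $I$, $J$, $K$), the ``$H$ is $\mathcal{G}$'' clause is automatic once a four-star has been exhibited. The entire statement therefore reduces to the implication that any MMI violation---for an \emph{arbitrary} instance $MMI_{STU}$, not merely a star-aligned $MMI_{CIJ}$---forces some LC representative to contain an induced four-star. I would attack this through its contrapositive: if \emph{every} graph in the LC orbit of $\ket{\psi}$ is claw-free (contains no induced four-star), then $\ket{\psi}$ satisfies every instance of \eqref{eq:MMI}.

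The forward half of the bridge is already in hand. Rewriting a violated $MMI_{STU}$ through \eqref{eq:EEadj} as a failed $\mathbb{Z}_2$-rank inequality, and---once the graph is type $\mathcal{G}$ with $\{S,T,U\}$ as its legs---repackaging that deficiency into the column-space form \eqref{eq:MMI-Fail-ColSpaces-G_M}, Observation~\ref{obs:necessary-condition-generalized-graph} delivers a shared column vector and hence an induced four-star. The difficulty is that the clean column-space reformulation presupposes the star block structure (the vanishing $ST$, $SU$, $TU$ blocks), whereas a generic violated instance is measured against a partition that need not align with any star decomposition; the ``fails-but-not-$MMI_{CIJ}$'' examples of Table~\ref{tab:MMI-Failing-EightQ-Graphs-Missing-Nontrivial-Int} are precisely of this kind, yet still carry a four-star. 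Bridging this gap means producing, from an abstract rank/linear-dependence witness of a non-aligned violation, an explicit local-complementation sequence that concentrates the shared reachable circuit onto a single common center $c$ and simultaneously clears the spurious edges among chosen representatives $s\in S$, $t\in T$, $u\in U$ while preserving their common adjacency to $c$. This realization step is the main obstacle, and the absence of any general guarantee that such an LC sequence exists is exactly what keeps the statement a conjecture.

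For the contrapositive I would instead try to control the claw-free LC orbits structurally. The target is a classification showing that every orbit which is claw-free throughout reduces (up to LC) to a disjoint union of paths, cycles, and isolated vertices, in keeping with Observation~\ref{obs:path-cycle-satisfy-MMIl} and the exhaustive $n\le 8$ data. Granting such a classification, MMI for the whole family is cheap: a disjoint union of graphs corresponds to a tensor product of graph states, under which every entropy---and hence every term of \eqref{eq:MMI}---is additive, so it suffices to verify MMI separately on a single path and a single cycle, each a short adjacency-rank computation carried out for all lengths. The decisive difficulty again localizes in the classification: ruling out an ``exotic'' claw-free orbit beyond paths and cycles is the content left untouched by the finite search, and I expect it, rather than the path/cycle verification, to be where the real work lies.
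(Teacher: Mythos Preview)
The paper does not prove this statement; it is presented as a conjecture, motivated by the analytic link between nontrivial intersection and an induced four-star (Section~\ref{section:Physical-Interpretation-Nontrivial-Intersection}) and supported only by the exhaustive $n\le 8$ search in Tables~\ref{tab:MMI-Satisfying-EightQ-Graphs}--\ref{tab:MMI-Failing-EightQ-Graphs-Missing-Nontrivial-Int} together with limited checks at $n=9,10$. Your proposal is likewise a strategy sketch rather than a proof, and you say so explicitly: both the ``realization step'' (manufacturing an LC sequence that aligns a generic violated $MMI_{STU}$ with a star partition) and the structural classification of claw-free LC orbits are left open. So there is no discrepancy to adjudicate; you and the paper agree the statement is open, and the obstacles you isolate are exactly the ones the paper flags around Table~\ref{tab:MMI-Failing-EightQ-Graphs-Missing-Nontrivial-Int}.

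One concrete caution on the contrapositive route. The intermediate target you set --- that every LC orbit which is claw-free \emph{throughout} reduces to a disjoint union of paths, cycles, and isolated vertices --- is strictly stronger than the conjecture and is not obviously supported by the data. Observation~\ref{obs:path-cycle-satisfy-MMIl} only says paths and cycles satisfy MMI; it does not say they exhaust the claw-free orbits. Table~\ref{tab:MMI-Satisfying-EightQ-Graphs} contains MMI-satisfying representatives (for instance entries (24)--(27), (41), (44)--(46)) whose minimal-edge forms are neither paths nor cycles. For your classification to go through, each of those orbits must contain a claw \emph{somewhere}, so that they fall outside the hypothesis rather than inside the conclusion. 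If even one such orbit turns out to be claw-free throughout, your classification target fails while the conjecture survives, and you would then need a direct MMI verification for that family anyway. This is a finite check over the $n\le 8$ catalogue and is worth doing before committing to the classification strategy.
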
%
\noindent

In this section, we developed a framework for identifying and characterizing specific MMI violations in graph states. We proved that graphs of type $G$, a single central vertex $c$ connected to mutually disjoint and disconnected subgraphs, necessarily violate $MMI_{cIJ}~\eqref{eq:MMI-Fail-G_m}$. We then generalized this construction to consider graphs of type $\mathcal{G}$, where the vertex $c$ is promoted to a multi-vertex subgraph $C$. In this generalization, we derived conditions to determine whether states $\ket{\mathcal{G}}$ satisfy, saturate, or fail $MMI_{CIJ}$, based on the intersection and distributive properties adjacency submatrix column spaces in $\mathcal{G}$. Summarized in Table~\ref{tab:MMIcases}, our classification revealed that a nontrivial intersection among these column spaces is necessary, and when combined with distributivity is sufficient to produce $MMI_{CIJ}$ violation.

Physically, the algebraic relations derived in this section correspond to the existence of shared entangling circuits between subsystems, with $MMI_{CIJ}$ failure linked to the presence of a $K_4$ subgraph. We conjectured that every MMI-violating stabilizer state is $LC$-equivalent to a generalized star graph $\mathcal{G}$ under \emph{some} partition. We exhaustively verify this conjecture up through $n=8$ qubits, though the conjecture may hold at higher qubit number. We now conclude by discussing broader implications and extensions of these results.

\section{Discussion: Toward a Complete Characterization of MMI Violation in Stabilizer States}\label{Discussion}

In this work, we progress towards a complete characterization of monogamy of mutual information (MMI) violation for graph states by identifying algebraic conditions on adjacency matrices that determine the evaluation of specific MMI instances. We first present a family of graphs $G$, composed of disjoint subgraphs connected only through a single-vertex center $c$ (shown in Figure~\ref{fig:G_m-Graph}), and prove that the associated graph states $\ket{G}$ will always fail the set of MMI instances $MMI_{cIJ}$, defined in Eq.\ \eqref{eq:MMI-Fail-G_m}. The set of MMI instances $MMI_{cIJ}$ is defined by a pairwise disjoint tripartite subsystem $(\{c\},I,J)$, with a nonempty complement $K$, such that all paths between $I$ and $J$ must include only $c$.

We next extend our analysis to the broader class of generalized star graphs $\mathcal{G}$, in which the central node $c$ is promoted from a single vertex to an arbitrary subgraph $C$. Unlike $\ket{G}$, the corresponding graph states $\ket{\mathcal{G}}$ are not guaranteed to violate $MMI_{CIJ}$ based solely on their star-like graph topology. We therefore provide a complete characterization of $MMI_{CIJ}$ evaluation for any $\ket{\mathcal{G}}$, expressed in terms of the algebraic structure of adjacency submatrices that encode nontrivial adjacencies between elements in the partition $\{C,\ I,\ J,\ K\}$. The essential features are captured by matrix subblocks $CI$, $CJ$, and $CK$, which we systematically classify in Table~\ref{tab:MMIcases} and summarize by the following.
\begin{enumerate}
\item A nontrivial intersection of the column spaces of $CI$, $CJ$, and $CK$ is a \emph{necessary} condition to violate $MMI_{CIJ}$.
\item Requiring that intersections distribute over subspace sums, for the column spaces of $CI$, $CJ$, and $CK$, gives a \emph{sufficient} condition for $MMI_{CIJ}$ violation.
\item Configurations which are both trivially intersecting and distributive guarantee saturation of $MMI_{CIJ}$, whereas trivially intersecting and non-distributive configurations strictly satisfy $MMI_{CIJ}$.
\item Configurations with a non-trivial intersection and non-distributive column spaces are unconstrained in their $MMI_{CIJ}$ evaluation.
\end{enumerate}
For each case in the above categorization, we provide an explicit graph construction in Tables~\ref{tab:MMI-characterization-examples-1-3} and \ref{tab:case-4-graph-examples}.

Our analysis focused on MMI violations in graph states, providing a complete characterization of $MMI_{CIJ}$ evaluation for generalized star graphs $\mathcal{G}$. Nevertheless, computational evidence identifies graph states that fail MMI while possessing no single nontrivial intersection, nor any such intersection in their full $LC$ orbit, for any valid choice of $\{C,I,J\}$ (see Table \ref{tab:MMI-Failing-EightQ-Graphs-Missing-Nontrivial-Int}). These examples therefore demonstrate MMI violations not captured by the presence of a star graph structure. 

The analysis techniques developed in this work can, in principle, be extended to explore generic MMI instances, enabling a fully general study of MMI failure in graph states. The primary challenge for this extension is that, for general MMI instances, the relevant submatrix ranks involve decomposing the rank of a $2 \times 2$ block matrix, preventing a simplification analogous to Eq.\ \eqref{eq:appendix-MMI-Fail-ColSpaces-G_M}. Further work might seek to alternatively express the ranks of block matrices in terms of their blocks, thereby avoiding recursive applications of the subspace addition formula in Eq.\ \eqref{eq:DimSubspaceAddition}. Additional approaches include leveraging rank-exposing decompositions, or related matrix identities that enable the simplification of block matrix rank into an amenable form, ultimately facilitating a general framework for studying MMI violation.

This work establishes a correspondence between the algebraic features that determine $MMI$ evaluation for graph states, and the operators used to prepare such states. A nontrivial intersection among the column spaces of $CI$, $CJ$, and $CK$ implies the existence of an operator that can be constructed independently of the operators used to entangle each pair of subsystems. Distributivity, meanwhile, reveals a higher degree of organization: there exists a minimal generating set of operators for all entanglement patterns between qubits and the center $C$, whose subsets reproduce the entangling operators for $CI$, $CJ$, and $CK$ individually.

One natural follow-up to this work is to characterize MMI failure for the graph states shown in Table~\ref{tab:MMI-Failing-EightQ-Graphs-Missing-Nontrivial-Int}, which satisfy all $MMI_{CIJ}$ instances under which they are $\mathcal{G}$ graphs. However, these states still violate MMI instances, just not ones under which there is an LC orbit member that is $\mathcal{G}$. For example, graph $(1)$ in Table~\ref{tab:MMI-Failing-EightQ-Graphs-Missing-Nontrivial-Int} fails instances $MMI_{\{16,\ 25,\ 34\}}$, $MMI_{\{16,\ 34,\ 78\}}$, and $MMI_{\{25,\ 34,\ 78\}}$, with $MMI_{\{16,\ 25,\ 34\}}$ colored in the figure. Up to symmetry, the unique instances failed are $MMI_{\{16,\ 25,\ 34\}}$ and $MMI_{\{25,\ 34,\ 78\}}$, out of the $7770$ distinct MMI instances that exist at $n=8$ qubits. Comparing these MMI failures, e.g. $MMI_{\{16,\ 25,\ 34\}}$ and $MMI_{\{25,\ 34,\ 78\}}$, to those observed in the graphs of Table~\ref{tab:MMI-Failing-EightQ-Graphs-NontrivialInt}, which violate the chosen $MMI_{CIJ}$, may provide insights towards a complete characterization of MMI violation in graph states.

The correspondence between entropy inequality evaluation and the algebraic structure of adjacency matrix components extends beyond MMI. Higher-party holographic inequalities are especially appealing targets given their physical relevance and inherent symmetry. All holographic entropy inequalities are are balanced, meaning each subsystem appears the same number of times on each side of the inequality~\cite{Bao_2015,Czech:2022fzb,He:2020xuo}. The methods introduced in this work are particularly effective for evaluating such inequalities on generalized star graphs, as the subgraph disjointness condition~\ref{condition: genDisconnection} ensures that each entropy reduces to the rank of a single row of block matrices for inequalities analogous to $MMI_{CIJ}$. Rewriting these inequalities using the adjacency matrix rank expression in Eq.\ \eqref{eq:EEadj}, and applying the subspace dimension addition relation, allows the inequality to be completely expressed in terms of intersection dimensions of specific subspaces. For balanced inequalities on $N$ regions, such as those constraining the entropy vectors for holographic states, all individual submatrix ranks cancel. This reformulation should enable the systematic exploration of a broader set of entropy inequalities using the same techniques presented herein.

The discovery of the graphs $G$, shown in Figure~\ref{fig:G_m-Graph}, was inspired by the repeated appearance of a four-star subgraph within the LC orbit of graph states that fail MMI. We formalized why generalized star graphs fail MMI, by establishing the necessary condition of a nontrivial intersection in Observation~\ref{obs:necessary-condition-generalized-graph}. Interestingly, even certain graph states that do not violate $MMI_{CIJ}$ contain the same ``forbidden'' four-star in their LC orbits (see Table~\ref{tab:MMI-Failing-EightQ-Graphs-Missing-Nontrivial-Int}). This observation motivates the forbidden-subgraph conjecture, stated in Conjecture \ref{conjecture:forbidden-subgraph}, which posits the \textit{necessity} of an induced four-star in the LC orbit of any graph whose associated graph state violates $MMI$. This conjecture realizes every MMI-violating stabilizer entropy vector as a graph state $\ket{\mathcal{G}}$, since containment of a $K_4$ subgraph is sufficient for a graph to be of type $\mathcal{G}$ (see Section \ref{section:Physical-Interpretation-Nontrivial-Intersection}).

If an induced four-star in the LC orbit is indeed necessary for any MMI failure, it becomes compelling to ask whether the empirical observation that MMI violations precede other holographic entropy inequality violations for $n\leq 6$ \cite{Keeler_2025} is coincidental or structural. If higher-party violations decompose into their minimal violators, refined versions of our methods, which would account for MMI failure in general, could be used to test whether MMI failure is necessary for \emph{all} higher-party violations. 

For $n\leq 8$ qubits, as shown in Figures~\ref{tab:MMI-Satisfying-EightQ-Graphs} and \ref{tab:MMI-Failing-EightQ-Graphs-NontrivialInt}, a nontrivial intersection of the column spaces of $CI$, $CJ$, and $CK$ (Observation~\ref{obs:necessary-condition-generalized-graph}) appears not only necessary for failure of $MMI_{CIJ}$ but also sufficient to guarantee failure of \emph{some} MMI instance. The challenge in proving the sufficiency of a nontrivial intersection for MMI failure lies in managing the edge cases across all MMI instances whose outcome can be determined only by this algebraic requirement.

The methods established in this work apply directly to graph states and, by local Clifford equivalence, to stabilizer states~\cite{VanDenNestDehaeneDeMoor2004}. The algebraic conditions derived herein could be extended from qubits to qudit systems, potentially offering insights into state preparation where unitary operators that prepare states and entropies are expressed using a directed graph representation~\cite{Salton:2016qpp,Munizzi:2025suf}. Our calculations in this work rely on an adjacency-matrix computation for entanglement entropy, as in Eq.\ \eqref{eq:EEadj}. If such a formula holds for directed graph representations, then our methods apply equivalently. Moreover, this framework provides a bridge to studying non-stabilizer resources based in entropy calculations~\cite{Pollack:2024bnj}, i.e. such as stabilizer Renyi entropy~\cite{Leone:2021rzd} or non-local quantum magic~\cite{Bao:2022mkc,Dallas:2025zxs}, relating their structure to entanglement-based diagnostics, e.g. capacity of entanglement~\cite{DeBoer:2018kvc,Cao:2024nrx,Khumalo:2025xfv} or operator entanglement~\cite{Andreadakis:2025mfw}, to capture deviations from stabilizerness.

Figures~\ref{tab:MMI-Satisfying-EightQ-Graphs}--\ref{tab:MMI-Failing-EightQ-Graphs-Missing-Nontrivial-Int} demonstrate that every graph state, for $n\leq 8$, is LC-equivalent to a graph that is a generalized star under some partition as indicated for each graph.
Nevertheless, the property of being $\mathcal{G}$ under a given partition is not invariant under LC transformations.
Moreover, at $n=9$ qubits and above, two graph states which are not LC-equivalent can share the same entropy vector \cite{Fon-Der-Flaass1996,KIM202454}. Accordingly, it remains an open question whether \emph{every} graph is LC-equivalent to a nontrivial generalized star. As we note in Section~\ref{section:generalizedStarGraphs}, this question reduces to proving that every graph state on $n\geq 4$ vertices is LC-equivalent to a graph with some $4$-vertex subset inducing an empty graph. If true, then generalized star graphs would enable an efficient canonical form for exploring the stabilizer entropy vector space.

The property of having a nontrivial intersection of the column spaces of $CI$, $CJ$, and $CK$, as in Observation~\ref{obs:necessary-condition-generalized-graph}, is not an LC invariant statement, as we demonstrate in the construction of Figures~\ref{tab:MMI-Satisfying-EightQ-Graphs}--\ref{tab:MMI-Failing-EightQ-Graphs-Missing-Nontrivial-Int}. A natural extension of this work is to formulate an LC-invariant version of the necessary condition for $MMI_{CIJ}$ failure, and ultimately to characterize the failure of all MMI instances.

A deeper understanding of MMI violation in graph states provides a tool for advancing both the theoretical foundations and practical implementation of future quantum networks. Many contemporary approaches to near-term quantum networks adopt a graph state model~\cite{Hahn_2019,Mannalath:2022tos,Sen:2023nmf,Azuma:2022nuy}, where network nodes correspond to graph vertices and entanglement is shared through physical channels, e.g. optical fibers. Once initialized, these  long-range entangled states enable a wide range distributed computing tasks, from entanglement sharing~\cite{Coffman2000, Hein_2004, HeinDurEisertBriegel2006,Negrin:2024tyj,Hughes:2025jwa} to quantum key distribution~\cite{Shor_2000,Gottesman:2002gg,Bennett:2014rmv,Mariani:2025qqu}. Characterizing how MMI evaluation is impacted by the underlying entanglement structure of the graph, and equivalently the associated network topology, informs how entanglement can be shared, localized, and redistributed among different network nodes. These insights inform the development of more robust and resource-efficient network protocols, providing direct technological applications for the constraints imposed by entropy inequalities.

\acknowledgments

The authors thank Adam Burchardt, Bartek Czech, William Kretschmer, and Monica Kang for helpful conversations. C. K. is supported by the U.S. Department of Energy under grant number DESC0019470 and by the Heising-Simons Foundation “Observational Signatures of Quantum Gravity” collaboration grants 2021-2818 and 2024-5305, and W.M. and J.F. have also been supported by 2021-2818. W.M. is supported by the Department of Energy (DOE) Office of Science (SC) Grant No DOE DE-FOA-0003432 and by Grant No GBMF12976 of the Gordon and Betty Moore Foundation. This work was performed in part at the Aspen Center for Physics, which is supported by National Science Foundation grant PHY-2210452.

\newpage

\appendix
\section{Additional Derivations and Proofs}

\subsection{Expressing $MMI_{CIJ}$ as the Dimension of Column Spaces of Adjacency Submatrices}
\label{A1:DimColEquations}

Recall that for any matrix $M$,
\begin{equation}
    \ra{M} = \dimension{\col{M}},
\end{equation}
where $\col{M}$ denotes the column space of $M$, which is the subspace spanned by the columns of $M$. Moreover, for any linear subspaces $W_1$ and $W_2$,
\begin{equation}
    \label{eq:DimSubspaceAddition}
    \dimension{W_1+W_2} = \dimension{W_1} + \dimension{W_2} - \dimension{W_1\cap W_2}.
\end{equation}
Thus, every entropy involved in $MMI_{CIJ}~\eqref{eq:MMI-Fail-G_M}$ can be expressed as follows:
\begin{equation}
\label{eq-block:MMI-G_M-Entropies}
    \begin{aligned}
      S_C &= \ra{\begin{bmatrix} C I & C J & C K \end{bmatrix}} \\
          &= \ra{\begin{bmatrix} C I & C J \end{bmatrix}} + \ra{C K}
             \\ & \quad- \dimension{(\col{C I} + \col{C J}) \cap \col{C K}} \\
          &= \ra{C I} + \ra{C J} + \ra{C K} 
             \\ & \quad- \dimension{\col{C I} \cap \col{C J}}
             - \dimension{(\col{C I} + \col{C J}) \cap \col{C K}} ,\\[0.9em]
      S_I   &= \ra{C I}, \\
      S_J   &= \ra{C J},  \\ 
      S_{CIJ} &= \ra{C K}, \\
      S_{CI} &= \ra{C J} + \ra{C K} - \dimension{\col{C J} \cap \col{C K}}, \\
      S_{CJ} &= \ra{C I} + \ra{C K} - \dimension{\col{C I} \cap \col{C K}}, \\
      S_{IJ} &= \ra{C I} + \ra{C J} - \dimension{\col{C I} \cap \col{C J}}
    \end{aligned}
\end{equation}
The entropies above allow us to express $MMI_{CIJ}$ as:
\begin{equation}
\label{eq:appendix-MMI-Fail-ColSpaces-G_M}
\begin{aligned}
&\dimension{\col{C  I} \cap \col{C  K}}+ \dimension{\col{C  J} \cap \col{C  K}}\\[1ex]
\leq \; & \dimension{\left( \col{C  I} + \col{C  J}\right) \cap \col{C  K}}.
\end{aligned}
\end{equation}

\subsection{A Classification of $\ket{\mathcal{G}}$ According to MMI Evaluation Based on the Properties of the Block Elements of $\Gamma^{\mathcal{L}}$}
\label{A2:sufficient-Condition-General-Graph}

Let $W_1, W_2$, and $W_3$ denote arbitrary subspaces of a vector space $V$. From the dimension formula for the sum of two subspaces \eqref{eq:DimSubspaceAddition}, it follows:
\begin{equation}\label{subspaceDimension}
    \resizebox{0.97\linewidth}{!}{$
    \dimension{W_1 \cap W_3 + W_2 \cap W_3}
    = \dimension{W_1 \cap W_3 }
    + \dimension{W_2 \cap W_3 }
    - \dimension{W_1\cap W_2 \cap W_3}.
    $}
\end{equation}

Additionally, the subspace inclusion
\begin{equation}\label{subspaceInclusion}
   W_1 \cap W_3 + W_2 \cap W_3 \subseteq (W_1 + W_2) \cap W_3,
\end{equation}
implies the following general inequality:
\begin{equation}
    \label{eq:Appendix-Gen-Intersection-Over-Sum-Inequality}
    \dimension{W_1 \cap W_3 } + \dimension{W_2 \cap W_3 } - \dimension{W_1\cap W_2 \cap W_3} \leq \dimension{(W_1 + W_2) \cap W_3}.
\end{equation}
Equality in \eqref{eq:Appendix-Gen-Intersection-Over-Sum-Inequality} holds precisely under the condition
\begin{equation}
    W_1 \cap W_3 + W_2 \cap W_3 = (W_1 + W_2)\cap W_3,
\end{equation}
which is known as distributivity. If distributivity holds, it remains valid under any permutation of the subspaces. As established in \cite[Section~1.7]{polishchuk2005quadratic}, a collection of subspaces $W_1, \cdots, W_N$ of $V$ is distributive if and only if there exists a common basis $\mathcal{B}$ of $V$ such that
\begin{equation}
\label{eq:DistributiveSubspacesCond}
    W_i = \spanv{B_i}, \quad B_i \subseteq \mathcal{B}, \text{ for } i\in \{1,\cdots, N \},
\end{equation}
where $\spanv{B_i}$ denotes the linear span of $B_i$. 

We now examine exhaustively all evaluations of the inequality \eqref{eq:Appendix-Gen-Intersection-Over-Sum-Inequality} according to the algebraic structures of the subspaces $W_1, W_2,$ and $W_3$.

\vspace{1em}

\textbf{Case 1.} \emph{$W_1, W_2, W_3$ are non-distributive} with $\dimension{W_1 \cap W_2 \cap W_3}=0$: In this case, Eq. \eqref{eq:Gen-Intersection-Over-Sum-Inequality} simplifies to
\begin{equation}
\label{eq:satisfiesSubspaces}
\dimension{W_1 \cap W_3} + \dimension{W_2 \cap W_3} < \dimension{(W_1 + W_2) \cap W_3}.
\end{equation}

\textbf{Case 2.} $W_1, W_2, W_3$ are \emph{distributive} with $\dimension{W_1 \cap W_2 \cap W_3}=0$: Then, equality holds exactly as:
\begin{equation}
\label{eq:saturatesSubspaces}
\dimension{W_1 \cap W_3} + \dimension{W_2 \cap W_3} = \dimension{(W_1 + W_2) \cap W_3}.
\end{equation}

\textbf{Case 3.} $W_1, W_2, W_3$ are \emph{distributive} with $\dimension{W_1 \cap W_2 \cap W_3} > 0$: Here, dimensional subadditivity ensures the strict inequality
\begin{align}
\dimension{W_1 \cap W_3} + \dimension{W_2 \cap W_3} 
&> \dimension{W_1 \cap W_3+W_2 \cap W_3}.
\end{align}
Since the subspaces are distributive $W_1 \cap W_3+W_2 \cap W_3 = (W_1+W_2)\cap W_3$, then
\begin{align}
\dimension{W_1 \cap W_3} + \dimension{W_2 \cap W_3} 
&> \dimension{(W_1+W_2)\cap W_3}.
\label{eq:failsSubspaces}
\end{align}

\textbf{Case 4.} $W_1, W_2, W_3$ are \emph{non-distributive} with $\dimension{W_1 \cap W_2 \cap W_3} > 0$: Under these assumptions, we examine the inequality
\begin{equation}
\label{eq:subspaceMMI}
\dimension{W_1 \cap W_3 } + \dimension{W_2 \cap W_3 } \leq \dimension{(W_1 + W_2) \cap W_3}.
\end{equation}
All previously considered conditions, namely, strict satisfaction \eqref{eq:satisfiesSubspaces}, saturation \eqref{eq:saturatesSubspaces}, and failure \eqref{eq:failsSubspaces}, remain possible for inequality \eqref{eq:subspaceMMI}. To provide explicit constructions that realize each possibility, we set $V=\mathbb{Z}_2^6$ as the ambient vector space containing subspaces $W_1$, $W_2$, and $W_3$. Let $\mathcal{B} = \{ e_1,e_2,e_3,e_4,e_5,e_6\}$ be a fixed basis of $\mathbb{Z}_2^6$. Table \ref{tab:Case4Examples} provides an example of a Case 4 construction for $W_1$, $W_2$, and $W_3$ that illustrates these distinct possibilities concerning inequality \eqref{eq:subspaceMMI}.

\begin{table}[H]
\centering
\footnotesize
\setlength{\tabcolsep}{5pt}
\renewcommand\arraystretch{1.15}

\begin{subtable}[t]{\textwidth}
  \centering
  \caption{Subspace Construction}
  \begin{tabularx}{\textwidth}{YYYY}
  \toprule
  \textbf{Evaluation} & $W_1$ & $W_2$ & $W_3$\\
  \midrule
  ST & $\spanv{e_1,e_2}$ & $\spanv{e_1,e_3}$ &
        $\spanv{e_1,e_2+e_3}$\\
  F & $\spanv{e_1,e_3,e_4}$ & $\spanv{e_2,e_3,e_4}$ &
        $\spanv{e_1+e_2,e_3,e_4}$\\
  S & $\spanv{e_1,e_2,e_3,e_5}$ & $\spanv{e_1,e_2,e_4,e_6}$ &
        $\spanv{e_1+e_2,e_3+e_4,e_5+e_6}$\\
  \bottomrule
  \end{tabularx}
\end{subtable}

\vspace{6pt}

\begin{subtable}[t]{\textwidth}
  \centering
  \caption{Intersections}
  \begin{tabularx}{\textwidth}{YYYYc}
  \toprule
  \textbf{Evaluation} & $W_1\!\cap\!W_3$ & $W_2\!\cap\!W_3$ &
  $(W_1{+}W_2)\!\cap\!W_3$ \\
  \midrule
  ST & $\spanv{e_1}$ & $\spanv{e_1}$ &
        $\spanv{e_1,e_2+e_3}$\\
  F & $\spanv{e_3,e_4}$ & $\spanv{e_3,e_4}$ &
        $\spanv{e_1+e_2,e_3,e_4}$ \\
  S & $\spanv{e_1+e_2}$ & $\spanv{e_1+e_2}$ &
        $\spanv{e_1+e_2,e_3+e_4,e_5+e_6}$ \\
  \bottomrule
  \end{tabularx}
\end{subtable}

\caption{Example constructions of subspaces $W_1,W_2, W_3 \subseteq \mathbb{Z}_2^6$ are provided, illustrating saturation (ST), failure (F), and satisfaction (S) with respect to inequality \eqref{eq:subspaceMMI}. The vectors $\{e_i\}_{i=1}^{i=6}$ form a fixed basis of the ambient space $\mathbb{Z}_2^6$. The notation $\langle e_1,\cdots, e_k\rangle$ denotes the linear span of $\{e_1,\cdots, e_k\}.$}
\label{tab:Case4Examples}
\end{table}

Specializing the preceding discussion by setting $W_1 = W_I$, $W_2=W_J$, and $W_3=W_K$ as subspaces of $V=Z_{2}^{|C|}$, one directly obtains the MMI evaluation corresponding to each subclass of $\mathcal{G}$ detailed in Table \ref{tab:MMIcases}. Table \ref{tab:MMI-characterization-examples-1-3} presents example graphs corresponding to the first three cases in Table \ref{tab:MMIcases}, while Table \ref{tab:case-4-graph-examples} presents graphs that realize each Case 4 evaluation example in Table~\ref{tab:Case4Examples} (a).

\subsection{Deducing the Claims in Section \ref{subsection:physical-interpretation}}
\label{A3:deducing-physical-interp-claims}

\begin{proposition}
    Operations on $\ket{\mathcal{G}}$ that act invariantly on $\col{CI}, \col{CJ}$ and $\col{CK}$:
    \begin{enumerate}
        \item Leave every entropy of $\ket{\mathcal{G}}$ in $MMI_{CIJ}$ invariant.
        \item Preserve the $MMI_{CIJ}$ evaluation and classification of $\ket{\mathcal{G}}$.
        \item Generally alter the entropy vector of $\ket{\mathcal{G}}$.
    \end{enumerate}
\end{proposition}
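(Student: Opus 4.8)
The plan is to reduce all three claims to a single structural observation: every quantity entering $MMI_{CIJ}$ is a function of the triple of subspaces $W_I=\col{CI}$, $W_J=\col{CJ}$, $W_K=\col{CK}$ alone, whereas the full entropy vector is not. Write $\Phi$ for an operation that fixes $W_I,W_J,W_K$ as subspaces of $\mathbb{Z}_2^{|C|}$. For claim~(1) I would invoke the block expansion of Appendix~\ref{A1:DimColEquations}, Eq.~\eqref{eq-block:MMI-G_M-Entropies}, which writes each of the seven entropies $S_I,S_J,S_{CIJ},S_{CI},S_{CJ},S_{IJ},S_C$ purely through $\dim W_I,\dim W_J,\dim W_K$, the three pairwise intersection dimensions, and the single mixed term $\dim\big((W_I+W_J)\cap W_K\big)$. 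Sums, intersections, and dimensions of subspaces are intrinsic to the subspaces themselves, so fixing $W_I,W_J,W_K$ fixes each of these numbers and hence all seven entropies; this step is essentially immediate once the formulas are in hand.

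Claim~(2) then splits in two. The numerical evaluation (satisfy/saturate/fail) of \eqref{eq:MMI-Fail-G_M} is a function of the seven entropies, already shown invariant in~(1), so the outcome is unchanged. The Table~\ref{tab:MMIcases} classification is decided by exactly two pieces of data, namely triviality of $W_I\cap W_J\cap W_K$ and the distributivity relation \eqref{eq:subspaceDistributivity}; both are properties of the triple $(W_I,W_J,W_K)$ and are therefore preserved by $\Phi$, so the case assignment is fixed.

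For claim~(3) I would exhibit an explicit $\Phi$ of the allowed type that nonetheless moves a subsystem entropy lying outside $MMI_{CIJ}$. The natural choice is an elementary column operation within a single outer block, which Section~\ref{section:Physical-Interpretation-Nontrivial-Intersection} identifies with a (non-local) Clifford circuit: replacing a column $\mathbf{v}_{i_2}$ of $CI$ by $\mathbf{v}_{i_1}\oplus\mathbf{v}_{i_2}$ preserves $\col{CI}$ and trivially leaves $\col{CJ},\col{CK}$ untouched. Concretely I would take $C=\{c_1,c_2\}$ with $i_1,i_2\in I$ both adjacent only to $c_1$ and single anchors from $J$ and $K$ to $c_2$; then $\mathbf{v}_{i_1}=\mathbf{v}_{i_2}=(1,0)^{T}$, the operation sends $\mathbf{v}_{i_2}\mapsto(0,0)^{T}$, leaving $\col{CI}=\spanv{(1,0)^{T}}$ intact while isolating the vertex $i_2$. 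The single-vertex entropy $S_{\{i_2\}}$ thus drops from $1$ to $0$; since $\{i_2\}$ is a proper subset of $I$, it is not among the seven $MMI_{CIJ}$ entropies, so the entropy vector genuinely changes while (1)--(2) continue to hold, which is exactly what the qualifier ``generally alter'' requires.

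Claims~(1) and~(2) are bookkeeping built directly on Eq.~\eqref{eq-block:MMI-G_M-Entropies}, so the main obstacle is~(3). There one must verify that the exhibited operation is genuinely a hypothesis instance (column-space preserving, rather than an arbitrary Clifford), that the resulting graph remains a valid generalized-star graph state, and, most importantly, that the entropy which moves is truly external to $MMI_{CIJ}$ rather than one of its seven terms in disguise. Establishing these points precisely is where the argument requires care.
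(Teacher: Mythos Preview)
Your proposal is correct and follows essentially the same approach as the paper. For claims (1) and (2) the argument is identical: invoke Eq.~\eqref{eq-block:MMI-G_M-Entropies} to see that all seven $MMI_{CIJ}$ entropies, and the Table~\ref{tab:MMIcases} classification data (distributivity and the triple intersection), are functions of the three subspaces alone. For claim (3) you go slightly further than the paper: the paper's proof simply notes that column operations on $CI,CJ,CK$ are realized as non-local $CZ$ sequences and therefore \emph{could} alter the full adjacency matrix and hence the entropy vector, without exhibiting a witness; you supply an explicit instance (isolating a redundant leaf $i_2\in I$) that pins down a concrete entropy outside the seven $MMI_{CIJ}$ terms which actually changes. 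Your version is thus a modest strengthening of the paper's argument at that point, but the underlying idea---column operations preserve the block column spaces yet act non-locally on the state---is the same.
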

\begin{proof}
    The first statement of the proposition follows immediately from the equation block \eqref{eq-block:MMI-G_M-Entropies}. This block contains every entropy involved in $MMI_{CIJ}~\eqref{eq:MMI-Fail-G_M}$, and shows that they are expressed only in terms of $\col{CI}$, $\col{CJ}$, and $\col{CK}$. Consequently, operations that preserve $\col{CI}$, $\col{CJ}$, and $\col{CK}$ simultaneously leave every entropy involved in $MMI_{CIJ}$ invariant. 

    The second statement follows immediately from the first, as if every entropy in $MMI_{CIJ}$ is preserved, its evaluation will remain invariant. Furthermore, $MMI_{CIJ}$ classification per Table \ref{tab:MMIcases} is exclusively made in terms of $\col{CI}$, $\col{CJ}$, and $\col{CK}$, meaning that their algebraic properties such as distributivity and intersections will be unchanged if these subspaces are also unchanged.

    The third statement is a consequence of column operations on $CI$, $CJ$, and $CK$, which preserve their column spaces, being realized on $\ket{\mathcal{G}}$ as a sequence of $CZ$ operations, which are non-local. In a given partition of $\ket{\mathcal{G}}$ specified by $\{C,I,J\}$, we can apply focalized transformations that correspond to column operations on $\ket{\mathcal{G}}$ that preserve, $\col{CI}$, $\col{CJ}$, and $\col{CK}$, separately, since $I, J$ and $K$ are pairwise disjoint subsystems. Nevertheless, these transformations could alter the column space of the complete adjacency matrix of $\mathcal{G}$, possibly changing the entropy vector of $\ket{\mathcal{G}}$.
\end{proof}

\begin{proposition}
    Every unique failure of $MMI_{CIJ}$ for states of type $\ket{\mathcal{G}}$ is realized by a graph state whose representation only admits star graphs.
\end{proposition}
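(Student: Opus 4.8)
The plan is to exploit the fact, established in the preceding Proposition, that any operation fixing the three column spaces $\col{CI}$, $\col{CJ}$, and $\col{CK}$ leaves the $MMI_{CIJ}$~\eqref{eq:MMI-Fail-G_M} evaluation unchanged, and to supplement it with the observation that a change of basis on the center register $\mathbb{Z}_2^{|C|}$ — realized physically by a Clifford circuit acting only on $C$ — applies a single invertible linear map simultaneously to all three subspaces and therefore preserves every intersection and sum dimension appearing in Eq.~\eqref{eq-block:MMI-G_M-Entropies}. Together these furnish a group of moves under which the failure of $MMI_{CIJ}$ is invariant, and the goal is to show that this group can always drive an $MMI_{CIJ}$-failing $\ket{\mathcal{G}}$ to a representative whose graph is a disjoint union of stars.

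First I would discard the self-adjacency blocks $CC$, $II$, $JJ$, and $KK$: by Eq.~\eqref{eq-block:MMI-G_M-Entropies} none of the seven entropies in $MMI_{CIJ}$ depends on them, so deleting the corresponding edges yields a bipartite graph between $C$ and $I\sqcup J\sqcup K$ that is still of type $\mathcal{G}$ and fails the same instance. Next, by Observation~\ref{obs:necessary-condition-generalized-graph} the failure forces $\col{CI}\cap\col{CJ}\cap\col{CK}\neq\varnothing$, and in the characterized (distributive) regime — Case~3 of Table~\ref{tab:MMIcases}, which includes the generalized-anchoring family $G^{\prime}$ of Observation~\ref{obs:proper-generalization-G} — Eq.~\eqref{eq:DistributiveSubspacesCond} supplies a common basis $\mathcal{B}$ of $\mathbb{Z}_2^{|C|}$ with $W_I=\spanv{B_I}$, $W_J=\spanv{B_J}$, $W_K=\spanv{B_K}$ for $B_I,B_J,B_K\subseteq\mathcal{B}$. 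Applying the center basis change that sends $\mathcal{B}$ to the standard basis makes $W_I$, $W_J$, $W_K$ coordinate subspaces, after which column operations on each block (which preserve its column space) reduce every column of $CI$, $CJ$, and $CK$ to a single standard basis vector. Each leaf then attaches to exactly one center, so the graph is precisely a disjoint union of stars, and its column spaces — hence its $MMI_{CIJ}$ failure — agree with the original up to the fixed invertible map.

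The step I expect to be the main obstacle is closing the gap between \emph{characterized} failure and \emph{every} failure, i.e.\ handling Case~4 of Table~\ref{tab:MMIcases}, where the column spaces intersect nontrivially but are non-distributive. A union of stars always produces coordinate column spaces, which are automatically distributive and for which the failure condition collapses exactly to $\dimension{W_I\cap W_J\cap W_K}>0$; consequently the seven entropies of any star representative satisfy the distributive identity, whereas a genuinely non-distributive Case~4 failure violates it. Thus no union of stars can reproduce a Case~4 entropy profile verbatim, and the proposition can hold only if ``realized'' is read as reproducing the failing instance rather than the full entropy vector, or if one proves that whenever Case~4 actually fails the violation is already carried by a distributive sub-configuration. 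I would attack this last point using the compositional structure noted after Eq.~\eqref{K4Inclusion}: isolate, inside the nontrivial intersection, a single shared vector and the induced four-star $\mathcal{G}[\{i,j,k,c\}]\cong K_4$ it certifies, argue that the remaining columns can be cleared by column-space-preserving moves without destroying that four-star, and thereby exhibit the failure on a pure star skeleton.
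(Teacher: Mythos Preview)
Your first move --- deleting the self-adjacency blocks $CC$, $II$, $JJ$, $KK$ because none of the seven entropies in Eq.~\eqref{eq-block:MMI-G_M-Entropies} depend on them --- is the paper's entire proof. After that deletion the $\mathcal{G}$ conditions~\ref{condition: genDisjointness} and~\ref{condition: genDisconnection} leave only $C$-to-$(I\sqcup J\sqcup K)$ edges, and with $CC=0$ as well the stars centered at each $c_p\in C$ are edge-disjoint and cover the graph; the paper stops there and declares the result a union of stars. No basis change, no column operations, no case split.

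Everything past your first step pursues a strictly stronger normal form --- a \emph{vertex-disjoint} union of stars in which every leaf has degree one --- via a basis change on $\mathbb{Z}_2^{|C|}$ and column reductions. You correctly observe that this stronger target forces $\col{CI}$, $\col{CJ}$, $\col{CK}$ to be coordinate subspaces, hence distributive, and so cannot reproduce a genuine Case~4 $MMI_{CIJ}$ failure with its exact entropy profile. That obstruction is real for the form you are chasing, but it is self-inflicted: the paper does not ask for vertex-disjointness, and the bipartite graph produced by the self-adjacency deletion retains the \emph{identical} blocks $CI$, $CJ$, $CK$ --- and therefore the identical $MMI_{CIJ}$ evaluation, including any Case~4 failure --- with no distributivity hypothesis needed. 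Drop everything after the first reduction.
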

\begin{proof}
    From Eq.~\eqref{eq-block:MMI-G_M-Entropies}, all entropies needed to evaluate $MMI_{CIJ}$\eqref{eq:MMI-Fail-G_M} are independent of the self adjacency blocks $CC, II, \dots$ of each element in the partition $\{ C, I, J, K\}$. Thus, we may fix these regions having no internal adjacencies, and still be able to obtain any portion of any graph state entropy vector that produces any outcome for $MMI_{CIJ}$ instances. Under such an assumption, the construction of $\mathcal{G}$ (conditions \ref{condition: genDisjointness} and \ref{condition: genDisconnection}) ensures we obtain a graph composed exclusively of the disjoint union of $n$-stars, $K_n$, which completes the proof. 
\end{proof}

\subsection{Graph Representatives for Every Stabilizer State Entropy Vector at $n=8$ Qubits, up to Symmetry, and Their MMI Evaluations}

This section presents a graph representative for each of the $182$ eight-qubit stabilizer-state entropy vectors \cite{Danielsen_2006}, unique up to qubit exchange. We chose each representative to be $\mathcal{G}$ under the specified partition $\{C, I,J,K\}$. We classify these graphs into three categories, according to the MMI behavior of their associated graph states:

\begin{enumerate}
\item Those that satisfy all MMI instances (Table \ref{tab:MMI-Satisfying-EightQ-Graphs});
\item Those that fail $MMI_{CIJ}$ and are  $\mathcal{G}$, necessarily with a nontrivial intersection for the partition $\{C,I,J,K\}$ (Table \ref{tab:MMI-Failing-EightQ-Graphs-NontrivialInt});
\item Those that fail MMI but do not exhibit a nontrivial intersection for any choice of $\{C,I,J\}$ under which the graph is $\mathcal{G}$ (Table \ref{tab:MMI-Failing-EightQ-Graphs-Missing-Nontrivial-Int}).
\end{enumerate}

No eight-qubit graph state that satisfies MMI possesses a nontrivial intersection for any $\{C,I,J\}$ under which it is nontrivially $\mathcal{G}$. Each graph representative was selected as the minimal-edge representative of its local Clifford ($LC$) orbit within its category. The full dataset, along with a detailed tutorial explaining how to reproduce the results and use the accompanying \textit{Mathematica} notebook, is available in the GitHub repository \cite{fuentes_munizzi2025mmi}.

\begin{center}


\section{MMI Failure in Stabilizer States}
\label{B1:MMI-Failure-Among-Stabilizer-States}

For stabilizer state entropy vectors on $n \leq 5$ qubits, we also have a complete description for which entropy vectors saturate, satisfy, and fail MMI. In the following subsections we discuss the various instances of MMI that arise at each qubit number, and how each relates to the corresponding set of stabilizer entropy vectors. 

\underline{Note:} Instances of MMI which include all single-qubit subregions in the inequality are trivially satisfied by stabilizer states. There is therefore no need to check such instances.
\subsubsection{Three Qubits}

For $n=3$ qubits there are $5$ total stabilizer state entropy vectors, which reduce to $3$ entropy vectors up to symmetry by qubit exchange. Table \ref{tab:StabilizerEntropyThreeQubits} gives a representative from each symmetry class, along with the number and percentage of overall states which have each entropy vector.
\begin{table}[h]
\centering
\begin{tabular}{c||c|c|c||c}
& A & B & C & \text{Num. States}\\
\hline
$s_{1,1,1}$ &  $0$ & $0$ & $0$  & $216 \quad (20\%)$\\
$s_{2,1}$ &  $1$ & $1$ & $0$  & $432 \quad (40\%)$\\
$s_3$ &  $1$ & $1$ & $1$  & $432 \quad (40\%)$\\
\end{tabular}
\caption{Stabilizer entropy vectors on $3$ qubits, all trivially saturate MMI.}
\label{tab:StabilizerEntropyThreeQubits}
\end{table}

Vectors $s_{1,1,1}$ and $s_{2,1}$ in Table \ref{tab:StabilizerEntropyThreeQubits} correspond to product states of the form $A \otimes B \otimes C$ and $AB \otimes C$, respectively. Vector $s_3$ represents GHZ-type entanglement among $3$ qubits. All vectors in Table \ref{tab:StabilizerEntropyThreeQubits} trivially saturate or satisfy the single $3$-qubit instance of MMI.

\subsubsection{Four Qubits}

At $4$ qubits, there are $18$ stabilizer state entropy vectors which arrange into $6$ equivalence classes up to qubit-exchange symmetry. Table \ref{tab:StabilizerEntropyFourQubits} gives a single representative from each equivalence class, as well as the number of overall states which admit an entropy vector in each equivalence class.\\
\begin{table}[h]
\centering
\begin{tabular}{c||c|c|c|c|c|c|c||c}
& A & B & C & D & AB & AC & AD & \text{Num. States}\\
\hline
$s_{1,1,1,1}$ &  $0$ & $0$ & $0$ & $0$ & $0$ & $0$ & $0$ &  1296 \quad (3.53\%)\\
$s_{2,1,1}$ &  $1$ & $1$ & $0$ & $0$ & $0$ & $1$ & $1$ &  5184 \quad (14.11\%)\\
$s_{3,1}$ &  $1$ & $1$ & $1$ & $0$ & $1$ & $1$ & $1$ &  10368 \quad (28.24\%)\\
$s_{2,2}$ &  $1$ & $1$ & $1$ & $1$ & $0$ & $2$ & $2$ &  1728 \quad (4.71\%)\\
\hline
$s_{4_1}$ &  $1$ & $1$ & $1$ & $1$ & $1$ & $1$ & $1$ &  2592 \quad (7.1\%)\\
$s_{4_2}$ &  $1$ & $1$ & $1$ & $1$ & $1$ & $2$ & $2$ &  15552 \quad (42.35\%)\\
\end{tabular}
\caption{Representatives of the $6$ equivalence classes, up to qubit exchange symmetry, for all $4$-qubit stabilizer state entropy vectors.}
\label{tab:StabilizerEntropyFourQubits}
\end{table}

The first $4$ vectors in Table \ref{tab:StabilizerEntropyFourQubits} are products of lower-qubit entropy vectors%
\footnote{These vectors lie at the base of the MMI cone in some sense, since several of their components are $0$.}%
, and thereby will trivially saturate all instances of MMI. Only $s_{4_1}$ and $s_{4_2}$ represent non-trivial entanglement among $4$ qubits.

At $4$ qubits there are $10$ instances of MMI. We express each instance in the table below, where the sum of the left-hand columns must be greater than or equal to the sum of the right

\begin{center}
    $
\begin{array}{c|||ccc||cccc|}
    \hline
    \multicolumn{7}{c}{\qquad IJ + IK + JK \geq I + J + K + IJK} \\
    \hline
      \hline
1& \{\text{A},\text{B}\} & \{\text{A},\text{C}\} & \{\text{B},\text{C}\} & \{\text{A}\} & \{\text{B}\} & \{\text{C}\} & \{\text{A},\text{B},\text{C}\} \\
      \hline
2& \{\text{A},\text{B}\} & \{\text{A},\text{D}\} & \{\text{B},\text{D}\} & \{\text{A}\} & \{\text{B}\} & \{\text{D}\} & \{\text{A},\text{B},\text{D}\} \\
      \hline
3& \{\text{A},\text{C}\} & \{\text{A},\text{D}\} & \{\text{C},\text{D}\} & \{\text{A}\} & \{\text{C}\} & \{\text{D}\} & \{\text{A},\text{C},\text{D}\} \\
      \hline
4& \{\text{B},\text{C}\} & \{\text{B},\text{D}\} & \{\text{C},\text{D}\} & \{\text{B}\} & \{\text{C}\} & \{\text{D}\} & \{\text{B},\text{C},\text{D}\} \\
      \hline
\end{array}$\\
\end{center}

Table \ref{tab:FourQMMITable} below gives the data for all $4$-qubit entropy vectors, stating whether each saturates, satisfies or fails MMI. Each column of Table \ref{tab:FourQMMITable} represents exactly one non-trivial instance of MMI, read from left to right (following top to bottom in the MMI instances table).

\begin{table}[h]
\centering
\begin{tabular}{|c||c|c|c|c|c|}
\hline
   &  1 & 2 & 3 & 4\\
     \hline
      \hline
  $s_{1,1,1,1}$ &  \text{S} & \text{S} & \text{S} &  \text{S}\\
     \hline
  $s_{2,1,1}$ &   \text{S} & \text{S} & \text{S} & \text{S}\\
     \hline
  $s_{3,1}$ &   \text{S} & \text{S} &  \text{S} & \text{S}\\
     \hline
   $s_{2,2}$ &  \text{S} & \text{S} &  \text{S} & \text{S}\\
     \hline
   $s_{4_1}$ &  \text{F} & \text{F} &  \text{F} & \text{F}\\
     \hline
   $s_{4_2}$ &  \text{Y} & \text{Y} &  \text{Y} &  \text{Y}\\
     \hline
\end{tabular}
    \caption{$S = \text{Saturates},\,Y = \text{Satisfies},\,F = \text{Fails}$.}
    \label{tab:FourQMMITable}
\end{table}

We observe that vector $s_{4_1}$ fails $4$ instances of MMI, and so lives outside the MMI cone. Vector $s_{4_2}$ lies inside the MMI cone, but does not saturate any non-trivial instance of MMI.

\subsubsection{Five Qubits}

At $5$ qubits, the $93$ stabilizer state entropy vectors reduce to the following $11$ representatives
\begin{table}[h]
\centering
\setlength{\tabcolsep}{3pt}      
\renewcommand{\arraystretch}{1.05}
\small                      
\begin{tabular}{c||c||c|c|c|c|c|c|c|c|c|c|c|c|c|c|c||c}
& & A & & & & E & AB &  &  &  &  &  &  &  &  & DE & \text{Num. States}\\
\hline
$s_{1,1,1,1,1}$ & $1$ & $0$ & $0$ & $0$ & $0$ & $0$ & $0$ & $0$ & $0$ & $0$ & $0$ & $0$ & $0$ & $0$ & $0$ & $0$ & $15552 \,(0.64\%)$\\
$s_{2,1,1,1}$ & $10$ & $1$ & $1$ & $0$ & $0$ & $0$ & $0$ & $1$ & $1$ & $1$ & $1$ & $1$ & $1$ & $0$ & $0$ & $0$ & $160704 \,(6.63\%)$\\
$s_{3,1,1}$ & $10$ &  $1$ & $1$ & $1$ & $0$ & $0$ & $1$ & $1$ & $1$ & $1$ & $1$ & $1$ & $1$ & $1$ & $1$ & $0$ & $51840 \,(2.14\%)$\\
$s_{2,2,1}$ & $15$ & $1$ & $1$ & $1$ & $1$ & $0$ & $0$ & $2$ & $2$ & $1$ & $2$ & $2$ & $1$ & $0$ & $1$ & $1$ & $400896 \,(16.54\%)$\\
$s_{3,2}$ & $10$ &  $1$ & $1$ & $1$ & $1$ & $1$ & $1$ & $1$ & $2$ & $2$ & $1$ & $2$ & $2$ & $2$ & $2$ & $0$ & $556416 \,(22.96\%)$\\
$s_{4_1,1}$ & $5$ & $1$ & $1$ & $1$ & $1$ & $0$ & $1$ & $1$ & $1$ & $1$ & $1$ & $1$ & $1$ & $1$ & $1$ & $1$ & $63072 \,(2.60\%)$\\
$s_{4_2,1}$ & $15$ & $1$ & $1$ & $1$ & $1$ & $0$ & $1$ & $2$ & $2$ & $1$ & $2$ & $2$ & $1$ & $1$ & $1$ & $1$ & $273024 \,(11.27\%)$\\
\hline
$s_{5_1}$ & $1$ & $1$ & $1$ & $1$ & $1$ & $1$ & $1$ & $1$ & $1$ & $1$ & $1$ & $1$ & $1$ & $1$ & $1$ & $1$ & $15552 \,(0.64\%)$\\
$s_{5_2}$ & $10$ & $1$ & $1$ & $1$ & $1$ & $1$ & $1$ & $2$ & $2$ & $2$ & $2$ & $2$ & $2$ & $1$ & $1$ & $1$ & $362880 \,(14.97\%)$\\
$s_{5_3}$ & $15$ &  $1$ & $1$ & $1$ & $1$ & $1$ & $1$ & $2$ & $2$ & $2$ & $2$ & $2$ & $2$ & $1$ & $2$ & $2$ & $513216 \,(21.18\%)$\\
$s_{5_4}$ & $1$ & $1$ & $1$ & $1$ & $1$ & $1$ & $2$ & $2$ & $2$ & $2$ & $2$ & $2$ & $2$ & $2$ & $2$ & $2$ & $10368 \,(0.43\%)$\\
\end{tabular}
\caption{Representatives of the $11$ equivalence classes, up to qubit exchange symmetry, for all $5$-qubit stabilizer state entropy vectors.}
\label{tab:StabilizerEntropyFiveQubits}
\end{table}

The first $7$ entropy vectors of Table \ref{tab:StabilizerEntropyFiveQubits} are lifts of lower-qubit entanglement structures. The first $5$ entropy vectors in particular are lifts of $3$-qubit entanglement, and will trivially satisfy all $5$-qubit instances of MMI.

Vectors $s_{4_1, 1}$ and $s_{4_2, 1}$ are lifts of entropy vectors $s_{4_1}$ and $s_{4_2}$ from Table \ref{tab:StabilizerEntropyFourQubits}. As with their $4$-qubit analogs, the vectors $s_{4_1, 1}$ and $s_{4_2, 1}$ fail and satisfy, respectively, $5$-qubit lifts of $4$-qubit MMI.

The final four entropy vectors of Table \ref{tab:StabilizerEntropyFiveQubits} are genuine instances of $5$-qubit entanglement. We discuss their relation to MMI below. At $5$ qubits there are $40$ unique instances of MMI, they are\\

$$
\left(
\begin{array}{ccc||cccc}
 \{\text{A},\text{B}\} & \{\text{A},\text{O}\} & \{\text{B},\text{O}\} & \{\text{A}\} & \{\text{B}\} & \{\text{O}\} & \{\text{A},\text{B},\text{O}\} \\
 \{\text{A},\text{B}\} & \{\text{A},\text{D}\} & \{\text{B},\text{D}\} & \{\text{A}\} & \{\text{B}\} & \{\text{D}\} & \{\text{A},\text{B},\text{D}\} \\
 \{\text{A},\text{B}\} & \{\text{A},\text{C}\} & \{\text{B},\text{C}\} & \{\text{A}\} & \{\text{B}\} & \{\text{C}\} & \{\text{A},\text{B},\text{C}\} \\
 \{\text{A},\text{C}\} & \{\text{A},\text{O}\} & \{\text{C},\text{O}\} & \{\text{A}\} & \{\text{C}\} & \{\text{O}\} & \{\text{A},\text{C},\text{O}\} \\
 \{\text{A},\text{C}\} & \{\text{A},\text{D}\} & \{\text{C},\text{D}\} & \{\text{A}\} & \{\text{C}\} & \{\text{D}\} & \{\text{A},\text{C},\text{D}\} \\
 \{\text{A},\text{D}\} & \{\text{A},\text{O}\} & \{\text{D},\text{O}\} & \{\text{A}\} & \{\text{D}\} & \{\text{O}\} & \{\text{A},\text{D},\text{O}\} \\
 \{\text{B},\text{C}\} & \{\text{B},\text{O}\} & \{\text{C},\text{O}\} & \{\text{B}\} & \{\text{C}\} & \{\text{O}\} & \{\text{B},\text{C},\text{O}\} \\
 \{\text{B},\text{C}\} & \{\text{B},\text{D}\} & \{\text{C},\text{D}\} & \{\text{B}\} & \{\text{C}\} & \{\text{D}\} & \{\text{B},\text{C},\text{D}\} \\
 \{\text{B},\text{D}\} & \{\text{B},\text{O}\} & \{\text{D},\text{O}\} & \{\text{B}\} & \{\text{D}\} & \{\text{O}\} & \{\text{B},\text{D},\text{O}\} \\
 \{\text{C},\text{D}\} & \{\text{C},\text{O}\} & \{\text{D},\text{O}\} & \{\text{C}\} & \{\text{D}\} & \{\text{O}\} & \{\text{C},\text{D},\text{O}\} \\
\end{array}
\right)
$$

$$
\left(
\begin{array}{ccc||cccc}
 \{\text{A},\text{O}\} & \{\text{A},\text{B},\text{D}\} & \{\text{B},\text{D},\text{O}\} & \{\text{A}\} & \{\text{O}\} & \{\text{B},\text{D}\} & \{\text{A},\text{B},\text{D},\text{O}\} \\
 \{\text{A},\text{D}\} & \{\text{A},\text{B},\text{O}\} & \{\text{B},\text{D},\text{O}\} & \{\text{A}\} & \{\text{D}\} & \{\text{B},\text{O}\} & \{\text{A},\text{B},\text{D},\text{O}\} \\
 \{\text{A},\text{O}\} & \{\text{A},\text{B},\text{C}\} & \{\text{B},\text{C},\text{O}\} & \{\text{A}\} & \{\text{O}\} & \{\text{B},\text{C}\} & \{\text{A},\text{B},\text{C},\text{O}\} \\
 \{\text{A},\text{C}\} & \{\text{A},\text{B},\text{O}\} & \{\text{B},\text{C},\text{O}\} & \{\text{A}\} & \{\text{C}\} & \{\text{B},\text{O}\} & \{\text{A},\text{B},\text{C},\text{O}\} \\
 \{\text{A},\text{D}\} & \{\text{A},\text{B},\text{C}\} & \{\text{B},\text{C},\text{D}\} & \{\text{A}\} & \{\text{D}\} & \{\text{B},\text{C}\} & \{\text{A},\text{B},\text{C},\text{D}\} \\
 \{\text{A},\text{C}\} & \{\text{A},\text{B},\text{D}\} & \{\text{B},\text{C},\text{D}\} & \{\text{A}\} & \{\text{C}\} & \{\text{B},\text{D}\} & \{\text{A},\text{B},\text{C},\text{D}\} \\
 \{\text{A},\text{O}\} & \{\text{A},\text{C},\text{D}\} & \{\text{C},\text{D},\text{O}\} & \{\text{A}\} & \{\text{O}\} & \{\text{C},\text{D}\} & \{\text{A},\text{C},\text{D},\text{O}\} \\
 \{\text{A},\text{D}\} & \{\text{A},\text{C},\text{O}\} & \{\text{C},\text{D},\text{O}\} & \{\text{A}\} & \{\text{D}\} & \{\text{C},\text{O}\} & \{\text{A},\text{C},\text{D},\text{O}\} \\
 \{\text{A},\text{B}\} & \{\text{A},\text{C},\text{O}\} & \{\text{B},\text{C},\text{O}\} & \{\text{A}\} & \{\text{B}\} & \{\text{C},\text{O}\} & \{\text{A},\text{B},\text{C},\text{O}\} \\
 \{\text{A},\text{B}\} & \{\text{A},\text{C},\text{D}\} & \{\text{B},\text{C},\text{D}\} & \{\text{A}\} & \{\text{B}\} & \{\text{C},\text{D}\} & \{\text{A},\text{B},\text{C},\text{D}\} \\
 \{\text{A},\text{C}\} & \{\text{A},\text{D},\text{O}\} & \{\text{C},\text{D},\text{O}\} & \{\text{A}\} & \{\text{C}\} & \{\text{D},\text{O}\} & \{\text{A},\text{C},\text{D},\text{O}\} \\
 \{\text{A},\text{B}\} & \{\text{A},\text{D},\text{O}\} & \{\text{B},\text{D},\text{O}\} & \{\text{A}\} & \{\text{B}\} & \{\text{D},\text{O}\} & \{\text{A},\text{B},\text{D},\text{O}\} \\
 \{\text{B},\text{O}\} & \{\text{A},\text{B},\text{D}\} & \{\text{A},\text{D},\text{O}\} & \{\text{B}\} & \{\text{O}\} & \{\text{A},\text{D}\} & \{\text{A},\text{B},\text{D},\text{O}\} \\
 \{\text{B},\text{D}\} & \{\text{A},\text{B},\text{O}\} & \{\text{A},\text{D},\text{O}\} & \{\text{B}\} & \{\text{D}\} & \{\text{A},\text{O}\} & \{\text{A},\text{B},\text{D},\text{O}\} \\
 \{\text{B},\text{O}\} & \{\text{A},\text{B},\text{C}\} & \{\text{A},\text{C},\text{O}\} & \{\text{B}\} & \{\text{O}\} & \{\text{A},\text{C}\} & \{\text{A},\text{B},\text{C},\text{O}\} \\
 \{\text{B},\text{C}\} & \{\text{A},\text{B},\text{O}\} & \{\text{A},\text{C},\text{O}\} & \{\text{B}\} & \{\text{C}\} & \{\text{A},\text{O}\} & \{\text{A},\text{B},\text{C},\text{O}\} \\
 \{\text{B},\text{D}\} & \{\text{A},\text{B},\text{C}\} & \{\text{A},\text{C},\text{D}\} & \{\text{B}\} & \{\text{D}\} & \{\text{A},\text{C}\} & \{\text{A},\text{B},\text{C},\text{D}\} \\
 \{\text{B},\text{C}\} & \{\text{A},\text{B},\text{D}\} & \{\text{A},\text{C},\text{D}\} & \{\text{B}\} & \{\text{C}\} & \{\text{A},\text{D}\} & \{\text{A},\text{B},\text{C},\text{D}\} \\
 \{\text{B},\text{O}\} & \{\text{B},\text{C},\text{D}\} & \{\text{C},\text{D},\text{O}\} & \{\text{B}\} & \{\text{O}\} & \{\text{C},\text{D}\} & \{\text{B},\text{C},\text{D},\text{O}\} \\
 \{\text{B},\text{D}\} & \{\text{B},\text{C},\text{O}\} & \{\text{C},\text{D},\text{O}\} & \{\text{B}\} & \{\text{D}\} & \{\text{C},\text{O}\} & \{\text{B},\text{C},\text{D},\text{O}\} \\
 \{\text{B},\text{C}\} & \{\text{B},\text{D},\text{O}\} & \{\text{C},\text{D},\text{O}\} & \{\text{B}\} & \{\text{C}\} & \{\text{D},\text{O}\} & \{\text{B},\text{C},\text{D},\text{O}\} \\
 \{\text{C},\text{O}\} & \{\text{A},\text{C},\text{D}\} & \{\text{A},\text{D},\text{O}\} & \{\text{C}\} & \{\text{O}\} & \{\text{A},\text{D}\} & \{\text{A},\text{C},\text{D},\text{O}\} \\
 \{\text{C},\text{D}\} & \{\text{A},\text{C},\text{O}\} & \{\text{A},\text{D},\text{O}\} & \{\text{C}\} & \{\text{D}\} & \{\text{A},\text{O}\} & \{\text{A},\text{C},\text{D},\text{O}\} \\
 \{\text{C},\text{O}\} & \{\text{A},\text{B},\text{C}\} & \{\text{A},\text{B},\text{O}\} & \{\text{C}\} & \{\text{O}\} & \{\text{A},\text{B}\} & \{\text{A},\text{B},\text{C},\text{O}\} \\
 \{\text{C},\text{D}\} & \{\text{A},\text{B},\text{C}\} & \{\text{A},\text{B},\text{D}\} & \{\text{C}\} & \{\text{D}\} & \{\text{A},\text{B}\} & \{\text{A},\text{B},\text{C},\text{D}\} \\
 \{\text{C},\text{O}\} & \{\text{B},\text{C},\text{D}\} & \{\text{B},\text{D},\text{O}\} & \{\text{C}\} & \{\text{O}\} & \{\text{B},\text{D}\} & \{\text{B},\text{C},\text{D},\text{O}\} \\
 \{\text{C},\text{D}\} & \{\text{B},\text{C},\text{O}\} & \{\text{B},\text{D},\text{O}\} & \{\text{C}\} & \{\text{D}\} & \{\text{B},\text{O}\} & \{\text{B},\text{C},\text{D},\text{O}\} \\
 \{\text{D},\text{O}\} & \{\text{A},\text{C},\text{D}\} & \{\text{A},\text{C},\text{O}\} & \{\text{D}\} & \{\text{O}\} & \{\text{A},\text{C}\} & \{\text{A},\text{C},\text{D},\text{O}\} \\
 \{\text{D},\text{O}\} & \{\text{A},\text{B},\text{D}\} & \{\text{A},\text{B},\text{O}\} & \{\text{D}\} & \{\text{O}\} & \{\text{A},\text{B}\} & \{\text{A},\text{B},\text{D},\text{O}\} \\
 \{\text{D},\text{O}\} & \{\text{B},\text{C},\text{D}\} & \{\text{B},\text{C},\text{O}\} & \{\text{D}\} & \{\text{O}\} & \{\text{B},\text{C}\} & \{\text{B},\text{C},\text{D},\text{O}\} \\
\end{array}
\right)
$$


Three-region MMI\\
$
\begin{array}{c||cccccccccc}
\hline
s_{1,1,1,1,1} & \text{S} & \text{S} & \text{S} & \text{S} & \text{S} & \text{S} & \text{S} & \text{S} & \text{S} & \text{S} \\
s_{2,1,1,1} & \text{S} & \text{S} & \text{S} & \text{S} & \text{S} & \text{S} & \text{S} & \text{S} & \text{S} & \text{S} \\
s_{3,1,1} & \text{S} & \text{S} & \text{S} & \text{S} & \text{S} & \text{S} & \text{S} & \text{S} & \text{S} & \text{S} \\
s_{2,2,1} & \text{S} & \text{S} & \text{S} & \text{S} & \text{S} & \text{S} & \text{S} & \text{S} & \text{S} & \text{S} \\
s_{3,2} & \text{S} & \text{S} & \text{S} & \text{S} & \text{S} & \text{S} & \text{S} & \text{S} & \text{S} & \text{S} \\
s_{4_1,1} & \text{S} & \text{F} & \text{F} & \text{S} & \text{F} & \text{S} & \text{S} & \text{F} & \text{S} & \text{S} \\
s_{4_2,1} & \text{S} & \text{Y} & \text{Y} & \text{S} & \text{Y} & \text{S} & \text{S} & \text{Y} & \text{S} & \text{S} \\
s_{5_1} & \text{F} & \text{F} & \text{F} & \text{F} & \text{F} & \text{F} & \text{F} & \text{F} & \text{F} & \text{F} \\
s_{5_2} & \text{Y} & \text{Y} & \text{Y} & \text{S} & \text{S} & \text{S} & \text{S} & \text{S} & \text{S} & \text{F} \\
s_{5_3} & \text{Y} & \text{S} & \text{S} & \text{Y} & \text{S} & \text{Y} & \text{Y} & \text{S} & \text{Y} & \text{Y} \\
s_{5_4} & \text{Y} & \text{Y} & \text{Y} & \text{Y} & \text{Y} & \text{Y} & \text{Y} & \text{Y} & \text{Y} & \text{Y} \\
\hline
\end{array}$\\

Four region MMI\\
$\begin{array}{c||cccccccccccccccccccccccccccccc}
\hline
s_{1,1,1,1,1} & \text{S} & \text{S} & \text{S} & \text{S} & \text{S} & \text{S} & \text{S} & \text{S} & \text{S} & \text{S} & \text{S} & \text{S} & \text{S} & \text{S} & \text{S} & \text{S} & \text{S} & \text{S} & \text{S} & \text{S} & \text{S} & \text{S} & \text{S} & \text{S} & \text{S} & \text{S} & \text{S} & \text{S} & \text{S} & \text{S} \\
s_{2,1,1,1} & \text{S} & \text{S} & \text{S} & \text{S} & \text{S} & \text{S} & \text{S} & \text{S} & \text{S} & \text{S} & \text{S} & \text{S} & \text{S} & \text{S} & \text{S} & \text{S} & \text{S} & \text{S} & \text{S} & \text{S} & \text{S} & \text{S} & \text{S} & \text{S} & \text{S} & \text{S} & \text{S} & \text{S} & \text{S} & \text{S} \\
s_{3,1,1} & \text{S} & \text{S} & \text{S} & \text{S} & \text{S} & \text{S} & \text{S} & \text{S} & \text{S} & \text{S} & \text{S} & \text{S} & \text{S} & \text{S} & \text{S} & \text{S} & \text{S} & \text{S} & \text{S} & \text{S} & \text{S} & \text{S} & \text{S} & \text{S} & \text{S} & \text{S} & \text{S} & \text{S} & \text{S} & \text{S} \\
s_{2,2,1} & \text{S} & \text{S} & \text{S} & \text{S} & \text{S} & \text{S} & \text{S} & \text{S} & \text{S} & \text{S} & \text{S} & \text{S} & \text{S} & \text{S} & \text{S} & \text{S} & \text{S} & \text{S} & \text{S} & \text{S} & \text{S} & \text{S} & \text{S} & \text{S} & \text{S} & \text{S} & \text{S} & \text{S} & \text{S} & \text{S} \\
s_{3,2} & \text{S} & \text{S} & \text{S} & \text{S} & \text{S} & \text{S} & \text{S} & \text{S} & \text{S} & \text{S} & \text{S} & \text{S} & \text{S} & \text{S} & \text{S} & \text{S} & \text{S} & \text{S} & \text{S} & \text{S} & \text{S} & \text{S} & \text{S} & \text{S} & \text{S} & \text{S} & \text{S} & \text{S} & \text{S} & \text{S} \\
s_{4_1,1} & \text{S} & \text{F} & \text{S} & \text{F} & \text{S} & \text{S} & \text{S} & \text{F} & \text{F} & \text{S} & \text{F} & \text{F} & \text{S} & \text{F} & \text{S} & \text{F} & \text{S} & \text{S} & \text{S} & \text{F} & \text{F} & \text{S} & \text{F} & \text{S} & \text{S} & \text{S} & \text{F} & \text{S} & \text{S} & \text{S} \\
s_{4_2,1} & \text{S} & \text{Y} & \text{S} & \text{Y} & \text{S} & \text{S} & \text{S} & \text{Y} & \text{Y} & \text{S} & \text{Y} & \text{Y} & \text{S} & \text{Y} & \text{S} & \text{Y} & \text{S} & \text{S} & \text{S} & \text{Y} & \text{Y} & \text{S} & \text{Y} & \text{S} & \text{S} & \text{S} & \text{Y} & \text{S} & \text{S} & \text{S} \\
s_{5_1} & \text{F} & \text{F} & \text{F} & \text{F} & \text{F} & \text{F} & \text{F} & \text{F} & \text{F} & \text{F} & \text{F} & \text{F} & \text{F} & \text{F} & \text{F} & \text{F} & \text{F} & \text{F} & \text{F} & \text{F} & \text{F} & \text{F} & \text{F} & \text{F} & \text{F} & \text{F} & \text{F} & \text{F} & \text{F} & \text{F} \\
s_{5_2} & \text{S} & \text{S} & \text{S} & \text{S} & \text{S} & \text{S} & \text{Y} & \text{Y} & \text{Y} & \text{Y} & \text{Y} & \text{Y} & \text{S} & \text{S} & \text{S} & \text{S} & \text{S} & \text{S} & \text{Y} & \text{Y} & \text{Y} & \text{S} & \text{S} & \text{F} & \text{F} & \text{S} & \text{S} & \text{S} & \text{F} & \text{S} \\
s_{5_3} & \text{Y} & \text{S} & \text{Y} & \text{S} & \text{Y} & \text{Y} & \text{Y} & \text{S} & \text{S} & \text{Y} & \text{S} & \text{S} & \text{Y} & \text{S} & \text{Y} & \text{S} & \text{Y} & \text{Y} & \text{Y} & \text{S} & \text{S} & \text{Y} & \text{S} & \text{Y} & \text{Y} & \text{Y} & \text{S} & \text{Y} & \text{Y} & \text{Y} \\
s_{5_4} & \text{Y} & \text{Y} & \text{Y} & \text{Y} & \text{Y} & \text{Y} & \text{Y} & \text{Y} & \text{Y} & \text{Y} & \text{Y} & \text{Y} & \text{Y} & \text{Y} & \text{Y} & \text{Y} & \text{Y} & \text{Y} & \text{Y} & \text{Y} & \text{Y} & \text{Y} & \text{Y} & \text{Y} & \text{Y} & \text{Y} & \text{Y} & \text{Y} & \text{Y} & \text{Y} \\
\hline
\end{array}$\\
$S = \text{Saturates},\,Y = \text{Satisfies},\,F = \text{Fails}$.\\

The vectors $s_{5_1}$ and $s_{5_2}$ lie outside the MMI cone since each fails at least one instance of MMI. Vector $s_{5_3}$ lies inside the MMI cone, though residing on one of its faces. Vector $s_{5_4}$ lies clearly inside the MMI cone, saturating only the trivial instances of MMI which are saturated by all $5$-qubit stabilizer states.

\newpage

\subsubsection{Six Qubits}

At $6$ qubits there are $760$ stabilizer state entropy vectors, which arrange into $26$ equivalence symmetry classes up to qubit exchange. There are $15$ lifts of lower-qubit entanglement and $11$ new instances of $6$-qubit entanglement.
\begin{figure}[H]
\centering
\small
\begin{overpic}[width=0.97\textwidth,trim=0 0 0 0,clip]{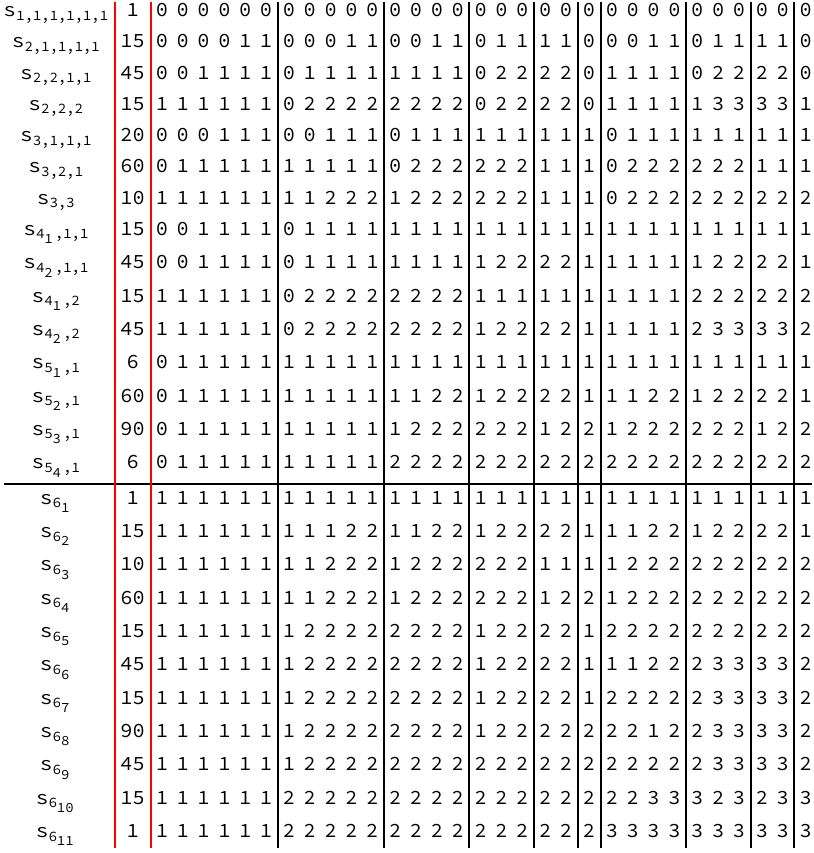}
    \put (20,100.8) {\scriptsize{A\,B\,C\,D\,E\,F}}
    \put (34,100.8) {\scriptsize{A\_}}
    \put (46,100.8) {\scriptsize{B\_}}
    \put (56,100.8) {\scriptsize{C\_}}
    \put (63,100.8) {\scriptsize{D\_}}
    \put (68,100.8) {\scriptsize{EF}}
    \put (71.5,100.8) {\scriptsize{AB\_}}
    \put (81,100.8) {\scriptsize{AC\_}}
    \put (88,100.8) {\scriptsize{AD\_}}
    \put (94,100.8) {\scriptsize{AEF}}
\end{overpic}
\caption{Representatives of the $26$ equivalence classes, up to qubit exchange symmetry, for all $6$-qubit stabilizer-state entropy vectors.}
\label{fig:SixQEntanglementTable}
\end{figure}
\newpage
\begin{figure}
\begin{center}
		\begin{overpic}[width = 1.\textwidth]{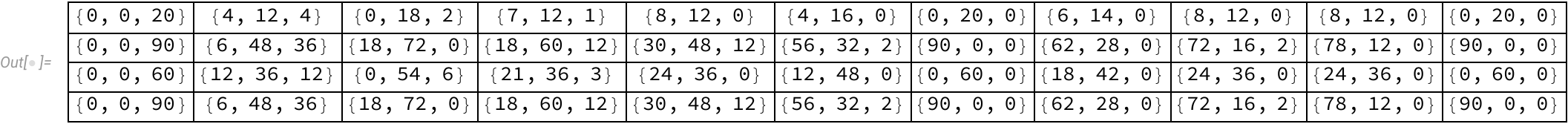}
        \put (.7,-2) {\scriptsize{$\{Y,S,F\}$}}
        \put (3,10) {\scriptsize{$S_{6_1}$}}
        \put (12,10) {\scriptsize{$S_{6_2}$}}
        \put (21,10) {\scriptsize{$S_{6_3}$}}
        \put (30.5,10) {\scriptsize{$S_{6_4}$}}
        \put (40,10) {\scriptsize{$S_{6_5}$}}
        \put (50,10) {\scriptsize{$S_{6_6}$}}
        \put (59,10) {\scriptsize{$S_{6_7}$}}
        \put (67,10) {\scriptsize{$S_{6_8}$}}
        \put (76,10) {\scriptsize{$S_{6_9}$}}
        \put (85,10) {\scriptsize{$S_{6_{10}}$}}
        \put (94,10) {\scriptsize{$S_{6_{11}}$}}
        \put (-8,12) {\scriptsize{MMI \textbackslash Vec.}}
        \put (-6,6.5) {\scriptsize{(1,1,1)}}
        \put (-6,4.5) {\scriptsize{(2,1,1)}}
        \put (-6,2.5) {\scriptsize{(3,1,1)}}
        \put (-6,.5) {\scriptsize{(2,2,1)}}
        \end{overpic}
    \caption{Table showing MMI satisfaction, saturation, and failure counts for all $6$-qubit entanglement structures given in Table \ref{fig:SixQEntanglementTable}. Each row is determined by the partition that defines the MMI lift, e.g. (3,1,1). Each cell contains the triple $\{Y,S,F\}$, where $Y$ counts the number of instances satisfied, $S$ the number saturated, and $F$ the number failed.}
    \label{SixQMMITable}
\end{center}
\end{figure}

\subsubsection{Seven Qubits}

At $7$ qubits there are $10773$ stabilizer state entropy vectors, which arrange into $59$ equivalence symmetry classes up to qubit exchange. There are $33$ lifts of lower-qubit entanglement and $26$ instances of $7$-qubit entanglement. The sizes of each $7$-qubit symmetry orbit are
\begin{equation}
\begin{split}
1, &1, 7, 7, 21, 21, 21, 21, 21, 21, 21, 30, 35, 35, 35, 35, 70, 70, \\
&70, 105, 105, 105, 105, 105, 105, 105, 105, 105, 105, 105, 105, 105, \\
&105, 105, 105, 105, 105, 210, 210, 210, 210, 315, 315, 315, 315, 315, \\
&315, 315, 315, 315, 315, 360, 420, 420, 630, 630, 630, 630, 630.\\
\end{split}
\end{equation}

There are $26$ new entanglement structures among the $7$-qubit stabilizer states. Each is shown in Figure \ref{SevenQNewEntanglement}, listed alongside the size of their respective symmetry orbits under qubit exchange.
\begin{figure}
\begin{center}
		\begin{overpic}[width = 1.\textwidth]{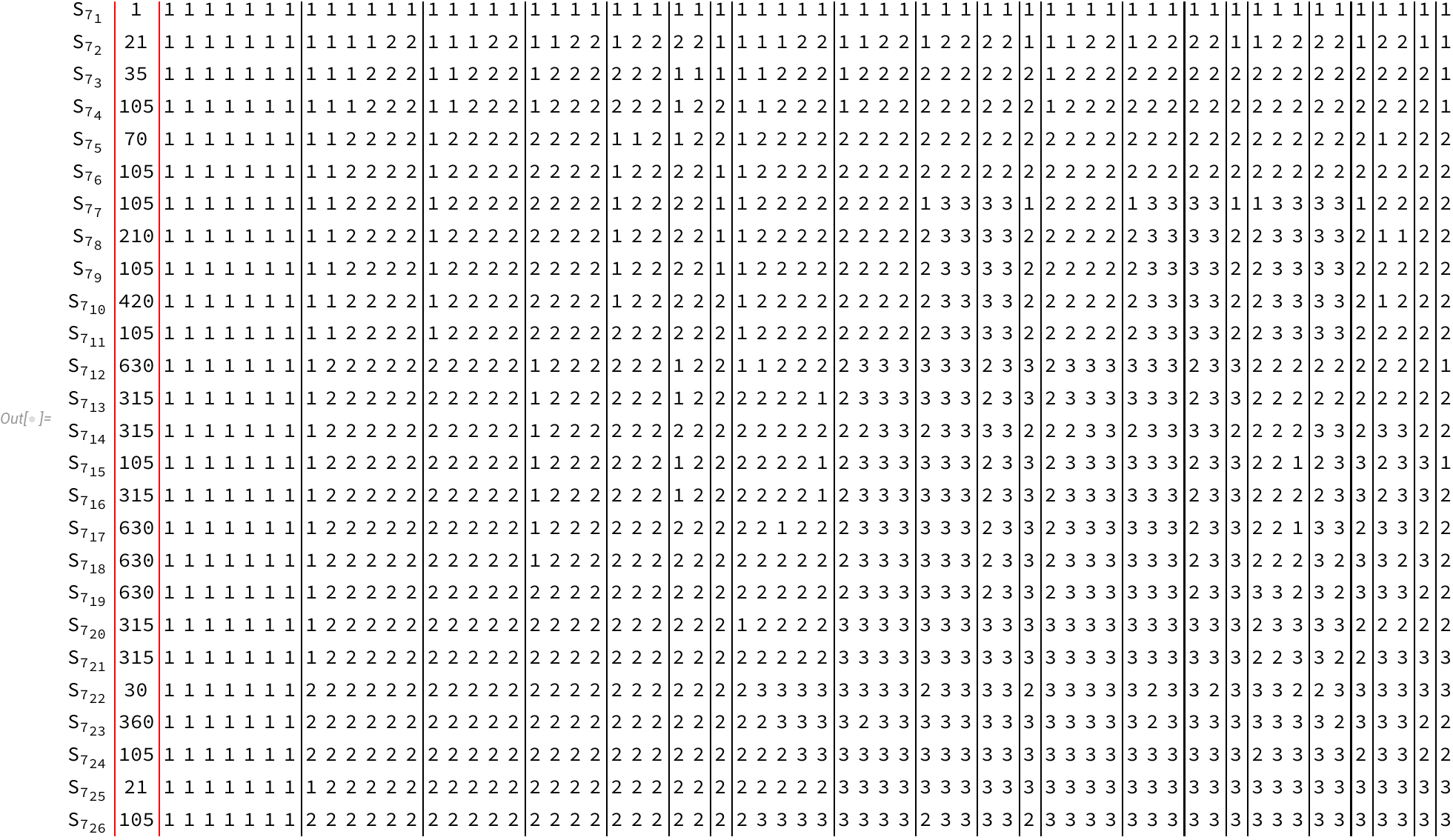}
        \put (7,61) {\tiny{A\,B\,C\,D\,E\,F\,G}}
		\put (19,61) {\tiny{A\_}}
        \put (28,61) {\tiny{B\_}}
        \put (34,61) {\tiny{C\_}}
        \put (40,61) {\tiny{D\_}}
        \put (44,61) {\tiny{E\_}}
        \put (49,61) {\tiny{AB\_}}
        \put (56,61) {\tiny{AC\_}}
        \put (61.5,61) {\tiny{AD\_}}
        \put (66,61) {\tiny{AE\_}}
        \put (71,61) {\tiny{BC\_}}
        \put (76.5,61) {\tiny{BD\_}}
        \put (80.5,61) {\tiny{BE\_}}
        \put (85.5,61) {\tiny{CD\_}}
        \put (89.5,61) {\tiny{CE\_}}
        \put (94,61) {\tiny{DE\_}}
        \put (99,61) {\tiny{EFG}}
        \put (-2,12.5) {\tiny{$\rightarrow$}}
        \end{overpic}
    \caption{Novel entanglement structures first appearing at $7$ qubits. Each entropy vector represents a symmetry orbit, under qubit exchange, the size of which is given in the second column.}
    \label{SevenQNewEntanglement}
\end{center}
\end{figure}

\begin{figure}
\begin{center}
		\begin{overpic}[width = 1.\textwidth]{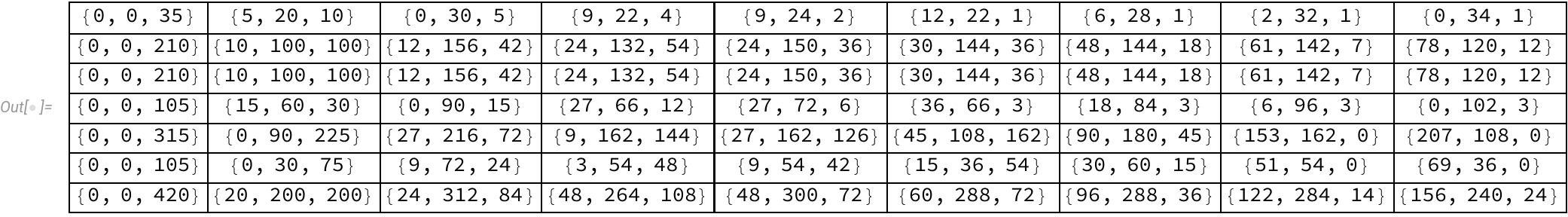}
        \put (.7,-2) {\scriptsize{$\{Y,S,F\}$}}
        \put (3,16) {\scriptsize{$S_{7_1}$}}
        \put (14,16) {\scriptsize{$S_{7_2}$}}
        \put (25,16) {\scriptsize{$S_{7_3}$}}
        \put (35.5,16) {\scriptsize{$S_{7_4}$}}
        \put (47,16) {\scriptsize{$S_{7_5}$}}
        \put (59,16) {\scriptsize{$S_{7_6}$}}
        \put (70,16) {\scriptsize{$S_{7_7}$}}
        \put (81,16) {\scriptsize{$S_{7_8}$}}
        \put (93,16) {\scriptsize{$S_{7_9}$}}
        \put (-9,18) {\scriptsize{MMI \textbackslash Vec.}}
        \put (-6,12.7) {\scriptsize{(1,1,1)}}
        \put (-6,10.7) {\scriptsize{(2,1,1)}}
        \put (-6,8.7) {\scriptsize{(3,1,1)}}
        \put (-6,6.7) {\scriptsize{(4,1,1)}}
        \put (-6,4.7) {\scriptsize{(2,2,1)}}
        \put (-6,2.7) {\scriptsize{(2,2,2)}}
        \put (-6,.7) {\scriptsize{(3,2,1)}}
        \end{overpic}    
    \label{SevenQMMITable1}
\end{center}
\end{figure}

\begin{figure}
\begin{center}
		\begin{overpic}[width = 1.\textwidth]{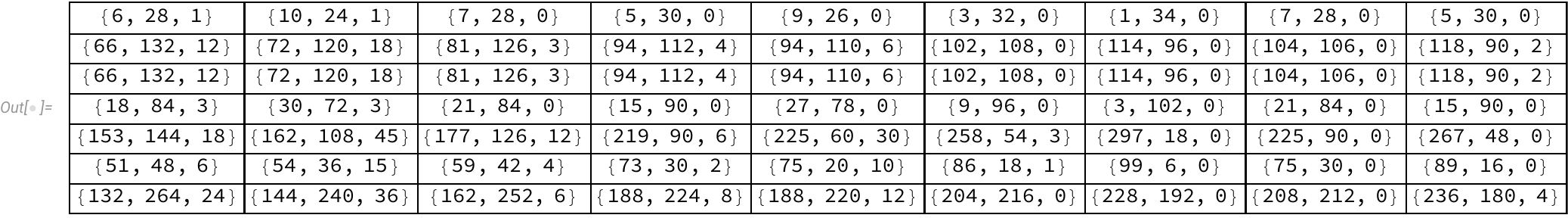}
        \put (2,-2) {\scriptsize{$\{Y,S,F\}$}}
        \put (4,16) {\scriptsize{$S_{7_{10}}$}}
        \put (15,16) {\scriptsize{$S_{7_{11}}$}}
        \put (27,16) {\scriptsize{$S_{7_{12}}$}}
        \put (37.5,16) {\scriptsize{$S_{7_{13}}$}}
        \put (49,16) {\scriptsize{$S_{7_{14}}$}}
        \put (60,16) {\scriptsize{$S_{7_{15}}$}}
        \put (71,16) {\scriptsize{$S_{7_{16}}$}}
        \put (81,16) {\scriptsize{$S_{7_{17}}$}}
        \put (92.5,16) {\scriptsize{$S_{7_{18}}$}}
        \put (-9,18) {\scriptsize{MMI \textbackslash Vec.}}
        \put (-6,12.7) {\scriptsize{(1,1,1)}}
        \put (-6,10.7) {\scriptsize{(2,1,1)}}
        \put (-6,8.7) {\scriptsize{(3,1,1)}}
        \put (-6,6.7) {\scriptsize{(4,1,1)}}
        \put (-6,4.7) {\scriptsize{(2,2,1)}}
        \put (-6,2.7) {\scriptsize{(2,2,2)}}
        \put (-6,.7) {\scriptsize{(3,2,1)}}
        \end{overpic}
    \label{SevenQMMITable2}
\end{center}
\end{figure}

\begin{figure}
\begin{center}
		\begin{overpic}[width = 1.\textwidth]{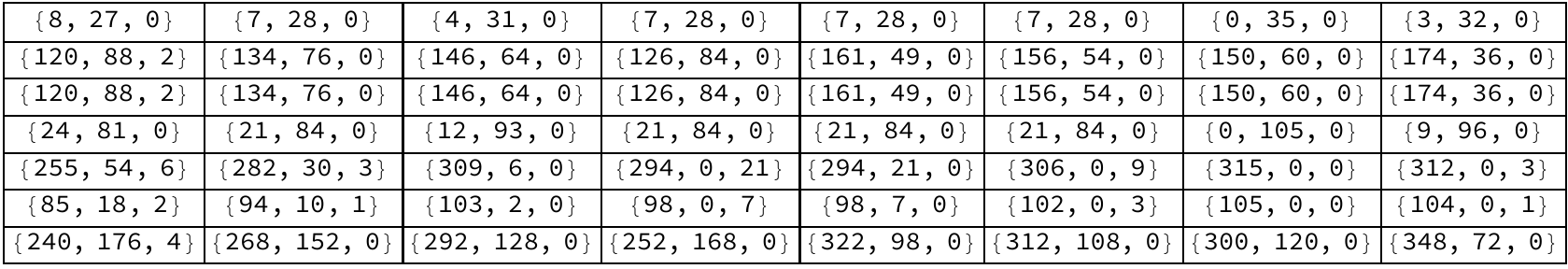}
        \put (.7,-2) {\scriptsize{$\{Y,S,F\}$}}
        \put (5,18.5) {\scriptsize{$S_{7_{19}}$}}
        \put (17,18.5) {\scriptsize{$S_{7_{20}}$}}
        \put (30,18.5) {\scriptsize{$S_{7_{21}}$}}
        \put (43,18.5) {\scriptsize{$S_{7_{22}}$}}
        \put (55,18.5) {\scriptsize{$S_{7_{23}}$}}
        \put (67,18.5) {\scriptsize{$S_{7_{24}}$}}
        \put (80,18.5) {\scriptsize{$S_{7_{25}}$}}
        \put (92,18.5) {\scriptsize{$S_{7_{26}}$}}
        \put (-9,20) {\scriptsize{MMI \textbackslash Vec.}}
        \put (-6,15.4) {\scriptsize{(1,1,1)}}
        \put (-6,12.8) {\scriptsize{(2,1,1)}}
        \put (-6,10.5) {\scriptsize{(3,1,1)}}
        \put (-6,8.2) {\scriptsize{(4,1,1)}}
        \put (-6,5.8) {\scriptsize{(2,2,1)}}
        \put (-6,3.4) {\scriptsize{(2,2,2)}}
        \put (-6,1.) {\scriptsize{(3,2,1)}}
        \end{overpic}
    \caption{MMI table for $7$-qubit entanglement structures given from Table \ref{SevenQNewEntanglement}. Each row is labeled by the partition that defines that MMI lift, e.g. (3,2,1). The value $Y$ counts the number of satisfied instances, $S$ the number saturated, and $F$ the number failed.}
    \label{SevenQMMITable3}
\end{center}
\end{figure}
%


\subsection{MMI Failure}

We present data on MMI failure in stabilizer states through $6$ qubits. We obtain a tableau representative for each stabilizer state, ignoring global phases, and compute the entropy vectors. The first three columns of Table~\ref{tab:MMIStateCountTable} give the number of tableaux that saturate every MMI instance, those that do not fail any instance and satisfy at least one, and those that fail at least one MMI instance, respectively. The last column gives the number of entropy vectors at each qubit number that correspond to MMI-failing states. We account for global phases by multiplying the obtained counts by $2^{n}$. All data is available and reproducible from \textit{Mathematica} notebooks in \cite{fuentes_munizzi2025mmi,githubStab}

\begin{table}[h]
    \centering
    \resizebox{\textwidth}{!}{
    \begin{tabular}{|c||c|c|c||c|}
    \hline
    & \multicolumn{3}{|c|}{MMI} & \\
    \hline
    Qubit \#& Satur. & Satis. & Fail & Ent. Vec. Fail\\
    \hline
    \hline
    3  & 1,080 (100\%) & 0 (0\%) & 0 (0\%) & (0\%)\\
    \hline
    4  & 18,576 (50.6\%) & 15,552 (42.36\%) & 2,592 (7.06\%) & 1 (5.55\%)\\
    \hline
    5  & 370,656 (15.29\%) & 1,648,512 (68\%) & 404,352 (16.70\%) & 16 (17.20\%)\\
    \hline
    6  & 9,118,656 (2.89\%) & 175,115,520 (55.59\%) & 130,823,424 (41.52\%) & 287 (37.76\%)\\
    \hline
    7  & - & - & - & 6436 (59.74\%) \\
    \hline
    \end{tabular}}
\caption{State and vector count for totally saturate, at least satisfy, and (at least $1$ instance) fail MMI.}
\label{tab:MMIStateCountTable}
\end{table}

\newpage
\bibliographystyle{JHEP}
\bibliography{bib}

\providecommand{\href}[2]{#2}\begingroup\raggedright\begin{thebibliography}{10}

\bibitem{WalterGrossEisert2016}
M.~Walter, D.~Gross and J.~Eisert, \emph{Multipartite entanglement}, {\emph{arXiv preprint arXiv:1612.02437} (2016) }.

\bibitem{VidalWerner2002}
G.~Vidal and R.F.~Werner, \emph{Computable measure of entanglement}, \href{https://doi.org/10.1103/PhysRevA.65.032314}{\emph{Phys. Rev. A} {\bfseries 65} (2002) 032314}.

\bibitem{Horodecki2009}
R.~Horodecki, P.~Horodecki, M.~Horodecki and K.~Horodecki, \emph{Quantum entanglement}, \href{https://doi.org/10.1103/RevModPhys.81.865}{\emph{Rev. Mod. Phys.} {\bfseries 81} (2009) 865}.

\bibitem{Coffman2000}
V.~Coffman, J.~Kundu and W.K.~Wootters, \emph{Distributed entanglement}, \href{https://doi.org/10.1103/PhysRevA.61.052306}{\emph{Phys. Rev. A} {\bfseries 61} (2000) 052306}.

\bibitem{Linden2002}
N.~Linden, S.~Popescu, J.A.~Smolin and F.~Verstraete, \emph{Entanglement of formation and distillable entanglement of multipartite mixed states}, \href{https://doi.org/10.1103/PhysRevLett.89.207901}{\emph{Phys. Rev. Lett.} {\bfseries 89} (2002) 207901}.

\bibitem{LiebRuskai1973}
E.H.~Lieb and M.B.~Ruskai, \emph{Proof of the strong subadditivity of quantum-mechanical entropy}, \href{https://doi.org/10.1063/1.1666274}{\emph{Journal of Mathematical Physics} {\bfseries 14} (1973) 1938}.

\bibitem{Lieb1973a}
E.H.~Lieb, \emph{Convex trace functions and the wigner–yanase–dyson conjecture}, \href{https://doi.org/10.1016/0001-8708(73)90011-X}{\emph{Advances in Mathematics} {\bfseries 11} (1973) 267}.

\bibitem{Lieb1973b}
E.A.~Carlen, R.L.~Frank and E.H.~Lieb, \emph{Some operator and trace function convexity theorems},  \href{https://arxiv.org/abs/1409.0564}{{\ttfamily 1409.0564}}.

\bibitem{Linden2013}
N.~Linden, F.~Matúš, M.B.~Ruskai and A.~Winter, \emph{The quantum entropy cone of stabiliser states}, \href{https://doi.org/10.4230/LIPIcs.TQC.2013.270}{\emph{in: Proc. 8th TQC Guelph, LIPICS vol. 22, pp. 270-284, 2013} (2013) } [\href{https://arxiv.org/abs/1302.5453}{{\ttfamily 1302.5453}}].

\bibitem{ZhangYeung1997}
Z.~Zhang and R.~Yeung, \emph{A non-shannon-type conditional inequality of information quantities}, \href{https://doi.org/10.1109/18.641561}{\emph{IEEE Transactions on Information Theory} {\bfseries 43} (1997) 1982}.

\bibitem{Lashkari:2014kda}
N.~Lashkari, C.~Rabideau, P.~Sabella-Garnier and M.~Van~Raamsdonk, \emph{{Inviolable energy conditions from entanglement inequalities}}, \href{https://doi.org/10.1007/JHEP06(2015)067}{\emph{JHEP} {\bfseries 06} (2015) 067} [\href{https://arxiv.org/abs/1412.3514}{{\ttfamily 1412.3514}}].

\bibitem{ZhangYeung1998}
Z.~Zhang and R.~Yeung, \emph{On characterization of entropy function via information inequalities}, \href{https://doi.org/10.1109/18.681320}{\emph{IEEE Transactions on Information Theory} {\bfseries 44} (1998) 1440–1452}.

\bibitem{Matus2007}
F.~Mat{\'u}s, \emph{Infinitely many information inequalities}, {\emph{2007 IEEE International Symposium on Information Theory} (2007) 41}.

\bibitem{DoughertyFreilingZeger2007}
R.~Dougherty, C.~Freiling and K.~Zeger, \emph{Networks, matroids, and non-shannon information inequalities}, \href{https://doi.org/10.1109/TIT.2007.896862}{\emph{IEEE Transactions on Information Theory} {\bfseries 53} (2007) 1949}.

\bibitem{LindenWinter2005}
N.~Linden and A.~Winter, \emph{A new inequality for the von neumann entropy}, \href{https://doi.org/10.1007/s00220-005-1361-2}{\emph{Communications in Mathematical Physics} {\bfseries 259} (2005) 129–138}.

\bibitem{Hern_ndez_Cuenca_2024}
S.~Hernández-Cuenca, V.E.~Hubeny and H.F.~Jia, \emph{Holographic entropy inequalities and multipartite entanglement}, \href{https://doi.org/10.1007/jhep08(2024)238}{\emph{Journal of High Energy Physics} {\bfseries 2024} (2024) }.

\bibitem{Hayden:2013}
P.~Hayden, M.~Headrick and A.~Maloney, \emph{Holographic mutual information is monogamous}, \href{https://doi.org/10.1103/PhysRevD.87.046003}{\emph{Phys. Rev. D} {\bfseries 87} (2013) 046003} [\href{https://arxiv.org/abs/1107.2940}{{\ttfamily 1107.2940}}].

\bibitem{Maldacena:1997re}
J.M.~Maldacena, \emph{{The Large $N$ limit of superconformal field theories and supergravity}}, \href{https://doi.org/10.4310/ATMP.1998.v2.n2.a1}{\emph{Adv. Theor. Math. Phys.} {\bfseries 2} (1998) 231} [\href{https://arxiv.org/abs/hep-th/9711200}{{\ttfamily hep-th/9711200}}].

\bibitem{Bao_2015}
N.~Bao, S.~Nezami, H.~Ooguri, B.~Stoica, J.~Sully and M.~Walter, \emph{The holographic entropy cone}, \href{https://doi.org/10.1007/JHEP09(2015)130}{\emph{JHEP} {\bfseries 2015} (2015) 130}.

\bibitem{Bao:2020mqq}
N.~Bao, N.~Cheng, S.~Hern{\'a}ndez-Cuenca and V.P.~Su, \emph{{A Gap Between the Hypergraph and Stabilizer Entropy Cones}},  \href{https://arxiv.org/abs/2006.16292}{{\ttfamily 2006.16292}}.

\bibitem{Keeler:2022ajf}
C.~Keeler, W.~Munizzi and J.~Pollack, \emph{{Entropic lens on stabilizer states}}, \href{https://doi.org/10.1103/PhysRevA.106.062418}{\emph{Phys. Rev. A} {\bfseries 106} (2022) 062418} [\href{https://arxiv.org/abs/2204.07593}{{\ttfamily 2204.07593}}].

\bibitem{Hayden_2013}
P.~Hayden, M.~Headrick and A.~Maloney, \emph{Holographic mutual information is monogamous}, \href{https://doi.org/10.1103/PhysRevD.87.046003}{\emph{Phys.\ Rev.\ D} {\bfseries 87} (2013) 046003}.

\bibitem{Hein_2004}
M.~Hein, J.~Eisert and H.J.~Briegel, \emph{Multiparty entanglement in graph states}, \href{https://doi.org/10.1103/PhysRevA.69.062311}{\emph{Phys.\ Rev.\ A} {\bfseries 69} (2004) 062311}.

\bibitem{HeinDurEisertBriegel2006}
M.~Hein, W.~Dür, J.~Eisert, R.~Raussendorf, M.~Van~den Nest and H.J.~Briegel, \emph{Entanglement in graph states and its applications},  \href{https://arxiv.org/abs/quant-ph/0602096}{{\ttfamily quant-ph/0602096}}.

\bibitem{gottesman1998heisenbergrepresentationquantumcomputers}
D.~Gottesman, \emph{The heisenberg representation of quantum computers},  1998.

\bibitem{Aaronson_2004}
S.~Aaronson and D.~Gottesman, \emph{Improved simulation of stabilizer circuits}, \href{https://doi.org/10.1103/PhysRevA.70.052328}{\emph{Phys.\ Rev.\ A} {\bfseries 70} (2004) 052328}.

\bibitem{VanDenNestDehaeneDeMoor2004}
M.~Van~den Nest, J.~Dehaene and B.~De~Moor, \emph{Graphical description of the action of local clifford transformations on graph states}, \href{https://doi.org/10.1103/PhysRevA.69.022316}{\emph{Physical Review A} {\bfseries 69} (2004) 022316} [\href{https://arxiv.org/abs/quant-ph/0308151}{{\ttfamily quant-ph/0308151}}].

\bibitem{BriegelRaussendorf2001}
H.J.~Briegel and R.~Raussendorf, \emph{Persistent entanglement in arrays of interacting particles}, \href{https://doi.org/10.1103/PhysRevLett.86.910}{\emph{Physical Review Letters} {\bfseries 86} (2001) 910} [\href{https://arxiv.org/abs/quant-ph/0004051}{{\ttfamily quant-ph/0004051}}].

\bibitem{RaussendorfBriegel2001}
R.~Raussendorf and H.J.~Briegel, \emph{A one-way quantum computer}, \href{https://doi.org/10.1103/PhysRevLett.86.5188}{\emph{Physical Review Letters} {\bfseries 86} (2001) 5188} [\href{https://arxiv.org/abs/quant-ph/0010033}{{\ttfamily quant-ph/0010033}}].

\bibitem{fattal2004entanglementstabilizerformalism}
D.~Fattal, T.S.~Cubitt, Y.~Yamamoto, S.~Bravyi and I.L.~Chuang, \emph{Entanglement in the stabilizer formalism},  2004.

\bibitem{Lovitz2022}
B.~Lovitz and N.~Johnston, \emph{Entangled subspaces and generic local state discrimination with pre-shared entanglement}, \href{https://doi.org/10.22331/q-2022-07-07-760}{\emph{Quantum} {\bfseries 6} (2022) 760}.

\bibitem{Munizzi:2023ihc}
W.~Munizzi and H.J.~Schnitzer, \emph{{Entropy Cones and Entanglement Evolution for Dicke States}},  \href{https://arxiv.org/abs/2306.13146}{{\ttfamily 2306.13146}}.

\bibitem{Keeler:2023xcx}
C.~Keeler, W.~Munizzi and J.~Pollack, \emph{{Clifford orbits from cayley graph quotients}}, \href{https://doi.org/10.26421/qic24.1-2-1}{\emph{Quant. Inf. Comput.} {\bfseries 24} (2024) 1} [\href{https://arxiv.org/abs/2306.01043}{{\ttfamily 2306.01043}}].

\bibitem{burchardt2025foliagepartitioneasytocomputelcinvariant}
A.~Burchardt and F.~Hahn, \emph{The foliage partition: An easy-to-compute lc-invariant for graph states},  2025.

\bibitem{fuentes_munizzi2025mmi}
J.~Fuentes and W.~Munizzi, ``Mmi-failure-for-graph-states.'' \url{https://github.com/jesus99222/MMI-Failure-For-Graph-States}, 2025.

\bibitem{Danielsen_2006}
L.E.~Danielsen and M.G.~Parker, \emph{On the classification of all self-dual additive codes over gf(4) of length up to 12}, \href{https://doi.org/10.1016/j.jcta.2005.12.004}{\emph{Journal of Combinatorial Theory, Series A} {\bfseries 113} (2006) 1351–1367}.

\bibitem{Czech:2022fzb}
B.~Czech and Y.~Wang, \emph{{A holographic inequality for N = 7 regions}}, \href{https://doi.org/10.1007/JHEP01(2023)101}{\emph{JHEP} {\bfseries 01} (2023) 101} [\href{https://arxiv.org/abs/2209.10547}{{\ttfamily 2209.10547}}].

\bibitem{He:2020xuo}
T.~He, V.E.~Hubeny and M.~Rangamani, \emph{{Superbalance of Holographic Entropy Inequalities}}, \href{https://doi.org/10.1007/JHEP07(2020)245}{\emph{JHEP} {\bfseries 07} (2020) 245} [\href{https://arxiv.org/abs/2002.04558}{{\ttfamily 2002.04558}}].

\bibitem{Keeler_2025}
C.~Keeler, W.~Munizzi and J.~Pollack, \emph{Bounding entanglement entropy with clifford double cosets}, \href{https://doi.org/10.1103/hfhg-2z68}{\emph{Physical Review A} {\bfseries 112} (2025) }.

\bibitem{Salton:2016qpp}
G.~Salton, B.~Swingle and M.~Walter, \emph{{Entanglement from Topology in Chern-Simons Theory}}, \href{https://doi.org/10.1103/PhysRevD.95.105007}{\emph{Phys. Rev. D} {\bfseries 95} (2017) 105007} [\href{https://arxiv.org/abs/1611.01516}{{\ttfamily 1611.01516}}].

\bibitem{Munizzi:2025suf}
W.~Munizzi and H.J.~Schnitzer, \emph{{Topological Preparation of Non-Stabilizer States and Clifford Evolution in $SU(2)_1$ Chern-Simons Theory}},  \href{https://arxiv.org/abs/2510.15067}{{\ttfamily 2510.15067}}.

\bibitem{Pollack:2024bnj}
J.~Pollack and D.~VanAllen, \emph{{Towards Entropic Constraints on Quantum Speedups}},  \href{https://arxiv.org/abs/2411.03439}{{\ttfamily 2411.03439}}.

\bibitem{Leone:2021rzd}
L.~Leone, S.F.E.~Oliviero and A.~Hamma, \emph{{Stabilizer R\'enyi Entropy}}, \href{https://doi.org/10.1103/PhysRevLett.128.050402}{\emph{Phys. Rev. Lett.} {\bfseries 128} (2022) 050402} [\href{https://arxiv.org/abs/2106.12587}{{\ttfamily 2106.12587}}].

\bibitem{Bao:2022mkc}
N.~Bao, C.~Cao and V.P.~Su, \emph{{Magic state distillation from entangled states}}, \href{https://doi.org/10.1103/PhysRevA.105.022602}{\emph{Phys. Rev. A} {\bfseries 105} (2022) 022602}.

\bibitem{Dallas:2025zxs}
E.~Dallas and P.~Zanardi, \emph{{Nonlocal Magic Generation and Information Scrambling in Noisy Clifford Circuits}},  \href{https://arxiv.org/abs/2509.20566}{{\ttfamily 2509.20566}}.

\bibitem{DeBoer:2018kvc}
J.~De~Boer, J.~J\"arvel\"a and E.~Keski-Vakkuri, \emph{{Aspects of capacity of entanglement}}, \href{https://doi.org/10.1103/PhysRevD.99.066012}{\emph{Phys. Rev. D} {\bfseries 99} (2019) 066012} [\href{https://arxiv.org/abs/1807.07357}{{\ttfamily 1807.07357}}].

\bibitem{Cao:2024nrx}
C.~Cao, G.~Cheng, A.~Hamma, L.~Leone, W.~Munizzi and S.F.E.~Oliviero, \emph{{Gravitational back-reaction is magical}},  \href{https://arxiv.org/abs/2403.07056}{{\ttfamily 2403.07056}}.

\bibitem{Khumalo:2025xfv}
N.~Khumalo, A.~Mehta, W.~Munizzi and P.~Narang, \emph{{Navigating the Quantum Resource Landscape of Entropy Vector Space Using Machine Learning and Optimization}},  \href{https://arxiv.org/abs/2511.16724}{{\ttfamily 2511.16724}}.

\bibitem{Andreadakis:2025mfw}
F.~Andreadakis and P.~Zanardi, \emph{{An Exact Link between Nonlocal Magic and Operator Entanglement}},  \href{https://arxiv.org/abs/2504.09360}{{\ttfamily 2504.09360}}.

\bibitem{Fon-Der-Flaass1996}
D.G.~Fon-Der-Flaass, \emph{Local complementations of simple and directed graphs},  in \emph{Discrete Analysis and Operations Research}, A.D.~Korshunov, ed., (Dordrecht), pp.~15--34, Springer Netherlands (1996), \href{https://doi.org/10.1007/978-94-009-1606-7_3}{DOI}.

\bibitem{KIM202454}
D.~Kim and S.~il~Oum, \emph{Vertex-minors of graphs: A survey}, \href{https://doi.org/https://doi.org/10.1016/j.dam.2024.03.011}{\emph{Discrete Applied Mathematics} {\bfseries 351} (2024) 54}.

\bibitem{Hahn_2019}
F.~Hahn, A.~Pappa and J.~Eisert, \emph{Quantum network routing and local complementation}, \href{https://doi.org/10.1038/s41534-019-0191-6}{\emph{npj Quantum Information} {\bfseries 5} (2019) }.

\bibitem{Mannalath:2022tos}
V.~Mannalath and A.~Pathak, \emph{{Multiparty entanglement routing in quantum networks}}, \href{https://doi.org/10.1103/PhysRevA.108.062614}{\emph{Phys. Rev. A} {\bfseries 108} (2023) 062614} [\href{https://arxiv.org/abs/2211.06690}{{\ttfamily 2211.06690}}].

\bibitem{Sen:2023nmf}
A.~Sen, K.~Goodenough and D.~Towsley, \emph{{Multipartite Entanglement in Quantum Networks Using Subgraph Complementations}},  in \emph{{2023 International Conference on Quantum Computing and Engineering}}, 8, 2023, \href{https://doi.org/10.1109/QCE57702.2023.10229}{DOI} [\href{https://arxiv.org/abs/2308.13700}{{\ttfamily 2308.13700}}].

\bibitem{Azuma:2022nuy}
K.~Azuma, S.E.~Economou, D.~Elkouss, P.~Hilaire, L.~Jiang, H.-K.~Lo et~al., \emph{{Quantum repeaters: From quantum networks to the quantum internet}}, \href{https://doi.org/10.1103/RevModPhys.95.045006}{\emph{Rev. Mod. Phys.} {\bfseries 95} (2023) 045006} [\href{https://arxiv.org/abs/2212.10820}{{\ttfamily 2212.10820}}].

\bibitem{Negrin:2024tyj}
R.~Negrin, N.~Dirnegger, W.~Munizzi, J.~Talukdar and P.~Narang, \emph{{Efficient multiparty entanglement distribution with DODAG-X protocol}}, \href{https://doi.org/10.1117/12.3041486}{\emph{Proc. SPIE Int. Soc. Opt. Eng.} {\bfseries 13391} (2025) 1339109} [\href{https://arxiv.org/abs/2408.07118}{{\ttfamily 2408.07118}}].

\bibitem{Hughes:2025jwa}
E.~Hughes, W.~Munizzi and P.~Narang, \emph{{Improved Simulation of Asynchronous Entanglement Distribution in Noisy Quantum Networks}},  \href{https://arxiv.org/abs/2507.22992}{{\ttfamily 2507.22992}}.

\bibitem{Shor_2000}
P.W.~Shor and J.~Preskill, \emph{Simple proof of security of the bb84 quantum key distribution protocol}, \href{https://doi.org/10.1103/physrevlett.85.441}{\emph{Physical Review Letters} {\bfseries 85} (2000) 441–444}.

\bibitem{Gottesman:2002gg}
D.~Gottesman, H.-K.~Lo, N.~Lutkenhaus and J.~Preskill, \emph{{Security of quantum key distribution with imperfect devices}}, \href{https://doi.org/10.26421/QIC4.5-1}{\emph{Quant. Inf. Comput.} {\bfseries 4} (2004) 325} [\href{https://arxiv.org/abs/quant-ph/0212066}{{\ttfamily quant-ph/0212066}}].

\bibitem{Bennett:2014rmv}
C.H.~Bennett and G.~Brassard, \emph{{Quantum cryptography: Public key distribution and coin tossing}}, \href{https://doi.org/10.1016/j.tcs.2014.05.025}{\emph{Theor. Comput. Sci.} {\bfseries 560} (2014) 7} [\href{https://arxiv.org/abs/2003.06557}{{\ttfamily 2003.06557}}].

\bibitem{Mariani:2025qqu}
L.~Mariani, R.~Yehia, C.~Pascual-Garc{\'\i}a, F.~Centrone, J.~van~der Kolk, M.{\'A}.~Serrano et~al., \emph{{Quantum Key Distribution over Complex Networks}},  \href{https://arxiv.org/abs/2504.02372}{{\ttfamily 2504.02372}}.

\bibitem{polishchuk2005quadratic}
A.~Polishchuk and L.~Positselski, \emph{Quadratic Algebras}, vol.~37 of \emph{University Lecture Series}, American Mathematical Society, Providence, RI (2005).

\bibitem{githubStab}
W.~Munizzi, ``{Stabilizer-States}.'' \url{https://github.com/WMunizzi/Stabilizer-States}, 4, 2022.
\newblock 10.13140/RG.2.2.35482.41928.

\end{thebibliography}\endgroup

\end{document}